\def\jamesmode{1}
\def\showauthornotes{0}
\def\showtableofcontents{1}
\def\showkeys{0}
\def\showdraftbox{0}
\newtheorem{theorem}{Theorem}[section]
\newtheorem{claim}[theorem]{Claim}
\newtheorem{lemma}[theorem]{Lemma}
\newtheorem{corollary}[theorem]{Corollary}
\newtheorem{conjecture}[theorem]{Conjecture}
\newtheorem{observation}[theorem]{Observation}
\newtheorem{fact}[theorem]{Fact}
\newtheorem{definition}[theorem]{Definition}
\newtheorem{problem}[theorem]{Problem}
\newtheorem{remark}[theorem]{Remark}
\let\mathbb\varmathbb
\newcommand{\Sref}[1]{\hyperref[#1]{\S\ref*{#1}}}
\let\nfrac=\nicefrac
\newcommand{\Authornote}[2]{{\sf\small\color{red}{[#1: #2]}}}
\newcommand{\Authorcomment}[2]{{\sf \small\color{gray}{[#1: #2]}}}
\newcommand{\Authorfnote}[2]{\footnote{\color{red}{#1: #2}}}
\newcommand{\Authornote}[2]{}
\newcommand{\Authorcomment}[2]{}
\newcommand{\Authorfnote}[2]{}
\newenvironment{mybox}
{\center \noindent\begin{boxedminipage}{1.0\linewidth}}
{\end{boxedminipage}
\noindent
}
\newcommand{\abs}[1]{\lvert#1\rvert}
\newcommand{\set}[1]{\{#1\}}
\newcommand{\Set}[1]{\left\{#1\right\}}
\newcommand{\norm}[1]{\lVert#1\rVert}
\newcommand{\iprod}[1]{\langle#1\rangle}
\newcommand{\Esymb}{\mathbb{E}}
\newcommand{\Psymb}{\mathbb{P}}
\DeclareMathOperator*{\E}{\Esymb}
\DeclareMathOperator*{\Var}{\mathbf{Var}}
\DeclareMathOperator*{\Cov}{\mathbf{Cov}}
\DeclareMathOperator*{\ProbOp}{\Psymb}
\renewcommand{\Pr}{\ProbOp}
\newcommand{\suchthat}{\;\middle\vert\;}
\newcommand{\because}[1]{\text{(because #1)}}
\renewcommand{\because}[1]{\text{(because #1)}}
\newcommand{\maximize}{\mathop{\textrm{maximize}}}
\newcommand{\subjectto}{\mathop{\textrm{subject to}}}
\newcommand{\bits}{\{0,1\}}
\newcommand{\e}{\epsilon}
\newcommand{\sse}{\subseteq}
\newcommand{\vbig}{\vphantom{\bigoplus}}
\newcommand{\defeq}{\stackrel{\mathrm{def}}=}     
\renewcommand{\vec}[1]{{\bm{#1}}}
\newcommand{\mper}{\,.}
\newcommand{\mcom}{\,,}
\newcommand\bdot\bullet
\newcommand{\Ind}{\mathbb I}
\newcommand{\Ind}{\mathds 1}
\newcommand{\etal}{et al.\xspace}
\DeclareMathOperator{\Inf}{Inf}
\DeclareMathOperator{\val}{val}
\DeclareMathOperator{\poly}{poly}
\DeclareMathOperator{\supp}{supp} 
\DeclareMathOperator{\sign}{sign}
\newcommand{\Z}{\mathbb Z}
\newcommand{\N}{\mathbb N}
\newcommand{\R}{\mathbb R}
\newcommand{\prb}[1]{\problemmacro{#1}}
\newcommand{\problemmacro}[1]{\texorpdfstring{\textsc{#1}}{#1}}
\newcommand{\cA}{\mathcal A}
\newcommand{\cB}{\mathcal B}
\newcommand{\cC}{\mathcal C}
\newcommand{\cF}{\mathcal F}
\newcommand{\cG}{\mathcal G}
\newcommand{\cL}{\mathcal L}
\newcommand{\cO}{\mathcal O}
\newcommand{\cP}{\mathcal P}
\newcommand{\cR}{\mathcal R}
\newcommand{\cV}{\mathcal V}
\newcommand{\ssimp}{{\displaystyle\blacktriangle}}
\renewcommand{\leq}{\leqslant}
\renewcommand{\le}{\leqslant}
\renewcommand{\geq}{\geqslant}
\renewcommand{\ge}{\geqslant}
\newcommand{\draftbox}{\begin{center}
  \fbox{%
    \begin{minipage}{2in}%
      \begin{center}%
        \begin{Large}%
          \textsc{Working Draft}%
        \end{Large}\\
        Please do not distribute%
      \end{center}%
    \end{minipage}%
  }%
\end{center}
\vspace{0.2cm}}
\newcommand{\draftbox}{}
\let\epsilon=\varepsilon
\let\eps=\varepsilon
\numberwithin{equation}{section}
\let\origparagraph\paragraph
\renewcommand{\paragraph}[1]{\origparagraph{#1.}}
\newcommand{\DSstore}[2]{%
  \global\expandafter \def \csname DSMEMORY #1 \endcsname{#2}
}
\newcommand{\DSload}[1]{%
  \csname DSMEMORY #1 \endcsname%
}
\newcommand{\DSnewlabel}[1]{%
  \newcommand\DScurrentlabel{#1}%
  \DSoldlabel{#1}%
}
\newcommand{\DSdummylabel}[1]{}
\newcommand{\torestate}[1]{%
  \let\DSoldlabel\label%
  \let\label\DSnewlabel%
  #1%
  \DSstore{\DScurrentlabel}{#1}%
  \let\label\DSoldlabel%
}
\newcommand{\restatetheorem}[1]{%
  \let\DSoldlabel\label
  \let\label\DSdummylabel
  \begin{theorem*}[Restatement of \prettyref{#1}]
    \DSload{#1}
  \end{theorem*}
  \let\label\DSoldlabel
}
\newcommand{\restatelemma}[1]{%
  \let\DSoldlabel\label
  \let\label\DSdummylabel
  \begin{lemma*}[Restatement of \prettyref{#1}]
    \DSload{#1}
  \end{lemma*}
  \let\label\DSoldlabel
}
\newcommand{\restateprop}[1]{%
  \let\DSoldlabel\label
  \let\label\DSdummylabel
  \begin{proposition*}[Restatement of \prettyref{#1}]
    \DSload{#1}
  \end{proposition*}
  \let\label\DSoldlabel
}
\newcommand{\KnnQuadraticProgramming}%
{\textsc{$K_{N,N}$-QuadraticProgramming}\xspace}
\newcommand{\KmnQuadraticProgramming}%
{\textsc{$K_{M,N}$-QuadraticProgramming}\xspace}
\renewcommand{\P}{\mathrm{P}}
\newcommand{\NP}{$\mathrm{NP}$}
\newcommand{\inst}{\Im}
\renewcommand{\NP}{$\mathsf{NP}$}
\newcommand{\round}{\msf{Round}}
\newcommand{\qV}[1]{\bm{b}_{#1}}
\newcommand{\vzero}{\mathbf{I}}
\newcommand{\ssimptrunc}{f_{\ssimp}}
\newcommand{\msf}[1]{\mathsf{#1}}
\newcommand{\mcl}[1]{\mathcal{#1}}
\newcommand{\mv}[1]{\mathbf{#1}}
\newcommand{\mrv}[1]{\bm{\lowercase{#1}}}
\newcommand{\erv}[1]{\mathcal{#1}}
\newcommand{\mpl}[1]{\bm{\uppercase{#1}}}
\newcommand{\lsat}{\prb{$\Lambda$-Sat}\xspace}
\newcommand{\maxl}{\prb{Max-$\Lambda$}\xspace}
\let\pref=\prettyref
\newcommand{\eat}[1]{}
\DeclareMathOperator{\Lin}{Lin}
\DeclareMathOperator{\Sp}{\mathsf{Span}}
\newcommand{\uglbl}{R}
\newcommand{\polymorph} {\mathrm{Poly}}
\title{Combinatorial Optimization Algorithms via Polymorphisms}
\author{%
		Jonah Brown-Cohen \\
		{University of California, Berkeley}\\
{California, Berkeley}\\
 \texttt{jonahbc@eecs.berkeley.edu}
\and
Prasad Raghavendra \thanks{Supported by NSF Career Award and Alfred.
	P. Sloan
Fellowship}\\
{University of California, Berkeley}\\
{California, Berkeley}\\
 \texttt{prasad@eecs.berkeley.edu}
}
\newif\iffull
\begin{document}

\maketitle
\draftbox
\thispagestyle{empty}

\begin{abstract}
	
An elegant characterization of the complexity of constraint
satisfaction problems has emerged in the form of the the algebraic dichotomy
conjecture of \cite{BulatovKJ00}.  Roughly
speaking, the characterization asserts that a CSP $\Lambda$ is tractable
if and only if there exist certain non-trivial operations known as
{\it polymorphisms} to combine
solutions to $\Lambda$ to create new ones.  In an entirely
separate line of work, the unique games conjecture yields a
characterization of approximability of Max-CSPs.  Surprisingly, this
characterization for Max-CSPs can also be reformulated in the language of polymorphisms.

In this work, we study whether existence of non-trivial polymorphisms implies
tractability beyond the realm of constraint satisfaction problems,
namely in the value-oracle model.  Specifically, given a function $f$
in the value-oracle model along with an appropriate operation that never increases
the value of $f$, we design algorithms to minimize $f$.  In
particular, we design a randomized algorithm to minimize a function
$f:[q]^n \to \R$ that admits a fractional
polymorphism which is {\it measure preserving} and has a {\it
transitive symmetry}.

We also reinterpret known results on MaxCSPs and thereby reformulate the unique games conjecture as a characterization
of approximability of max-CSPs in terms of their {\it approximate
polymorphisms}.

\end{abstract}

\clearpage

\ifnum\showtableofcontents=1
{
\tableofcontents
\thispagestyle{empty}
 }
\fi

\clearpage
\setcounter{page}{1}
\section{Introduction}
%

A vast majority of natural computational problems have been classified
to be either polynomial-time solvable or \NP-complete.   While there is
little progress in determining the exact time complexity for
fundamental problems like matrix multiplication, it
can be argued that a much coarser classification of P vs
\NP-complete has been achieved for a large variety of
problems.  Notable problems that elude such a classification
include factorization or graph isomorphism.

A compelling research direction at this juncture is to
understand what causes problems to be easy (in P) or
hard (\NP-complete).  More precisely, for specific classes
of problems, does there exist a unifying theory that explains
and characterizes why some problems in the class are in P
while others are \NP-complete?  For the sake of concreteness,
we will present a few examples.

It is well-known that $\prb{2-Sat}$ is polynomial-time
solvable, while $\prb{3-Sat}$ is
\NP-complete.  However, the traditional proofs of these
statements are unrelated to each other and therefore shed
little light on what makes $\prb{2-Sat}$ easy while
$\prb{3-Sat}$ \NP-complete.  Similar situations arise even
when we consider approximations for combinatorial optimization problems.  For instance, $\prb{Min
  s-t Cut}$ can be solved exactly, while $\prb{Min
  3-way cut}$ is \NP-complete and can be approximated to a factor
of $\frac{12}{11}$.  It is natural to ask why the approximation ratio for $\prb{Min s-t Cut}$ is $1$,
while it is $\frac{12}{11}$ for $\prb{Min 3-way cut}$.

Over the last decade, unifying theories of tractability have
emerged for the class of constraint satisfaction problems
(CSP) independently for the exact and the optimization
variants.  Surprisingly, the two emerging theories for the
exact and optimization variants of CSPs appear to coincide!
While these candidate theories remain conjectural for now, they
successfully explain all the existing
algorithms and hardness results for CSPs.  To set the stage
for the results of this paper, we begin with a brief survey
of the theory for CSPs.
\vspace{-3mm}
\paragraph{Satisfiability}
A constraint satisfaction problem (CSP) $\Lambda$ is specified
by a family of predicates over a finite domain $[q] =
\{1,2,\ldots,q\}$. Every instance of the CSP $\Lambda$
consists of a   set of variables $\cV$, along with a set of
constraints $\cC$ on them.
Each constraint in $\cC$ consists of a predicate from the
family $\Lambda$ applied to a subset of variables.  For a CSP
$\Lambda$, the associated satisfiability problem
\lsat is defined as follows.
\begin{problem}[\lsat] Given an instance $\inst$ of
  the CSP $\Lambda$, determine whether there is an assignment satisfying
  all the constraints in $\inst$.
\end{problem}
A classic theorem of Schaefer \cite{Schaefer78} asserts that
among all satisfiability problems  over the boolean domain ($\{0,1\})$),
only \prb{Linear-Equations-Mod-2}, \prb{2-Sat},
\prb{Horn-Sat}, \prb{Dual-Horn Sat} and certain trivial CSPs are solvable in polynomial
time.  The rest of the boolean CSPs are \NP-complete.
The dichotomy conjecture of Feder and Vardi \cite{FederV98}
asserts that every \lsat is in P or \NP-complete.  The
conjecture has been shown to hold for CSPs over domains of size up to
$4$ \cite{Bulatov06}.
%

In this context, it is natural to question as to what makes
certain \lsat problems tractable while the others are \NP-complete.
Bulatov, Jeavons and Krokhin \cite{BulatovKJ00} conjectured
a beautiful characterization for tractable satisfiability problems.  We
will present an informal description of this characterization
known as the { \it algebraic
  dichotomy conjecture}.  We refer the reader to the work of Kun Szegedy
\cite{KunS09} for a more formal description.

To motivate this characterization, let us consider
a CSP known as the \prb{XOR} problem.  An instance of
the \prb{XOR} problem consists of a
system of linear equations over $\Z_2 = \{0,1\}$.  Fix an instance $\inst$ of
$\prb{XOR}$ over $n$ variables.  Given three solutions $X^{(1)},
X^{(2)}, X^{(3)} \in \bits^n$ to $\inst$, one can create a new
solution  $Y \in \bits^n$:
$$ Y_i = X^{(1)}_i \oplus X^{(2)}_i \oplus X^{(3)}_i \qquad \forall i \in
[n] \mper$$
It is easy to check that $Y$ is also a feasible solution to the
instance $\inst$.  Thus the $XOR: \bits^3 \to \bits$ yields a way to
combine three solutions in to a new solution for the same instance.
Note that the function $XOR$ was applied to each bit of the solution
individually.
An operation of this form that preserves the satisfiability of the
CSP is known as a {\em polymorphism}.  Formally, a
polymorphism of a CSP \lsat is defined as follows:
\begin{definition}[Polymorphisms]
  A function $p : [q]^\uglbl \to [q]$ is said to be {\it a polymorphism}
  for the CSP \lsat, if for every instance $\inst$ of $\Lambda$ and
  $\uglbl$ assignments $X^{(1)},X^{(2)}, \ldots, X^{(\uglbl)} \in [q]^{n}$
  that satisfy all constraints in $\inst$, the vector $Y \in [q]^n$
  defined below is also a feasible solution.
  $$ Y_i = p(X^{(1)}_i, X^{(2)}_i, X^{(3)}_i,\dots, X_i^{(\uglbl)}) \qquad \forall i \in
  [n] \mper$$
\end{definition}

Note that the dictator functions $p(x^{(1)},\ldots,x^{(R)} ) = x^{(i)}$ are polymorphisms
for every CSP \lsat.  These will be referred to as {\it
  projections} or trivial polymorphisms.
All the tractable cases of boolean CSPs in Schaefer's theorem are
characterized by existence of non-trivial polymorphisms.  Specifically, \prb{2-SAT} has the Majority
functions, \prb{Horn-SAT} has OR functions, and \prb{Dual Horn-SAT}
has the AND functions as polymorphisms.  Roughly speaking, Bulatov
\etal \cite{BulatovKJ00} conjectured that the
existence of non-dictator polymorphisms characterizes CSPs that are tractable.
Their work showed that the set of polymorphisms $\polymorph(\Lambda)$
of a CSP $\Lambda$ characterizes the
complexity of \lsat.
\iffull
Formally,
\begin{theorem}\cite{BulatovKJ00}
  If two CSPs $\Lambda_1$, $\Lambda_2$ have the same set of
  polymorphisms $\polymorph(\Lambda_1) = \polymorph(\Lambda_2)$, then
  \prb{$\Lambda_1$-sat} is polynomial-time reducible to \prb{$\Lambda_2$-sat}
  and vice versa.
\end{theorem}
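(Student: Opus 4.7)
The plan is to deduce this theorem from the classical Galois correspondence between clones of operations and relational clones, due to Bodnarchuk-Kaluzhnin-Kotov-Romov (independently Geiger), which was leveraged for CSPs by Jeavons. For a collection of relations $\Gamma$ over $[q]$, let $\polymorph(\Gamma)$ denote the set of all operations $p : [q]^\uglbl \to [q]$ that preserve every relation in $\Gamma$, and for a set of operations $F$ let $\mathrm{Inv}(F)$ denote the set of all relations preserved by every $p \in F$. The Galois theorem asserts that $\mathrm{Inv}(\polymorph(\Gamma))$ is precisely the set of relations admitting a \emph{primitive positive} (pp) definition over $\Gamma$, that is, relations of the form
\[ R(x_1, \ldots, x_k) \;\equiv\; \exists y_1 \cdots \exists y_m \bigwedge_j R_j(\mathbf{z}_j), \]
where each $R_j$ is either a relation in $\Gamma$ or the equality relation on $[q]$, and each $\mathbf{z}_j$ is a tuple of variables drawn from $\{x_1,\ldots,x_k,y_1,\ldots,y_m\}$.

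Granting this characterization, the reduction is essentially immediate. Since $\polymorph(\Lambda_1) = \polymorph(\Lambda_2)$, we have $\mathrm{Inv}(\polymorph(\Lambda_1)) = \mathrm{Inv}(\polymorph(\Lambda_2))$, so every predicate $R \in \Lambda_1$ is pp-definable over $\Lambda_2 \cup \{=\}$ by some fixed formula $\varphi_R$ whose size depends only on $R$ and $\Lambda_2$. Given an instance $\inst$ of \prb{$\Lambda_1$-Sat} on variables $\cV$, the plan is to construct a \prb{$\Lambda_2$-Sat} instance $\inst'$ by replacing each constraint $R(x_{i_1},\ldots,x_{i_k})$ in $\inst$ by a copy of the body of $\varphi_R$, with fresh auxiliary variables for each existentially quantified $y_j$ in that copy, and handling equality atoms by variable identification. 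An assignment to $\cV$ satisfies $\inst$ iff it extends to some assignment on the enlarged variable set satisfying $\inst'$, and the size blowup is polynomial because the pp-definitions $\varphi_R$ are constant-size objects independent of $\inst$. The reverse reduction is symmetric.

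The main obstacle is the nontrivial direction of the Galois theorem: \emph{if} $R$ is preserved by every $p \in \polymorph(\Gamma)$, \emph{then} $R$ is pp-definable from $\Gamma$. The converse direction is a routine induction — pp-definability is closed under conjunction and existential projection, and both operations clearly preserve invariance. For the hard direction, the standard argument is an \emph{indicator} construction: enumerate $R = \{t^{(1)}, \ldots, t^{(N)}\} \subseteq [q]^k$, form the $k \times N$ matrix whose columns are the tuples $t^{(j)}$, and let $S \subseteq [q]^k$ be the smallest relation pp-definable from $\Gamma$ that contains the $k$ rows of this matrix (each viewed as an element of $[q]^N$ and hence as a constraint pattern). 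One then shows that any tuple in $S$ is necessarily of the form $\bigl(p(t^{(1)}_1,\ldots,t^{(N)}_1),\ldots,p(t^{(1)}_k,\ldots,t^{(N)}_k)\bigr)$ for some $N$-ary $p \in \polymorph(\Gamma)$; since such a $p$ must preserve $R$, the tuple lies in $R$, yielding $S \subseteq R$, while $R \subseteq S$ is automatic from the construction. Making precise the identification of elements of $S$ with polymorphisms applied to the rows — which boils down to chasing the pp-definition of $S$ and exploiting the freeness of the generating rows — is the technical heart of the argument, and for brevity I would defer the full details to the exposition in \cite{KunS09}.
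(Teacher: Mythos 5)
The paper does not prove this theorem; it imports it as a black box from the cited reference, so there is no in-paper argument to compare against. Your proposal is the standard (and correct) proof of this result: the Galois correspondence $\mathrm{Inv}(\polymorph(\Gamma)) = \langle\Gamma\rangle$ between polymorphism clones and primitive-positive-definable relations, followed by a constant-size gadget replacement of each constraint by the body of its pp-definition, with fresh existential variables and identification for equality atoms. Your sketch of the hard direction via the indicator construction is also essentially right, with one imprecision worth noting: the relation $S$ is not well described as ``the smallest pp-definable relation containing the $k$ rows''; rather, it is the relation defined by one \emph{specific} pp-formula, namely the one whose variables are all of $[q]^N$, whose atoms are all coordinatewise-valid applications of relations of $\Gamma$, and whose free variables are the $k$ rows --- solutions of that formula are exactly the $N$-ary polymorphisms, which gives $S=\{(p(r_1),\ldots,p(r_k))\}=R$. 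With that correction the argument is complete and matches the proof in the literature that the paper's citation points to.
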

\fi
There are many equivalent ways of formalizing what it means for an operation
to be {\it non-trivial} or {\it non-dictator}.  A particularly simple
way to formulate the algebraic dichotomy conjecture arises out of the
recent work of Barto and Kozik \cite{BartoK12}.  A polymorphism $p:
[q]^k \to [q]$ is called a {\it cyclic term} if
$$ p(x_1, \ldots, x_k) = p(x_2, \ldots, x_k, x_1) = \ldots = p(x_k,
x_{k-1},\ldots, x_1) \quad \forall x_1,\ldots,x_k \in [q] \mper $$
Note that the above condition strictly precludes the operation $p$
from being a dictator.
\begin{conjecture} ( \cite{BulatovKJ00,MarotiMMR08,BartoK12})
  \lsat is in $\P$ if $\Lambda$ admits a cyclic
  term, otherwise \lsat is \NP-complete.
\end{conjecture}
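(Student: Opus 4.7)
The plan is to attack the two directions of the dichotomy separately. For the hardness direction, I would first reduce the question about cyclic terms to the seemingly much richer question about Taylor terms (an operation $p$ satisfying some system of identities of the form $p(x,y,\ldots) = p(y,x,\ldots)$ in each coordinate for some pair $x \neq y$). The goal is to show that if $\Lambda$ admits no cyclic term then its clone of polymorphisms consists essentially of projections, so that the only idempotent algebras one can build from $\polymorph(\Lambda)$ are ``projection-like''. From this one would construct a primitive positive interpretation of a known \NP-hard CSP --- for example \prb{3-Sat} or \prb{Graph 3-Coloring} --- inside $\Lambda$-Sat. The cleanest route is: no cyclic term $\Rightarrow$ no Taylor term (this is the Maróti--McKenzie equivalence) $\Rightarrow$ the variety generated by the polymorphism algebra admits a two-element algebra of projections as a quotient of a subalgebra $\Rightarrow$ one can pp-interpret \prb{1-in-3-Sat}, giving \NP-hardness by the theorem of Bulatov--Jeavons--Krokhin cited above.

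For the tractability direction, the plan is a recursive algorithm based on enforcing local consistency and then using the cyclic term to glue local solutions. Concretely, given an instance $\inst$ of $\lsat$, I would first compute the $(2,3)$-minimal (or more generally, $(j,k)$-minimal) instance equivalent to $\inst$ by repeatedly propagating and intersecting projections of constraints. If at any point some variable's domain becomes empty, reject. Otherwise, I would pick a variable $v$ and attempt to ``absorb'' part of its domain: using a cyclic term $p$, every domain $D_v \subseteq [q]$ inherits a subuniverse structure, and one can identify a proper absorbing subuniverse $B \subsetneq D_v$ --- meaning that any constraint involving $v$ that is satisfiable with $D_v$ remains satisfiable after restricting $v$ to $B$. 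Restricting to $B$ produces a strictly smaller instance that still admits the cyclic polymorphism, and one recurses. When no proper absorbing subuniverse exists at any variable, one is in the ``linear'' case, where the instance factors through a system of affine equations over modules and can be solved by Gaussian elimination.

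The principal obstacle, and the reason the conjecture has resisted a short proof, is showing that the recursion above actually terminates with a solution rather than getting stuck in a locally consistent but globally inconsistent instance. This requires the full strength of absorption theory: one must show that if the $(2,3)$-minimal instance is nonempty, some variable must admit either a proper absorbing subuniverse, a nontrivial congruence with semilattice or majority behavior on the quotient, or an abelian structure on its domain --- and that in every case a reduction to a smaller subproblem can be carried out uniformly. Proving this trichotomy on the level of arbitrary finite algebras with a cyclic term, together with controlling the blow-up of the recursion, is exactly the technical heart of the argument; I would expect to borrow the bulk of it from the Barto--Kozik framework rather than reinvent it. The soundness of the base cases (semilattice/majority/affine) is comparatively routine, relying on classical algorithms for each type.
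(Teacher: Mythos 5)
The first thing to note is that the paper offers no proof of this statement: it is stated as a \emph{conjecture}, and the paper only records (as a separate theorem, citing \cite{BulatovKJ00,MarotiMMR08,BartoK12}) that the hardness direction is known --- \lsat is \NP-complete when $\Lambda$ admits no cyclic term. So there is no proof in the paper to compare yours against; what can be assessed is whether your sketch would actually close the conjecture. Your hardness direction is essentially the standard and correct route: the equivalence of cyclic terms with Taylor terms (Maróti--McKenzie, made effective by Barto--Kozik), and the Bulatov--Jeavons--Krokhin result that absence of a Taylor term lets one pp-interpret a hard CSP. That half is fine and matches what the literature (and the paper's cited theorem) establishes.

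The gap is in the tractability direction, and it is not a small one. The absorption machinery you propose to ``borrow from the Barto--Kozik framework'' proves tractability only for CSPs of bounded width --- in the paper's terminology, CSPs ``with no ability to count.'' Your final step, ``when no proper absorbing subuniverse exists at any variable, one is in the linear case, where the instance factors through a system of affine equations over modules and can be solved by Gaussian elimination,'' is precisely the part that nobody knew how to make rigorous at the time: the affine (abelian) behavior generally appears only on quotients of subalgebras of the variable domains, entangled with non-affine parts, and there is no known uniform reduction that peels these layers apart while preserving both satisfiability and the existence of a cyclic polymorphism on the residual instance. Local consistency is provably insufficient in the presence of affine parts, and Gaussian elimination alone does not see the non-affine layers. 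Establishing the trichotomy you assert --- absorption, semilattice/majority quotient, or affine structure, with a terminating reduction in each case --- at the level of arbitrary finite algebras with a cyclic term is exactly the open content of the conjecture; the eventual resolutions (Bulatov and Zhuk, 2017) required substantially new machinery rather than an assembly of the known base cases. As written, your proposal restates the conjecture's difficulty rather than resolving it.
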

Surprisingly, one of the implications of the conjecture has already been
resolved.
\begin{theorem} \cite{BulatovKJ00, MarotiMMR08,BartoK12}
  \lsat is \NP-complete if $\Lambda$ does not
  admit a cyclic term.
\end{theorem}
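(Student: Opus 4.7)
The plan is to use the algebraic framework of polymorphism-based reductions. The key tool is the Galois correspondence of Bodnar\v{c}uk--Kalu\v{z}nin and Geiger--Pixley: the relations primitive-positive (pp-) definable from the predicates of $\Lambda$ are precisely those preserved by every polymorphism in $\polymorph(\Lambda)$. Since a pp-definition of a relation $R$ from $\Lambda$ yields a log-space reduction from $R$-Sat to \lsat (introduce the existentially quantified variables as fresh variables and translate each conjunct into a $\Lambda$-constraint), it suffices to exhibit some NP-complete CSP whose relations are preserved by all of $\polymorph(\Lambda)$.

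First I would reduce to the idempotent case. By passing from $\Lambda$ to its core and then adding the singleton unary relations $\{a\}$ for $a \in [q]$, one obtains a new CSP $\Lambda'$ with $\lsat \equiv_{\mathrm{P}} \Lambda'\text{-Sat}$ whose polymorphism clone consists precisely of the \emph{idempotent} polymorphisms of the core of $\Lambda$. A standard calculation shows this step preserves the existence or non-existence of a cyclic term. So I may assume $\Lambda$ is a core and $\polymorph(\Lambda)$ is an idempotent clone on $[q]$.

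Next I would invoke the structural core of the argument: for an idempotent clone $\mcl{C}$ on a finite set, the theorem of Mar\'oti--McKenzie \cite{MarotiMMR08} asserts that $\mcl{C}$ contains a cyclic term of some arity if and only if it contains a Taylor term (equivalently a weak near-unanimity term). Hence under our hypothesis $\polymorph(\Lambda)$ admits no Taylor term. Taylor's classical theorem then implies that $\polymorph(\Lambda)$ has a nontrivial quotient on a two-element set whose induced clone is the clone of pure projections. Concretely, this provides a pp-interpretation of the two-element projection CSP inside $\Lambda$, which in turn pp-interprets $1$-in-$3$-SAT; composing with the reduction from the first paragraph yields a polynomial-time reduction from $1$-in-$3$-SAT to \lsat.

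The main obstacle in this plan is the third step: the Mar\'oti--McKenzie equivalence ``no cyclic term $\Leftrightarrow$ no Taylor term'' and the associated fact that a Taylor-free idempotent clone on a finite set has the projection clone on a two-element set as a quotient. This is the genuinely algebraic heart of the theorem; everything else---the Galois correspondence, the core/idempotent reduction, and the final pp-interpretation to an NP-complete problem---is comparatively routine bookkeeping once this structural dichotomy is in hand.
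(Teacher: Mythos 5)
The paper does not prove this statement; it is quoted as a known black-box result from the cited references, so there is no internal proof to compare against. Your outline is, however, an accurate reconstruction of the standard argument from those references: pp-definability via the Galois correspondence, passage to the idempotent core, and Taylor's theorem producing a two-element projection factor that pp-interprets an NP-complete CSP such as \prb{1-in-3-Sat}. One attribution should be corrected: the equivalence you lean on in the third step --- that an idempotent clone on a finite set has a cyclic term iff it has a Taylor term --- is not Mar\'oti--McKenzie. Mar\'oti and McKenzie proved Taylor $\Leftrightarrow$ weak near-unanimity; the strengthening to cyclic terms (Taylor $\Rightarrow$ cyclic, which is exactly the direction your contrapositive needs) is the main theorem of Barto and Kozik \cite{BartoK12}. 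You also correctly flag that this equivalence, together with the Taylor-free $\Rightarrow$ projection-factor step, is where all the real work lives; the rest is routine. So the proposal is a faithful summary of the literature proof rather than a new route, and it is sound at the level of detail given.
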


The algebraic approach to understanding the complexity of CSPs
has received much attention, and the algebraic dichotomy conjecture
has been verified for several subclasses of CSPs such as conservative
CSPs \cite{Bulatov03}, CSPs with no ability to count \cite{BartoK09}
and CSPs with Maltsev operations \cite{BulatovD06}.  Recently, Kun and Szegedy reformulated the algebraic dichotomy
conjecture using analytic notions similar to influences \cite{KunS09}.

%
%
%


\vspace{-3mm}
\paragraph{Polymorphisms in Optimization}
Akin to polymorphisms for \lsat, we define the
notion of {\it approximate polymorphisms} for optimization
problems.
Roughly speaking, an approximate polymorphism is a probability
distribution $\cP$ over a set of operations of the form $p:[q]^k \to
[q]$.  In particular, an approximate polymorphism $\cP$ can be
used to combine $k$ solutions to an optimization problem
and produce a probability distribution over new solutions to
the same instance. Unlike the case of exact CSPs, here
the polymorphism outputs several new solutions.  $\cP$ is an
$(c,s)$-approximate polymorphism if, given a set of solutions
with average value $c$, the average value of the
output solutions is always at least $s$.

For the sake of exposition, we present an example here.
Suppose $f : \bits^n \to \R^+$ is a {\it super-modular} function.  In this case,
$f$ admits a $1$-approximate polymorphism described below.
$$\cP = \begin{cases} OR(x_1,x_2) & \text{ with probability }
  \frac{1}{2} \\
  AND(x_1,x_2) & \text{ with probability }
  \frac{1}{2} \end{cases} $$
Given any two assignments $X^{(1)}$, $X^{(2)}$, the polymorphism $\cP$ outputs two solutions
$X^{(1)} \vee X^{(2)}$ and $X^{(1)} \wedge X^{(2)}$.  The
supermodularity of the function $f$ implies
$$ \val(X^{(1)}) + \val(X^{(2)}) \leq \val(X^{(1)} \vee
X^{(2)}) + \val(X^{(1)} \wedge X^{(2)})\ , $$
that the average value of solutions output by $\cP$ is at
least the average value of solutions input to it.
The formal definition of an $(c,s)$-approximate
polymorphism is as follows.
\begin{definition}[$(c,s)$-approximate polymorphism]
  Fix a function $f: [q]^n \to \R^+$.
  A probability distribution $\cP$ over operations $p:
  [q]^{\uglbl} \to [q]$ is an $(c,s)$-approximate polymorphism
  for $f$ if the
  following holds:
  for every $\uglbl$ assignments $X^{(1)},
  X^{(2)},\dots, X^{(\uglbl)} \in [q]^n$, if $\E_i f(X^{(i)}) \geq c$ for
  all $i$ then,
  $$ \E_{p \in \cP} [f(p(X^{(1)},\dots,X^{(\uglbl)}))] \geq s $$
  Here, $p(X^{(1)},\dots,X^{(\uglbl)})$ is the assignment obtained by
  applying the operation $p$ coordinate-wise.
  %
\end{definition}
It is often convenient to use a coarser notion of approximation,
namely the approximation ratio for all $c$.  In this vein, we define
$\alpha$-approximate polymorphisms.
\begin{definition}[$\alpha$-approximate polymorphism]
  A probability distribution $\cP$ over operations $p:
  [q]^{\uglbl} \to [q]$ is an $\alpha$-approximate polymorphism
  for $f : [q]^n \to \R^+$, if it is an
  $(c, \alpha \cdot c)$-approximate polymorphism for all $c \geq
  0$.
\end{definition}
We will refer to a $1$-approximate polymorphism as a {\it
  fractional polymorphism} along the lines of Cohen \etal
\cite{CohenCJ06}.

\paragraph{Approximating CSPs}
To set the stage for the main result of the paper, we will
reformulate prior work on approximation of CSPs in the language of
polymorphisms.  While the results stated here are not new,
their formulation in the language of approximate polymorphisms
has not been carried out earlier.  This reformulation highlights the the close connection between
the algebraic dichotomy conjecture (a theory for exact CSPs) and the unique games
conjecture (a theory of approximability of CSPs).  We believe
this reformulation is also one of the contributions of this
work and have devoted
\iffull
\pref{sec:maxcsps}
\else
Section 7 in the full version
\fi
to give a full account of it.
The results we will state now hold for a fairly general class of
problems that include both maximization and minimization
problems with local bounded payoff functions (referred to as
{\it value CSP} in \cite{CohenCCJZ12}).  However,  for
the sake of concreteness, let us restrict our attention to the problem
\maxl for a CSP $\Lambda$.
\begin{problem}[\maxl] Given an instance $\inst$ of
  the $\Lambda$-CSP, find an assignment that satisfies the maximum
  number (equivalently fraction) of constraints.
\end{problem}

The formal definition of $\alpha$-approximate polymorphisms is
as follows.
\begin{definition} \label{def:approxpolycsps}
  A probability distribution $\cP$ over operations
  $p:[q]^k \to [q]$ is an $\alpha$-approximate
  polymorphism for \maxl, if $\cP$ is an $(c,\alpha \cdot c)$-approximate
  polymorphism for every instance $\inst$ of \maxl.
\end{definition}
In the above definition, we treat each instance $\inst$ as a
function $\inst: [q]^n \to \R^+$ that we intend to maximize.
As in the case of \lsat, dictator functions are $1$-approximate polymorphisms for every
\maxl.  We will use the analytic notion of
{\it influences} to preclude the dictator functions (see
\iffull
\pref{sec:maxcsps}
\else
Section 7 in the full version
\fi).
Specifically, we define $\tau$-quasirandomness as follows.
\begin{definition}
  An approximate polymorphism $\cP$ is $(\tau,d)$-quasirandom if, for
  every distribution $\mu$ on $[q]$,
  $$ \E_{p \in \cP} \left[\max_i \Inf_{i,\mu}^{< d}(p)\right] \leq \tau \quad \forall
  i \in [k]$$
\end{definition}

By analogy to \lsat, it is natural to conjecture that
approximate polymorphisms characterize the approximability of every
\maxl.  Indeed, all known tight approximability results
for different Max-CSPs like \prb{3-Sat} correspond exactly to the
existence of approximate polymorphisms.  Moreover, we show that the
unique games conjecture \cite{Khot02a} is equivalent
to asserting that existence of quasirandom approximate polymorphisms
characterizes the complexity of approximating every CSP.  To state our
results formally, let us define some notation.  For a CSP $\Lambda$
define the approximation constant $\alpha_{\Lambda}$ as,
$$ \alpha_{\Lambda} \defeq \sup \left\{ \alpha \in \R \mid \forall \tau,d > 0 \ \  \exists
  (\tau,d)\text{-quasirandom, } \alpha\text{-approximate polymorphism
    for } \Lambda\right\} \mper$$
Similarly, for each $c > 0$ define the constant $s_{\Lambda}(c)$
as,
$$ s_{\Lambda}(c) \defeq \sup \left\{ s \in \R \mid \forall \tau,d > 0 \ \  \exists
  (\tau,d)\text{-quasirandom, } (c,s)\text{-approximate polymorphism
    for } \Lambda\right\} \mper$$
%
%
%
%
The connection between unique games conjecture (stated in
\iffull
\pref{sec:maxcsps}
\else
Section 7 in the full version
\fi) and approximate
polymorphisms is summarized below.
\begin{theorem}[Connection to UGC] \label{thm:ughardness}
  Assuming the unique games conjecture, for every $\Lambda$,
  it is \NP-hard to approximate \maxl better than
  $\alpha_{\Lambda}$.  Moreover, the unique games conjecture is equivalent to the following
  statement: For every $\Lambda$ and
  $c$,  on instances of $\maxl$ with value $c$, it is \NP-hard to find an
  assignment with value larger than $s_{\Lambda}(c)$.
\end{theorem}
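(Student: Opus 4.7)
The plan is to reduce the theorem to the Raghavendra characterization of CSP approximability under UGC, by establishing a tight correspondence between $(\tau,d)$-quasirandom $(c,s)$-approximate polymorphisms of $\Lambda$ and integrality gaps of the canonical semidefinite relaxation for $\maxl$ at completeness $c$ and soundness $s$. Once this correspondence is made precise, the first statement of the theorem follows from the standard UGC-to-hardness pipeline applied with $c$ realizing the infimum in the definition of $\alpha_\Lambda$, while the equivalence in the second statement follows from Raghavendra's theorem, which asserts that UGC is equivalent to SDP integrality gaps characterizing approximability for every Max-CSP.

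For the hardness direction, starting from any $(\tau,d)$-quasirandom $(c,s)$-approximate polymorphism $\cP$ for $\Lambda$, I construct a dictatorship test for $\maxl$ whose completeness is $c$ and whose soundness is at most $s + o_{\tau,d}(1)$. The test samples a random constraint from a suitable ``base'' $\Lambda$-instance on $[q]^k$, interprets the $k$ input ``tables'' of the long code as $k$ assignments of average value $c$, applies $p \sim \cP$ coordinate-wise, and accepts iff the resulting assignment satisfies the sampled constraint. Completeness is immediate since a dictator returns one of the $k$ input assignments whose value is $c$. Soundness follows from a Mossel-style invariance principle applied to $\cP$: every function on $[q]^n$ whose low-degree influences are small behaves like a noisy Gaussian, so the value it can achieve is bounded by the $(c,s)$-polymorphism guarantee, namely $s$. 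Composing with the standard KKMO-style reduction from Unique Games to long-code tests then yields NP-hardness of $(c, s + \e)$-gap $\maxl$ under UGC for every $\e > 0$, establishing the bound $s_\Lambda(c)$.

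For the converse direction of the equivalence, I will show that $s_\Lambda(c)$ coincides with the SDP integrality gap curve of the canonical relaxation for $\maxl$, which by Raghavendra's theorem is equivalent under UGC to the approximability threshold. The key construction: an SDP integrality gap instance with SDP value $c$ and true optimum at most $s$ yields a $(\tau,d)$-quasirandom $(c,s)$-approximate polymorphism, obtained by viewing the SDP vectors as defining a joint distribution over $k$ correlated assignments with the correct marginals, and then sampling coordinate-wise followed by a small noise operator to guarantee low influences. In the reverse direction, a $(\tau,d)$-quasirandom $(c,s)$-approximate polymorphism can be plugged into the SDP (through its Fourier expansion on each coordinate) to produce a feasible SDP solution of value at least $c - o_{\tau,d}(1)$ in any instance whose integral value is at most $s$. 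Taking limits in $\tau, d$ closes the loop between $s_\Lambda(c)$ and the SDP integrality gap.

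The main obstacle will be matching the quantitative parameters between the polymorphism and SDP formulations: the quasirandomness parameters $(\tau,d)$ must be translated into noise parameters of the SDP solution and vice versa, with both translations introducing error terms that must vanish in the appropriate limit. Additionally, the supremum defining $s_\Lambda(c)$ and its continuous dependence on $c$ requires uniform handling so that the reduction behaves well across the full range of completeness values, and the invariance principle must be invoked in a form strong enough to absorb the $k$-fold product structure arising from applying $p$ coordinate-wise. These are standard issues inside the Raghavendra framework but require careful bookkeeping to express in the polymorphism language.
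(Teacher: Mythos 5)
There is a genuine gap: the logical direction of the key implication is inverted. You propose to build the dictatorship test \emph{from} a quasirandom $(c,s)$-approximate polymorphism $\cP$ and to derive soundness from ``the $(c,s)$-polymorphism guarantee.'' But that guarantee is a \emph{lower} bound --- it says that on every instance and every $R$-tuple of solutions of average value $c$, the operations in $\cP$ output solutions of expected value \emph{at least} $s$. Soundness of a dictatorship test is an \emph{upper} bound on what every quasirandom function can achieve; no invariance-principle argument converts one into the other. The existence of a good polymorphism is the \emph{algorithmic} side of the story (it is exactly what the paper uses in \prettyref{thm:sdpgap} to round the {\sc Basic SDP}). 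Hardness must come from the \emph{non-existence} of quasirandom $(c,s)$-approximate polymorphisms for $s > s_{\Lambda}(c)$. The missing idea is how to pass from ``for every distribution $\cP$ over quasirandom functions there is some instance and some tuple of solutions defeating it'' to ``there is a single instance and tuple defeating every such $\cP$ simultaneously'': the paper does this via a minimax/LP-duality argument over the convex set of distributions on a fixed function family (\prettyref{thm:poly-to-dict-equivalence}), combined with a second minimax step over distributions $\Phi$ on $\ssimp_q$ to fix which measure the influences are computed against (\prettyref{thm:ugpolymorphism}). Your ``suitable base instance'' is precisely this minimax witness, and without that step your test has no soundness claim at all. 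A further (smaller) point is that soundness must be extended from $[q]$-valued to $\ssimp_q$-valued tables before composing with the Unique Games reduction, which requires showing that randomized rounding preserves quasirandomness (\prettyref{lem:quasirandom-concentration}); your sketch does not address this.

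The same inversion recurs in your converse direction: you assert that an SDP integrality gap instance with SDP value $c$ and optimum at most $s$ ``yields'' a quasirandom $(c,s)$-approximate polymorphism. The correct relationship is the contrapositive of \prettyref{thm:sdpgap}: such a gap instance shows that no quasirandom $(c,s)$-approximate polymorphism \emph{exists}. The paper's actual equivalence argument is: existence of polymorphisms up to $s_{\Lambda}(c)$ gives an SDP rounding matching $s_{\Lambda}(c)$ (so nothing below it can be NP-hard unless the algorithm is wrong), while non-existence above $s_{\Lambda}(c)$ gives dictatorship tests and hence UG-hardness; the ``equivalent to UGC'' clause then follows because \uniquegames is itself a CSP covered by both halves. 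I'd recommend restructuring your plan around the contrapositive throughout, with the minimax construction of the test instance as the central lemma.
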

All unique games based hardness results for \maxl are
shown via constructing appropriate dictatorship testing gadgets
\cite{KhotKMO07,Raghavendra08}.
The above theorem is a consequence of the connection between
approximate polymorphisms and dictatorship testing (see
\iffull
\prettyref{sec:approxpoly}
\else
Section 7.3 in the full version
\fi
for details).
The connection between
approximate polymorphisms and dictatorship testing was observed
in the work of Kun and Szegedy \cite{KunS09}.

Surprisingly, the other direction of the correspondence between
approixmate polymorphisms and complexity of CSPs also holds.
Formally, we show the following result about a semidefinite
programming relaxation for CSPs referred to as the {\sc Basic SDP}
relaxation (see \pref{sec:roundingcsps} for definition).
\begin{theorem}[Algorithm via Polymorphisms] \label{thm:sdpgap}
  For every CSP $\Lambda$, the integrality gap of the {\sc
    Basic-SDP} relaxation for \maxl is at
  most $\alpha_{\Lambda}$.  More precisely, for every instance $\inst$ of \maxl,
  for every $c$, if the optimum value of the {\sc Basic SDP} relaxation is at least
  $c$, then optimum value for $\inst$ is at least $\lim_{\epsilon \to
    0} s_{\Lambda}(c-\e)$.
\end{theorem}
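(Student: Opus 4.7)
The strategy is to mirror the rounding framework of \cite{Raghavendra08}, using the assumed approximate polymorphism as a combinatorial template. Given an instance $\inst$ whose Basic SDP value is at least $c$, the first step is to extract, for each constraint $C$ on some $k$-tuple of variables, a local distribution $\mu_C$ on $[q]^k$ consistent with the SDP vectors, so that $\E_C \E_{\alpha \sim \mu_C}[P_C(\alpha)] \geq c$. Fix $\epsilon,\tau > 0$ and a large integer $d$, and pick from the definition of $s_\Lambda(c-\epsilon)$ a $(\tau,d)$-quasirandom $(c-\epsilon,\,s_\Lambda(c-\epsilon)-\epsilon)$-approximate polymorphism $\cP$, supported on operations $p : [q]^R \to [q]$ for some $R$.

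The rounding scheme I would design generates $R$ correlated random assignments $X^{(1)},\ldots,X^{(R)} \in [q]^n$ by Gaussian sampling from the SDP vectors (with the standard noise-smoothing used in Raghavendra-style roundings), arranged so that for every constraint $C$ the joint distribution of $(X^{(1)}|_C,\ldots,X^{(R)}|_C)$ has the correct second moments matching $\mu_C^{\otimes R}$ up to a smoothing parameter. A polymorphism $p \sim \cP$ is then sampled and applied coordinate-wise to produce the output assignment $Y = p(X^{(1)},\ldots,X^{(R)})$.

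The analysis proceeds by rewriting the expected payoff $\E[P_C(Y|_C)]$ at each constraint as the expectation of $P_C \circ p$ on an $R$-tuple drawn from the correlated Gaussian ensemble, and comparing this to the idealized setting where $(X^{(1)}|_C,\ldots,X^{(R)}|_C)$ is replaced by $R$ independent draws from $\mu_C$. In the idealized setting, summing over constraints produces an $R$-tuple of full assignments each of average value $\geq c-\epsilon$, so the $(c-\epsilon,\,s_\Lambda(c-\epsilon)-\epsilon)$-polymorphism guarantee gives a total value of at least $s_\Lambda(c-\epsilon)-\epsilon$. The bridge between the correlated-Gaussian and independent models is supplied by Mossel's multidimensional invariance principle, whose error term at each constraint is controlled by $\E_{p \sim \cP}\bigl[\max_i \Inf_{i,\mu_C}^{<d}(p)\bigr]$, which $\tau$-quasirandomness bounds by $\tau$. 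Sending $\tau,\epsilon \to 0$ and $d \to \infty$ then yields the claimed $\lim_{\epsilon \to 0} s_\Lambda(c-\epsilon)$ bound.

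The main obstacle I expect is making the invariance step work uniformly across all local distributions $\mu_C$ that can arise in the SDP and for arbitrary $p$ in the support of $\cP$. Mossel's invariance needs the relevant multilinear expansions of $P_C \circ p$ with respect to $\mu_C$ to be well behaved, which requires first truncating and noise-smoothing $p$ to preserve both its combinatorial guarantee (so the approximate-polymorphism inequality still applies, up to a small loss absorbed into $\epsilon$) and its low-degree influence bound. Handling this smoothing and the discrete-to-Gaussian passage in a way that is uniform over all predicates $P_C$ of $\Lambda$ and all admissible local distributions is where the precise quantitative definition of $(\tau,d)$-quasirandomness plays its critical role and where the bulk of the technical work will lie.
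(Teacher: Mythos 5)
Your proposal follows essentially the same route as the paper's proof: local SDP distributions $\mu_C$, Gaussian projections of the SDP vectors matching the first two moments, application of the noise-smoothed and truncated polymorphism polynomial $T_{1-\e}p$, and the invariance principle of Mossel \etal with the per-constraint error controlled by the $(\tau,d)$-quasirandomness bound on low-degree influences, followed by sending $\tau,\e \to 0$ and $d \to \infty$. The one detail you elide is that the $\mu_C$ are not globally consistent, so the paper passes to a disjoint-union instance and uses a tensoring/concentration argument (\pref{lem:averagevalue}) to upgrade the expected value $\geq c - k\e$ into the average-value hypothesis the $(c-\eta,s)$-approximate polymorphism actually requires --- exactly the step your $\epsilon$ slack is positioned to absorb.
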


In particular, the above theorem implies that the {\sc Basic-SDP}
relaxation yields an $\alpha_{\Lambda}$ (or $s_{\Lambda}(c)$)
approximation to \maxl.  This theorem is a consequence
of restating the soundness analysis of \cite{Raghavendra08} in the
language of approximate polymorphisms.  Specifically, one uses the
$\alpha$-approximate polymorphism to construct an $\alpha$-factor rounding algorithm
to the semidefinite program for \maxl (See
\iffull
\prettyref{sec:roundingcsps}
\else
Section 7.2 in the full version
\fi
for details).
\iffull
In the case of the algebraic dichotomy conjecture, the \NP-hardness
result is known, while an efficient algorithm for all CSPs via
polymorphisms is not known.  In contrast, the {\sc Basic SDP} relaxation
yields an efficient algorithm for all \maxl, but the
\NP-hardness for max-CSPs is open and equivalent to the unique
games conjecture.
\fi

The case of \maxl that are solvable
exactly (approximation ratio = $1$) has also received considerable
attention in the literature.
Generalising the algebraic approach to CSPs, algebraic
properties called multimorphisms \cite{CohenCJK06}, fractional
polymorphisms \cite{CohenCJ06} and
weighted polymorphisms \cite{CohenCCJZ12} have been proposed to study the complexity
of classes of valued CSPs.  The notion of $\alpha$-approximate polymorphism for $\alpha = 1$
is closely related to these notions (see
\iffull
\pref{sec:cspdefs}
\else
Section 7.1 in the full version
\fi
).

In a recent work, Thapper and Zivny \cite{ThapperZ12}
obtained a characterization of \maxl that can be solved
exactly.  Specifically, they showed that a valued CSP (a
generalization of maxCSP) is tractable if and only if it admits a
symmetric $1$-approximate polymorphism on two inputs.
This is further evidence supporting the
claim that approximate polymorphisms characterize approximability of
max-CSPs.

\iffull
\paragraph{Beyond CSPs}

It is natural to wonder if a similar theory of tractability could be developed
for classes of problems beyond constraint satisfaction problems.  For
instance, it would be interesting to understand the tractability of
\prb{Minimum Spanning Tree}, \prb{Minimum Matching}, the
intractability of \prb{TSP} \prb{Steiner Tree} and
the approximability of \prb{Steiner Tree}, \prb{Metric TSP} and
\prb{Network Design} problems.

It appears that tractable problems such as \prb{Minimum Spanning Tree}
and \prb{Maximum Matching} have certain ``operations'' on the solution
space.  For instance, the union of two perfect matchings contains
alternating cycles that could be used to create new matchings.
Similarly, spanning trees form a matroid and therefore given two
spanning trees, one could create a sequence of trees by exchanging
edges amongst them.  Furthermore, some algorithms for these problems crucially exploit
these operations.  More indirectly, \prb{Minimum Spanning Tree} and
\prb{Maximum Matching} are connected to submodularity via
polymatroids.  As we saw earlier, submodularity is an example of a
$1$-approximate polymorphism.

Moreover, here is a heuristic justification for the connection between
tractability and existence of operations to combine solutions.
Typically, a combinatorial optimization problem like \prb{Maximum
  Matching} is tractable because of the existence of a linear
programming relaxation $P$ and an associated rounding scheme $\cR$
(possibly randomized).  Given
a set of $k$ integral solutions $X^{(1)},X^{(2)},\ldots,X^{(k)}$,
consider any point in their convex hull, say $y = \frac{1}{k} \sum_{i
  \in [k]} X^{(i)}$.  By convexity, the point $y$ is also a feasible
solution to the linear programming relaxation $P$ .  Therefore, we
could execute the rounding scheme $\cR$ on the LP solution $y$ to
obtain a distribution over integral solutions.
Intuitively, $y$ has less information than $X^{(1)},\ldots,X^{(k)}$
and therefore the solutions output by the rounding scheme $\cR$ should
be different from $X^{(1)}, \ldots,X^{(k)}$.  This suggests that the
rounding scheme $\cR$ yields an operation to combine
solutions to create new solutions.  Recall that in case of CSPs, indeed polymorphisms are used to obtain rounding
schemes for the semidefinite programs (see \prettyref{thm:sdpgap}).
More recently, Barak, Steurer and Kelner \cite{BarakKS14} show
how certain algorithms to {\it combine} solutions can be used towards
rounding sum of squares relaxations.
\fi

\subsection{Our Results}

Algorithmically, one of the fundamental results in combinatorial
optimization is polynomial-time minimization of arbitrary submodular
functions.  Specifically, there exists an efficient algorithm to
minimize a submodular function $f$ given access to a value oracle for
$f$ \cite{Cunningham81,Cunningham83,Cunningham84,Schrijver00}.
Since submodularity is an example of a
fractional polymorphism, it is natural to conjecture that such an
algorithm exists whenever $f$ admits a certain fractional polymorphism.
Our main result is an efficient algorithm to minimize a function $f$
given access to its value oracle, provided $f$ admits an appropriate
fractional polymorphism.  Towards stating our result formally,
let us define some notation.

\begin{definition}
  A fractional polymorphism is said to be {\it measure-preserving} if for each $i
  \in [q]$ and for every choice of inputs,  the fraction
  of inputs equal to $i$ is always equal to the fraction of outputs
  equal to $i$.
\end{definition}
\begin{definition}
  An operation $p: [q]^k \to [q]$ is said to have a {\it
    transitive symmetry} if for all $i, j \in [k]$ there exists a
  permutation $\sigma_{ij} \in S_{k}$ of inputs such that
  $\sigma_{ij}(i) = j$ and $p \circ \sigma = p$.  A polymorphism $\cP$
  is said to be {\it transitive symmetric} if every operation $p \in
  \supp(\cP)$ is transitive symmetric.
\end{definition}
Notice that cyclic symmetry of the operation is a special case of
transitive symmetry.  As per our definitions, submodularity is a {\it measure-preserving}
and {\it transitive symmetric} polymorphism.

\begin{theorem} \label{thm:main}
  Let $f: [q]^n \to \R^+$ be a function that admits a fractional
  polymorphism $\cP$.  If $\cP$ is measure preserving and transitive
  symmetric then there exists an efficient randomized
  algorithm $\cA$ that for every $\epsilon > 0$, makes $\poly(1/\epsilon, n)$ queries to
  the values of $f$ and outputs an $x \in [q]^n$ such that,
  $$ f(x) \leq \min_{y \in [q]^n} f(y) + \epsilon
  \norm{f}_{\infty} $$
\end{theorem}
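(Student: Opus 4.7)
The plan is to reduce minimization of $f$ to a convex optimization problem on the product simplex $\Delta([q])^n$, solve that problem using value-oracle queries, and then round. I would define the extension $\hat f \colon \Delta([q])^n \to \R^+$ by
\[ \hat f(\pi_1,\dots,\pi_n) \;=\; \inf_{\mu} \E_{X \sim \mu} [f(X)], \]
where the infimum is over joint distributions $\mu$ on $[q]^n$ whose single-coordinate marginal at coordinate $i$ is $\pi_i$. Since optimal witnesses $\mu^{(1)}, \mu^{(2)}$ for marginals $\pi^{(1)}, \pi^{(2)}$ can be convex-combined and the combination is feasible for the corresponding convex combination of marginals, $\hat f$ is convex on $\Delta([q])^n$; since the delta distribution at a minimizer $x^*$ of $f$ is feasible for marginals $(\delta_{x_1^*},\dots,\delta_{x_n^*})$, we have $\min \hat f = \min f$.

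To minimize $\hat f$ via an ellipsoid or subgradient method, one must approximately evaluate $\hat f(\pi)$ and a subgradient using $\poly(n,q,1/\eps)$ value-oracle calls. This is where the fractional polymorphism enters essentially. Given marginals $\pi$, the plan is to construct a joint distribution with those marginals and expected value close to $\hat f(\pi)$ by iteration: initialize from a genuinely correlated distribution with marginals $\pi$ (for instance a mixture of point-masses sampled via prior queries), then repeatedly take $k$ i.i.d.\ draws, apply a random $p \sim \cP$ coordinate-wise, and update. The measure-preserving property preserves the target marginals $\pi$ exactly, and the fractional-polymorphism property guarantees that $\E[f]$ is non-increasing across iterations. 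Transitive symmetry should be the structural lever that drives this iteration, via a contraction argument, toward a joint distribution attaining the infimum in $\hat f(\pi)$.

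Once a near-optimal $\pi^* \in \Delta([q])^n$ with $\hat f(\pi^*) \leq \min f + \tfrac{\eps}{2}\normi{f}$ is found, we round by sampling a single $X$ from the joint distribution constructed above; then $\E[f(X)] \leq \hat f(\pi^*) + O(\eps \normi{f})$, and repeating $O(1/\eps)$ times and keeping the best sample yields an $x \in [q]^n$ with $f(x) \leq \min f + \eps \normi{f}$ with high probability. Gradient information for the outer convex method can likewise be read off differences of $\hat f$ at nearby marginals, estimated by the same sampling procedure.

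The main obstacle is the iterative construction in the second paragraph. A direct consequence of measure-preserving is that the coordinate-wise application of $\cP$ fixes every product distribution $\prod_i \pi_i$, so naive iteration from the product initialization makes no progress. The iteration must therefore be started from a correlated coupling of the marginals $\pi_i$, and giving a polynomial-time convergence rate will require a quantitative use of transitive symmetry, since without it $\cP$ could be essentially trivial. I expect most of the technical work to consist of identifying a Lyapunov functional on couplings that strictly decreases under the polymorphism-based update except at fixed points corresponding to the minimizers of $\hat f(\pi)$, and bounding the contraction rate in terms of the symmetry group of $\cP$.
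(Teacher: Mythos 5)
Your outer structure (the convex extension $\hat f$ on $\ssimp_q^n$, convex optimization with a value oracle, rounding by sampling) matches the paper, but the inner evaluation of $\hat f(\pi)$ --- which is where all the work is --- is aimed in the wrong direction, and as proposed it cannot work. You want the iteration $\mu \mapsto \cP(\mu)$, started from some correlated coupling you can construct, to converge to the coupling attaining the infimum in $\hat f(\pi)$. But the fractional-polymorphism property only gives that $\E[f]$ is non-increasing; it supplies no mechanism forcing the value down to the infimum, and the dynamics in fact move the opposite way: under transitive symmetric operations the correlations between coordinates \emph{decay}, so the iteration converges to (a mixture of) product distributions, not to the highly correlated optimal coupling. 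There is no Lyapunov functional of the kind you are looking for, because the limit points of the dynamics are essentially product distributions, and a product distribution with marginals $\pi$ has, a priori, no reason to attain $\hat f(\pi)$.

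The paper runs the same dynamical system but from the other end, and only in the analysis, never in the algorithm. Let $\mu$ be the \emph{unknown} optimal coupling with marginals $\pi$. Perturb it to $T_{1-\gamma}\circ\mu$ so all correlations are bounded away from $1$; this changes $\E[f]$ by at most $\gamma n \normi{f}$. Non-increase of $F$ under $\cP$ gives $F(\cP^{\otimes r}(T_{1-\gamma}\circ\mu)) \le \hat f(\pi) + \gamma n \normi{f}$, and the correlation-decay theorem (the technical core of the paper, proved via hypercontractivity and the Berry--Esseen theorem) shows that $\cP^{\otimes r}(T_{1-\gamma}\circ\mu)$ and $\cP^{\otimes r}(\mu^{\times})$ are within $\delta$ in statistical distance for polynomial $r$. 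Since $\cP$ is measure preserving, $\cP^{\otimes r}(\mu^{\times})$ still has marginals $\pi$, so its value is also at least $\hat f(\pi)$; and it is trivially sampleable from $\pi$ alone (track the $n$ marginal vectors through $r$ random operations and sample coordinates independently). So the quantity the algorithm estimates is exactly the value of the iterated product distribution --- the object you dismissed as ``making no progress'' --- and the entire content of the proof is showing this quantity is within $\delta\normi{f}$ of $\hat f(\pi)$ by coupling its trajectory to that of the unseen optimum. Without the correlation-decay theorem and this trick of running the dynamics from the unknown optimum purely in the analysis, your plan has no way to certify that the distribution your iteration produces is anywhere near the infimum. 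A secondary point: the paper uses Nesterov's gradient-free method rather than ellipsoid precisely because only a noisy value oracle for $\hat f$ is available, but that is a routine fix compared to the gap above.
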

Apart from submodularity, here is an example of a {\it measure
  preserving} and {\it transitive symmetric} polymorphism.
$$ \cP = \begin{cases} \mathrm{Majority}(x_1,x_2,x_3) & \text{ with
    probability } \nfrac{2}{3} \\ \mathrm{Minority}(x_1,x_2,x_3) &
  \text{with probability } \nfrac{1}{3} \end{cases} $$
Here $\mathrm{Minority}(0,0,0) = 0, \mathrm{Minority}(1,1,1) =
1$ and $\mathrm{Minority}(x_1,x_2,x_3) = 1-
\mathrm{Majority}(x_1,x_2,x_3)$ on the rest of the inputs.
On a larger domain $[q]$, a natural example of a fractional
polymorphism would be
$$ \cP = \begin{cases} \mathrm{max}(x_1,x_2,x_3) & \text{ with
    probability } \nfrac{1}{3} \\ \mathrm{min}(x_1,x_2,x_3) &
  \text{with probability } \nfrac{1}{3} \\
  \mathrm{median}(x_1,x_2,x_3) &
  \text{with probability } \nfrac{1}{3}
\end{cases}
$$
More generally, it is very easy to construct examples of fractional
polymorphisms that satisfy the hypothesis of \pref{thm:main}.

\subsection{Related Work}

Operations that give rise to tractability in the value oracle
model have received
considerable attention in the literature.  A particularly
successful line of work studies generalizations of submodularity over various lattices.
In fact, submodular functions given by an oracle can be
minimised on distributive lattices
\cite{IwataFF01,Schrijver00}, diamonds \cite{Kuivinen11}, and the
pentagon \cite{KrokhinL08} but the case of general non-distributive
lattices remains open.

An alternate generalization of submodularity known as {\it
  bisubmodular functions} introduced by Chandrasekaran and
Kabadi \cite{ChandrasekaranK88} arises naturally both in
theoretical and practical contexts (see
\cite{FujishigeI05,SinghGB12}).
Bisubmodular functions can be minimized in polynomial time
given a value oracle over domains
of size $3$ \cite{FujishigeI05,qi88} but the complexity is
open on domains of larger size \cite{HuberK12}.

Fujishige and Murota \cite{FujishigeM00} introduced the notion
of $L^{\sharp}$-convex functions -- a class of functions that
can also be minimized in the oracle model \cite{Murota04}.  In
recent work, Kolmogorov \cite{Kolmogorov10} exhibited
efficient algorithms to minimize {\it strongly
  tree-submodular} functions on binary trees, which is a common
generalization of $L^{\sharp}$-convex functions and
bisubmodular functions.

\subsection{Technical Overview}

The technical heart of this paper is devoted to understanding the
evolution of probability distributions over $[q]^n$ under iterated
applications of operations. Fix a probability distribution $\mu$ over $[q]^n$.  For an operation
$p:[q]^k \to [q]$, the distribution $p(\mu)$ over $[q]^n$ is one that
is sampled by taking $k$ independent samples from $\mu$ and applying the operation
$p$ to them.  Fix a sequence $\{p_i\}_{i=1}^\infty$ of operations with
transitive symmetries.  Define the dynamical system,
$$ \mu_t = p_t(\mu_{t-1}) \mcom$$
with $\mu_0 = \mu$.  We study the properties of the distribution $\mu_t$ as $t
\to \infty$.  Roughly speaking, the key technical insight of this work is that the
correlations among the coordinates decay as $t \to \infty$.

For example, let us suppose $\mu_0$ is such that for each $i \in [n]$, the $i^{th}$
coordinate of $\mu_0$ is not perfectly correlated with the first
$(i-1)$-coordinates.  In this case, we will show that $\mu_t$
converges in statistical distance to a product distribution as $t \to
\infty$ (see \pref{thm:corr-decay}).  From an algorithmic standpoint,
this is very valuable because even if $\mu_0$ has no succinct
representation, the limiting distribution $\lim_{t \to \infty} \mu_t$
has a very succinct description.   Moreover, since the operations $p_i$ are applied to each
coordinate separately, the marginals of the limiting distribution
$\lim_{t \to \infty} \mu_t$ are determined entirely by the marginals of the initial distribution $\mu_0$.
Thus, since the limiting distribution is a product distribution, it is completely determined by
the marginals of the initial distribution $\mu_0$!

Consider an arbitrary probability distribution $\mu$ over
$[q]^n$.  Let $T_{1-\gamma} \circ \mu$ denote the probability distribution over
$[q]^n$ obtained by sampling from $\mu$ and perturbing each coordinate
with a tiny probability $\gamma$.  For small $\gamma$, the statistical
distance between $\mu$ and $T_{1-\gamma} \circ \mu$ is small, i.e.,
$ \norm{T_{1-\gamma} \circ \mu - \mu}_1 \leq \gamma n \mper$
However, if we initialize the dynamical system with $\mu_0 =
T_{1-\gamma} \mu$ then irrespective of the starting distribution
$\mu$, the limiting distribution is always a product distribution (see
\pref{cor:noisy-corr-decay}).
\iffull
A brief overview of the correlation decay argument is presented in
\pref{sec:linearity}.  The details of the argument are fairly
technical and draw upon various analytic tools such as
hypercontractivity, the Berry-Esseen theorem and Fourier analysis (see
\pref{sec:corr-decay}).  A key bottleneck in the analysis is that the
individual marginals change with each iteration thereby changing the
fourier spectrum of the operations involved.
\fi

Recall that the algebraic dichotomy conjecture for exact CSPs asserts
that a CSP $\Lambda$ admits a cyclic polymorphism if and only if the
CSP $\Lambda$ is in $\mathsf{P}$.  It is interesting to note that the
correlation decay phenomena applies to cyclic terms.  Roughly
speaking, the algebraic dichotomy
conjecture could be restated in terms of correlation decay in the above
described dynamical system.  This characterization closely resembles
the absorbing subalgebras characterization by Barto and Kozik
\cite{BartoK12} derived using entirely algebraic techniques.

Our approach to prove \prettyref{thm:main} is as follows.
For any function $f: [q]^n \to \R$, one can define a convex extension
$\hat{f}$.  Let $\ssimp_q$ denote the $q$-dimensional simplex.
\begin{definition} (Convex Extension)
  Given a function $f : [q]^n \to \R$, define its convex
  extension on $\hat{f}: \ssimp_q^n \to \R$ to be
  \begin{displaymath} \hat{f}(z) \defeq \min_{\substack{\text{ p.d.f. } \mu \text{
          over } [q]^n\\ \mu_i = z_i}} \E_{x \sim \mu} \left[f(x)\right]
  \end{displaymath}
  where the minimization is over all probability distributions
  $\mu$ over $[q]^n$ whose marginals $\mu_i$ coincide with
  $z_i$.
\end{definition}
The convex extension $\hat{f}(z)$ is the minimum expected value of $f$
under all probability distributions over $[q]^n$ whose marginals are
equal to $z$.  As the name suggests, $\hat{f}$ is
a convex function minimizing which is equivalent to minimizing $f$.
Since $\hat{f}$ is convex, one could appeal to techniques from
convex optimization such as the ellipsoid algorithm to minimize
$\hat{f}$.  However it is in general intractable to even
evaluate the value of $\hat{f}$ at a given point in $\ssimp_q^n$.
In the case of a submodular function $f$, its convex extension $\hat{f}$ can be evaluated at a given point via a greedy
algorithm, and $\hat{f}$ coincides with the function known as
the Lovasz-extension of $f$.

We exhibit a randomized algorithm to evaluate the value of the convex
extension $\hat{f}$ when $f$ admits a fractional polymorphism.
Given a point $z \in \ssimp_q^n$, the randomized algorithm computes
the minimum expected value of $f$ over all probability distributions
$\mu$ whose marginals are equal to $z$. Let $\mu$ be the
optimal probability distribution that achieves the minimum.
Here $\mu$ is an unknown probability distribution which might
not even have a compact representation.
Consider the probability distribution $T_{1-\gamma} \circ \mu$
obtained by resampling each coordinate independently with probability $\gamma$.  The
probability distribution $T_{1-\gamma} \circ \mu$ is statistically
close to $\mu$ and therefore has approximately the same expected value
of $f$.

Let $\mu'$ be the limiting distribution obtained by iteratively
applying the fractional polymorphism $\cP$ to the perturbed optimal
distribution $T_{1-\gamma} \circ \mu$.  Since $\cP$ is a fractional polymorphism, the expected value
of $f$ does not increase on applying $\cP$.
Therefore, the limiting probability distribution $\mu'$ has an
expected value not much larger than the optimal distribution $\mu$.
Moreover, since $\cP$ is measure preserving, the limiting probability
distribution has the same marginals as $\mu_0$!  In other words, the
limiting distribution $\mu'$ has marginals equal to $z$ and achieves
almost the same expected value as the
unknown optimal distribution $\mu$.
By virtue of correlation decay (\pref{cor:noisy-corr-decay}), the limiting
distribution $\mu'$ admits an efficient sampling algorithm that we use
to approximately estimate the value of $\hat{f}(z)$.

\section{Background} \label{sec:prelims}

We first introduce some basic notation. Let $[q]$ denote the
alphabet $[q] = \{1,\ldots, q\}$.  Furthermore, let $\ssimp_q$
denote the standard simplex in $\R^{q}$, i.e.,
$$ \ssimp_q = \{x\in \R^q | x_i \geq 0 \ \ \ \; \forall i,\  \sum_i x_i = 1
\} \mper$$
For a probability distribution $\mu$ on the finite set $[q]$ we
will write $\mu^k$ to denote the product distribution on $[q]^k$
given by drawing $k$ independent samples from $\mu$.

If $\mu$ is a joint probability distribution on $[q]^n$ we will write
$\mu_1,\mu_2,\dots\mu_n$ for the $n$ marginal distributions of $\mu$.
Further we will use $\mu^\times$ to denote the product distribution
with the same marginals as $\mu$. That is we define
\[
\mu^\times \defeq \mu_1 \times \mu_2 \times \dots \times \mu_n \mper
\]

An operation $p$ of arity $k$ is a map $p : [q]^k \to [q]$.  For a set
of $k$ assignments $x^{(1)},\ldots,x^{(k)} \in [q]^n$, we will use
$p(x^{(1)},\ldots,x^{(k)}) \in [q]^n$ to be the assignment obtained by
applying the operation $p$ on each coordinate of $x^{(1)},\ldots,x^{(k)}$
separately. More formally, let $x^{(i)}_j$ be the $j$th coordinate of $x_i$.
We define
\[
p(x^{(1)}\dots x^{(k)}) = \Big(p(x^{(1)}_{1}\dots x^{(k)}_{1}),p(x^{(1)}_{2}\dots x^{(k)}_{2}),
\dots,p(x^{(1)}_{n}\dots x^{(k)}_{n}) \Big) \mper
\]
\iffull
More generally, an operation can be thought of as a map $p : [q]^k \to
\ssimp_q$. In particular, we can think of $p$ being given by maps
$(p_1\dots p_q)$ where $p_i:[q]^k \to \R$ is the indicator
\[
p_i(x) =  \left\{
  \begin{array}{lr}
    1 & : p(x) = i\\
    0 & : p(x) \neq i\\
  \end{array}
\right.\\
\]
\fi

We next define a method for composing two $k$-ary operations $p_1$ and $p_2$. The idea is to think of each of $p_1$ and $p_2$ as nodes with $k$ incoming edges and one outgoing edge. Then we take $k$ copies of $p_2$ and connect those $k$ outputs to each of the $k$ inputs to $p_1$. Formally, we define:
\begin{definition}
  For two operations $p_1 : [q]^{k_1} \to [q]$ and $p_2 : [q]^{k_2} \to
  [q]$, define an operation $p_1 \otimes p_2 : [q]^{k_1 \times k_2} \to [q]$ as
  follows:
  \begin{displaymath} p_1 \otimes p_2 (\{ x_{ij}\}_{i \in [k_1],j\in [k_2]}) = p_1\left(
      p_2(x_{11},x_{12},\ldots,x_{1k_2}), \ldots,p_2(x_{k_1 1},x_{k_1
        2},\ldots,x_{k_1k_2}) \right)\end{displaymath}
\end{definition}

Next we state the definition for polymorphisms of an arbitrary cost function $f: [q]^n \to \R$. Intuitively, a polymorphism for $f$ is a probability distribution over operations that, on average, decrease the value of $f$.
\begin{definition}
  A $1$-approximate polymorphism $\cP$ for a function $f: [q]^n \to \R$, consists
  of a probability distribution $\cP$ over maps $\cO = \{ p :
  [q]^k \to [q]\}$ such that for any set of $k$ assignments $x^{(1)},\ldots,x^{(k)} \in [q]^n$,
  \begin{displaymath} \frac{1}{k} \sum_i f(x^{(i)}) \geq \E_{p \sim \cP}
    [f(p(x^{(1)},\ldots,x^{(k)}))]\end{displaymath}
\end{definition}

\begin{definition}
  For a $1$-approximate polymorphism $\cP$, $\cP^{\otimes r}$ denotes the
  $1$-approximate polymorphism consisting of a distribution over operations defined as,
  $p_1 \otimes \ldots \otimes p_r$
  with  $p_1, \ldots,p_r$ drawn i.i.d from $\cP$.
\end{definition}


\section{Correlation Decay} \label{sec:corrdecay}
In this section we state our main theorem regarding the decay of correlation between random variables under repeated applications of transitive symmetric operations.
\iffull
We begin by defining a quantitative measure of correlation and using it to bound the statistical distance to a product distribution.
\subsection{Correlation and Statistical Distance}
To gain intuition for our measure of correlation consider the example of two boolean random variables $X$ and $Y$ with joint distribution $\mu$. In this case we will measure correlation by choosing real-valued test functions $f,g:\{0,1\}\to \R$ and computing $\E[f(X)g(Y)]$. We would like to define the correlation as the supremum of $\E[f(X)g(Y)]$ over all pairs of appropriately normalized test functions. There are two components to the normalization. First, we require $\E[f(X)] = \E[g(Y)] = 0$ to rule out the possibility of adding a large constant to each test function to increase $\E[f(X)g(Y)]$. Second, we require $\Var[f(X)] = \Var[g(Y)] = 1$ to ensure (by Cauchy-Schwarz) that $\E[f(X)g(Y)] \leq 1$.

To see that this notion of correlation makes intuitive sense, suppose $X$ and $Y$
are independent. In this case correlation is zero because $\E[f(X)g(Y)] = \E[f(X)]\E[g(Y)] = 0$. Next suppose that $X = Y = 1$ with probability $\frac{1}{2}$ and $X = Y = 0$ with probability $\frac{1}{2}$. In this case we can set $f(1) = g(1) = 1$ and $f(0) = g(0) = -1$ to obtain $\E[f(X)g(Y)] = 1$. This matches up with the intuition that such an $X$ and $Y$ are perfectly correlated. We now give the general definition for our measure of correlation.

\begin{definition}
  Let $X,Y$ be discrete-valued random variables with joint distribution $\mu$.
  Let $\Omega_1 = ([q_1], \mu_1)$ and $\Omega_2 = ([q_2],\mu_2)$
  denote the probability spaces corresponding to $X,Y$ respectively.
  The correlation $\rho(X,Y)$ is given by
  \begin{displaymath} \rho(X,Y) \defeq
    \sup_{\substack{f:[q_1]
        \to \R, g:[q_2] \to \R\\ \Var[f(X)] = \Var[g(Y)] = 1 \\\E[f(X)] =
        \E[g(Y)] = 0 }} \E[f(X)g(Y)] \end{displaymath}
  We will interchangeably use the notation $\rho(\mu)$ or $\rho(\Omega_1,\Omega_2)$ to denote the correlation.
\end{definition}

Next we show that, as the correlation for a pair of random variables $X$ and $Y$ becomes small, the variables become nearly independent. In particular, we show that $\rho(X,Y)$ can be used to bound the statistical distance of $(X,Y)$ from the product distribution where $X$ and $Y$ are sampled independently.
\begin{lemma} \label{lem:corr-stat-distance}
  Let $X,Y$ be discrete-valued random variables with joint distribution $\mu_{XY}$ and respective marginal distributions $\mu_X$ and $\mu_Y$. If $X$ takes values in $[q_1]$ and $Y$ takes values in $[q_2]$, then
  \begin{displaymath} \norm{\mu_{XY} - \mu_X \times \mu_Y}_1 \leq \min(q_1,q_2) \rho(X,Y)\end{displaymath}
\end{lemma}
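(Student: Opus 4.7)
The plan is to use the duality characterization of the $L^1$ norm: for any signed measure $\nu$ on $[q_1]\times[q_2]$, $\norm{\nu}_1 = \sup_{|h|\leq 1}\iprod{h,\nu}$, where the supremum is taken over test functions $h:[q_1]\times[q_2]\to[-1,1]$. Applying this with $\nu = \mu_{XY} - \mu_X\times\mu_Y$ reduces the lemma to bounding $\E_{\mu_{XY}}[h(X,Y)] - \E_{\mu_X\times\mu_Y}[h(X,Y)]$ uniformly in such $h$. Since the roles of $X$ and $Y$ are symmetric in this quantity, one may assume without loss of generality that $q_2 = \min(q_1,q_2)$; the bound with $q_1$ follows from the same argument with the coordinates swapped, and taking the minimum gives the lemma.

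The key step is to decompose $h$ along the $y$-coordinate as $h(x,y) = \sum_{j=1}^{q_2} h(x,j)\Ind[y=j]$. Expanding both expectations using this decomposition, the $j$th summand of their difference is exactly the covariance $\mathrm{Cov}_{\mu_{XY}}(h(X,j),\Ind[Y=j])$, since $h(X,j)$ is a function of $X$ alone and $\Ind[Y=j]$ is a function of $Y$ alone. Thus
\[
\E_{\mu_{XY}}[h(X,Y)] - \E_{\mu_X\times\mu_Y}[h(X,Y)] \;=\; \sum_{j=1}^{q_2} \mathrm{Cov}_{\mu_{XY}}\bigl(h(X,j),\,\Ind[Y=j]\bigr)\mper
\]

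To each term I would apply the definition of the maximal correlation $\rho(X,Y)$ with test functions $h(\cdot,j)$ and $\Ind[\cdot=j]$. After centering (which leaves the covariance unchanged) and normalizing to unit variance, the definition of $\rho$ yields
\[
\mathrm{Cov}_{\mu_{XY}}\bigl(h(X,j),\,\Ind[Y=j]\bigr) \;\leq\; \rho(X,Y)\,\sqrt{\Var h(X,j)}\,\sqrt{\Var \Ind[Y=j]} \;\leq\; \rho(X,Y)\mper
\]
Here I use $|h|\leq 1 \Rightarrow \Var h(X,j)\leq 1$, and $\Var\Ind[Y=j] = \mu_Y(j)(1-\mu_Y(j))\leq 1$. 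Summing the $q_2$ resulting inequalities yields the desired bound $q_2\,\rho(X,Y) = \min(q_1,q_2)\,\rho(X,Y)$.

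The main (mild) subtlety lies in the bookkeeping around the definition of $\rho$, which admits only mean-zero, unit-variance test functions. One must take care to center the slices $h(\cdot,j)$ and indicators $\Ind[Y=j]$ before normalizing and to verify that this centering does not alter the relevant covariance. Once that is handled, the proof is effectively a reduction from an arbitrary bounded test function supplied by $L^1$ duality to the rank-one product functions $f(x)g(y)$ that appear in the definition of maximal correlation.
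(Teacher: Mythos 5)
Your proof is correct and is essentially the paper's argument in mirror image: the paper writes out the optimal dual test function explicitly via signs $\sigma_{ab}$, slices over the values of $X$ (aggregating the $Y$-side into a single $\pm1$-valued function $Z_a$), and bounds each slice's covariance by $\rho(X,Y)$ times the product of standard deviations, obtaining $q_1\rho$ and then invoking symmetry — exactly your computation with the roles of the two coordinates swapped. The bookkeeping subtlety you flag (centering before normalizing does not change the covariance) is handled the same way in the paper, which simply writes the bound as $\Cov \leq \sqrt{\Var}\sqrt{\Var}\,\rho$.
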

\begin{proof}
  Let $\{X_{a}\}_{a \in [q_1]}$ be indicator variables for
  the events $X = a$ with $a \in [q_1]$.  Similarly, define the
  indicator variables $\{Y_b\}_{b \in [q_2]}$.

  The statistical distance between $\mu_{XY}$ and $\mu_X \times
  \mu_Y$ is given by
  \begin{align*}
    \norm{\mu_{XY} - \mu_X \times \mu_Y}_1
    & = \sum_{a \in [q_1],b \in [q_2]} | \Pr[X = a, Y=b] -
      \Pr[X=a]\Pr[Y=b] | \\
    & = \sum_{a \in [q_1],b \in [q_2]} | \E[X_a Y_b]  -
      \E[X_a]\E[Y_b] |
  \end{align*}
  Set $\sigma_{ab} = \sign(\E[X_a Y_b]  -
  \E[X_a]\E[Y_b])$ and $Z_a = \sum_{b \in [q_2]} \sigma_{ab}
  Y_b$.
  \begin{align*}
    \norm{\mu_{XY} - \mu_X \times \mu_Y}_1
    & = \sum_{a \in [q_1]}  \E[X_a Z_a]  -   \E[X_a]\E[Z_a] \\
    & = \sum_{a \in [q_1]}  \Cov[X_a,Z_a] \\
    & \leq \sum_{a \in [q_1]}  \sqrt{\Var[X_a]} \sqrt{\Var[Z_a]}
      \rho(X,Y) \\
    & \leq \sum_{a \in [q_1]}  \sqrt{\Var[X_a]} \rho(X,Y) \qquad (\Var[Z_a] \leq 1)\\
    & \leq q_1 \rho(X,Y)
  \end{align*}
  The result follows by symmetry.

\end{proof}
%

\subsection{Correlation Decay}
\fi
To begin we give an explanation of why
one should expect correlation to decay under repeated applications of symmetric
operations. Consider the simple example of two boolean random variables $X$ and
$Y$ with joint distribution $\mu$. Suppose $X = Y$ with probability $\frac{1}
{2}+\gamma$ and that the marginal distributions of both $X$ and $Y$ are uniform
on $\{0,1\}$. Let $p:\{0,1\}^k \to \{0,1\}$ be the majority operation on $k$
bits. That is $p(x_1\dots x_k) = 1$ if and only if the majority of $x_1\dots
x_k$ are one.

Next suppose we draw $k$ samples $(X_i,Y_i)$ from $\mu$ and evaluate $p(X_1\dots
X_k)$ and $p(Y_1\dots Y_k)$. Since the marginal distributions of both $X$ and
$Y$ are uniform, the same is true for $p(X_1\dots X_k)$ and $p(Y_1\dots Y_k)$.
However, the probability that $p(X_1\dots X_k) = p(Y_1\dots Y_k)$ is strictly
less than $\frac{1}{2}+\gamma$. To see why first let $F:\{-1,1\} \to \{-1,1\}$
be the majority function where $1$ encodes boolean $0$ and $-1$ encodes boolean
$1$. Note that the probability that $F(X_1\dots X_k) = F(Y_1\dots Y_k)$ is given
by $\frac{1}{2} + \frac{1}{2}\E[F(X_1\dots X_k)F(Y_1\dots Y_k)]$.

Now if we write the Fourier expansion of $F$ the above expectation is
\[
\sum_{S,T} \hat{F}_S \hat{F}_T
\E\left[\prod_{i\in S} X_i \prod_{j\in T} Y_j\right]
= \sum_{S} \hat{F}_S^2 \prod_{i\in S} \E[X_i Y_i] = \sum_{S} \hat{F}_S^2 (2\gamma)^{|S|}
\]
Suppose first that all the non-zero Fourier coefficients $\hat{F}_S$ have
$|S| = 1$. In this case the probability that $F(X_1\dots X_k) = F(Y_1\dots Y_k)$
stays the same since $\frac{1}{2} + \frac{1}{2}(2\gamma) = \frac{1}
{2}+\gamma$. However, in the case of majority, it is well known that $\sum_{|S|
  = 1} \hat{F}_S^2 < 1 - c$ for a constant $c > 0$. Thus, the expectation is in fact
given by
\[
\E[F(X_1\dots X_k) F(Y_1\dots Y_k)] \leq (1-c)(2\gamma) + c(2\gamma)^2 <
2\gamma
\]
Thus the probability that $F(X_1\dots X_k) = F(Y_1\dots Y_k)$ is strictly less
than $\frac{1}{2} + \gamma$. Therefore, if we repeatedly apply the majority
operation, we should eventually have that $X$ and $Y$ become very close to
independent.

There are two major obstacles to generalizing the above observation to arbitrary
operations with transitive symmetry. The first is that for a general operation
$p$, we will not be able to explicitly compute the entire Fourier expansion.
Instead, we will have to use the fact that $p$ admits a transitive symmetry to
get a bound on the total Fourier mass on degree-one terms. The second obstacle
is that, unlike in our example, the marginal distributions of $X$ and $Y$ may
change after every application of $p$. This means that the correct Fourier basis
to use also changes after every application of $p$.

The fact that the marginal distributions change under $p$ causes difficulties
even for the simple example of the boolean OR operation on two bits.
Consider a highly biased distribution over ${0,1}$ given by $X = 1$ with probability $\eps$ and $X = 0$ with probability
$1-\eps$. Now consider the function $f(X) = \frac{1}{2}(X_1 + X_2)$. Note that
this function agrees with $OR$ except when $X_1 \neq X_2$. Thus, $f(X) = OR(X)$
with probability $1-2\eps(1-\eps) > 1 - 2\eps$. This means that as $\eps$
approaches zero, $OR$ approaches a function $f$ with
\[\sum_{|S| = 1} \hat{f}_S^2 = 1.
\]
Thus, there are distributions for which the correlation decay under the OR
operation approaches zero. This means that we cannot hope to prove a universal
bound on correlation decay for every marginal distribution, even in this very
simple case. It is useful to note that for the OR operation, the
probability that $X = 1$ increases under every application. Thus, as long as
the initial distribution has a non-negiligible probability that $X = 1$, we will
have that correlation does indeed decay in each step. Of course, this particular
observation applies only to the OR operation. However, our proof in the general
case does rely on the fact that, using only properties of the initial
distribution of $X$ we can get bounds on correlation decay in every step.

In summary,
we are able to achieve correlation decay for arbitrary transitive symmetric
operations. We now state our main theorem to this effect.
\begin{theorem} (Correlation Decay)\label{thm:corr-decay}
  Let $\mu$ be a distribution on $[q]^n$. Let $X_1,\ldots,
  X_n$ be the jointly distributed $[q]$-valued random variables
  drawn from $\mu$.  Let $\rho = \max_i \rho(X_1,\ldots, X_{i-1},
  X_i) < 1$ and $\lambda$ be the minimum probability of an atom in the
  marginal distributions $\{ \mu_i \}_{i \in [n]}$.
  For any $\eta > 0$, the following holds for $r \geq \Omega_q\left(\frac{\log\lambda}{\log\rho}
    \log^2\left(\frac{qn}{\eta}\right)\right)$: If $p_1,\ldots, p_r
  : [q]^k \to [q]$ is a sequence of operations each of which
  admit a transitive symmetry then,
  \[ \norm{ p_1 \otimes p_2 \otimes \ldots \otimes p_r (\mu) -
    p_1 \otimes p_2 \otimes \ldots \otimes p_r(\mu^{\times})}_1
  \leq \eta
  \]
\end{theorem}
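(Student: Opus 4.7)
The plan proceeds in three phases: first, reduce the $L_1$ claim to per-coordinate maximal-correlation bounds; second, prove a single-step multiplicative contraction of correlation under any transitive-symmetric operation; third, iterate.

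Let $Y = p_1 \otimes \cdots \otimes p_r(\mu)$ and $Y^\times = p_1 \otimes \cdots \otimes p_r(\mu^\times)$. Since the operations act coordinatewise, $Y$ and $Y^\times$ share their marginals. A standard hybrid argument replaces the conditional law of $Y_i$ given $(Y_1,\ldots,Y_{i-1})$ by its marginal one coordinate at a time; combined with the correlation-to-statistical-distance lemma, this reduces the theorem to showing
\[
\rho\bigl((Y_1,\ldots,Y_{i-1});\, Y_i\bigr) \leq \frac{\eta}{qn} \qquad \text{for every } i \in [n].
\]

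The core technical lemma is a per-step contraction: if $\nu$ is a joint distribution on $[q]^n$ with $\rho_{\mathrm{in}} := \max_i \rho(X_1,\ldots,X_{i-1};X_i)$ and marginal atom masses all at least some $\lambda' > 0$, and $p$ is a transitive-symmetric operation, then the coordinatewise image $Z = p(X^{(1)},\ldots,X^{(k)})$ satisfies $\max_i \rho(Z_1,\ldots,Z_{i-1};Z_i) \leq \rho_{\mathrm{in}} \cdot (1 - c(\lambda',q))$ for some explicit $c > 0$. To prove it, I would fix optimal test functions $f,g$ of zero mean and unit variance, pull them back through $p$ to $\tilde f,\tilde g$ defined on $k$ i.i.d.\ blocks, and expand $\tilde g$ in the Efron--Stein / orthonormal Fourier basis adapted to the current marginal of $X_i$. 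The degree-$d$ component interacts with $\tilde f$ with weight at most $\rho_{\mathrm{in}}^d$, so degrees $\geq 2$ are controlled by $\rho_{\mathrm{in}}^2$. The degree-one part is where transitive symmetry is decisive: equality of influences forces $\Inf_j(g \circ p)$ to be the same for every $j \in [k]$, and the total-variance constraint therefore forces each single-block influence to be $O(1/k)$. An invariance/Berry--Esseen step then replaces the degree-one interaction by a bivariate Gaussian of correlation $\rho_{\mathrm{in}}$ whose maximal correlation is strictly smaller than $\rho_{\mathrm{in}}$, producing the claimed multiplicative contraction. Iterating across $r$ steps, each per-coordinate correlation drops to roughly $\rho \cdot (1-c)^r$, and the choice $r = \Omega_q\bigl(\tfrac{\log\lambda}{\log\rho}\log^2(qn/\eta)\bigr)$ pushes it below $\eta/(qn)$, closing the argument.

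The main obstacle is that the marginal laws of the $Y_i$ drift with each application of $p_t$: the Fourier basis, hypercontractive constants, and even the definition of ``influence'' are tied to a moving target, and the OR example after the intro to \pref{sec:corrdecay} shows that once a marginal is allowed to degenerate, the degree-one Fourier mass can concentrate and defeat contraction. This is precisely the role of $\lambda$: it must serve as a uniform floor that keeps the quantitative Fourier estimates bounded along the entire trajectory, and the $\log\lambda/\log\rho$ factor in $r$ is what pays for the slow, worst-case degradation of these constants. The delicate piece I expect to spend the most effort on is showing that the contraction constant $c(\lambda',q)$ can be taken to depend only on the initial $\lambda$ (rather than on some shrinking ``current'' $\lambda'$), either by arguing that atom masses cannot collapse below a fixed polynomial in $\lambda$ or by tracking a potential function that couples the drift of the marginals with the decay of the correlation.
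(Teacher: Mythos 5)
Your skeleton matches the paper's: a hybrid argument over coordinates plus \pref{lem:corr-stat-distance} reduces the theorem to per-coordinate maximal-correlation bounds, and the contraction itself is proved by splitting the pulled-back test function into its degree-one and higher-degree Fourier parts, using transitive symmetry to equalize the degree-one coefficients, and invoking a Berry--Esseen comparison to show a $q$-valued function cannot be almost purely linear. But there are two concrete gaps. First, the single-step contraction lemma you posit --- that one application of an arity-$k$ transitive-symmetric operation multiplies the correlation by $1-c(\lambda',q)$ --- is not obtainable by your route: the Berry--Esseen error is of order $\sum_j \E[|X_j|^3]$, and with only $k$ summands of variance $1/k$ this error is $\Theta(k^{-1/2}\cdot\norm{\phi}_3^3)$, which is not small for a fixed small arity (and is the reason a per-step constant-factor contraction would yield $r=O(\log(qn/\eta))$, not the $\log^2$ in the statement). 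The paper instead groups $a=\Theta(s\log(1/\eps))$ consecutive operations into a single composite operation $P_t$ of arity $K=k^a$ --- still transitive symmetric --- and only then applies \pref{lem:finitevsnormal}; the extra factor $a$ is exactly where the second logarithm in $r$ comes from (\pref{thm:fullcordecay}).

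Second, and more seriously, your proposed resolutions of the drifting-marginals problem do not work. The first (``atom masses cannot collapse below a fixed polynomial in $\lambda$'') is refuted by the paper's own OR example: under repeated OR the minimum atom mass decays doubly exponentially, so no polynomial floor in the initial $\lambda$ survives. The paper's actual mechanism never lower-bounds the current atom masses at all. It proves a trade-off (\pref{lem:threenorm}): for any singular vector $\phi_i$ of $T_{p(\mu)}$ with singular value $\sigma_i$, one has $\sigma_i^{2s}\norm{\phi_i}_3\le 1$ with $s=O(\log\lambda/\log\rho(\mu))$ depending \emph{only on the initial distribution}, obtained by writing $M^s_{p(\mu)}\phi_i$ as a pull-back through the composed operation into the original product space $L^2([q]^{K},\mu_1^{K})$ and applying hypercontractivity there (\pref{lem:hypercontractivity}). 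Thus singular vectors with large $L^3$ norm necessarily have small singular values and cannot carry correlation, which is what keeps the Berry--Esseen third-moment term controlled along the whole trajectory once the block arity $K$ is taken large enough to absorb the $\rho^{-O(s)}$ blow-up. This pull-back idea is the missing ingredient; without it (or a substitute), the contraction constant in your iteration degrades with the current, collapsing marginals and the argument does not close. A minor point: your ``bivariate Gaussian of correlation $\rho_{\mathrm{in}}$ whose maximal correlation is strictly smaller'' is not right --- the maximal correlation of a correlated Gaussian pair equals $|\rho_{\mathrm{in}}|$; the Gaussian comparison is used only to certify that the nonlinear part $H$ of $F$ has mass $\Omega(1/q^2)$, after which \pref{lem:cordecaysym} supplies the contraction.
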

We defer the proof of the theorem to \pref{sec:correlationdecay}.  Note that the theorem only applies when $\rho < 1$ i.e. when there are no perfect correlations between the $X_i$. To ensure that a distribution has no perfect correlations we can introduce a small amount of noise.
\iffull
\paragraph{Noise} For a probability distribution $\mu$ on $[q]^n$ let
$T_{1-\gamma}\circ\mu$ denote the probability distribution over
$[q]^n$ defined as:
\begin{itemize} 
\item Sample $X \in [q]^n$ from the distribution $\mu$.
\item For each $i \in [n]$, set
  \begin{displaymath}\tilde{X}_i = \begin{cases} X_i & \text{ with
        probability} 1- \gamma \\
      \text{sample from } \mu_i & \text{ with
        probability } \gamma
    \end{cases} \end{displaymath}
\end{itemize}
\else
For a probability distribution $\mu$ on $[q]^n$ let
$T_{1-\gamma}\circ\mu$ denote the probability distribution over
$[q]^n$ given by \emph{independently} re-sampling each coordinate with
probability $\gamma$.
\fi
We prove the following corollary that gives correlation decay for distributions with a small amount of added noise.
\begin{corollary} \label{cor:noisy-corr-decay}
  Let $\mu$ be any probability distribution over $[q]^n$ and let
  $\lambda$ denote the minimum probability of an atom in the
  marginals $\{ \mu_i \}_{i \in [n]}$.  For any $\gamma,\eta > 0$,
  given a sequence $p_1,\ldots, p_r :[q]^k \to [q]$ of
  operations with transitive symmetry with $r >
  \Omega_q\left(\frac{\log\lambda}{\log(1-\gamma)}
    \log^2\left(\frac{qn}{\eta}\right)\right)$
  $$ \norm{ p_1 \otimes p_2 \otimes \ldots \otimes p_r
    (T_{1-\gamma} \circ \mu) - p_1 \otimes p_2 \otimes
    \ldots \otimes p_r (\mu^{\times}) }\leq \eta $$
\end{corollary}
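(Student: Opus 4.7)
The plan is to reduce the corollary to a direct application of \pref{thm:corr-decay} applied to the noised distribution $\tilde\mu \defeq T_{1-\gamma}\circ\mu$. To do so, we need to verify that the parameters $\lambda$ and $\rho$ appearing in the hypothesis of the theorem are controlled in terms of the quantities $\lambda$ (of $\mu$) and $\gamma$ appearing in the corollary.

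First I would observe that $\tilde\mu$ has the same one-coordinate marginals as $\mu$: re-sampling coordinate $i$ from $\mu_i$ preserves the marginal distribution at coordinate $i$. Hence the ``minimum atom probability'' parameter $\lambda$ is the same for both $\mu$ and $\tilde\mu$, so it suffices to feed $\lambda$ into the theorem unchanged.

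The heart of the argument is the bound $\rho \leq 1-\gamma$ on the correlation parameter of $\tilde\mu$. Let $\tilde X_1,\ldots,\tilde X_n$ be drawn from $\tilde\mu$, constructed as follows: draw $X \sim \mu$; independently draw Bernoulli$(\gamma)$ indicators $B_1,\ldots,B_n$; independently draw $Y_j \sim \mu_j$; and set $\tilde X_j = X_j$ if $B_j=0$ and $\tilde X_j = Y_j$ if $B_j = 1$. Fix $i$ and let $f,g$ be test functions with $\E[f(\tilde X_{<i})] = \E[g(\tilde X_i)] = 0$ and unit variance. The crucial point is that $B_i$ and $Y_i$ are independent of the pair $(\tilde X_{<i}, X_i)$, since $B_i, Y_i$ are drawn independently of the other coordinates' noise and of $X_{<i}, X_i$. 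Conditioning on $B_i$ therefore gives
\begin{align*}
  \E[f(\tilde X_{<i})\, g(\tilde X_i)]
    &= (1-\gamma)\,\E[f(\tilde X_{<i})\, g(X_i)] + \gamma\,\E[f(\tilde X_{<i})\, g(Y_i)] \\
    &= (1-\gamma)\,\E[f(\tilde X_{<i})\, g(X_i)] + \gamma\,\E[f(\tilde X_{<i})]\,\E[g(Y_i)].
\end{align*}
The second term vanishes because $\E[f(\tilde X_{<i})] = 0$. For the first term, Cauchy--Schwarz combined with the normalizations $\E[f(\tilde X_{<i})^2] = 1$ and $\E[g(X_i)^2] = \E[g(\tilde X_i)^2] = 1$ (the latter uses that $X_i$ and $\tilde X_i$ have the same marginal $\mu_i$) gives $|\E[f(\tilde X_{<i})\,g(X_i)]| \leq 1$. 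Hence $\rho(\tilde X_{<i}, \tilde X_i) \leq 1-\gamma$ for every $i$, so the parameter $\rho$ of \pref{thm:corr-decay} applied to $\tilde\mu$ is at most $1-\gamma$.

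Finally, plugging $\rho \leq 1-\gamma$ into the quantitative bound of \pref{thm:corr-decay} and noting that $\tilde\mu^{\times} = \mu^{\times}$ (since the marginals are preserved) yields the statement. I expect the only subtlety to be the careful bookkeeping of independence in the argument for $\rho \leq 1-\gamma$ — in particular, ensuring that one may legitimately replace $\tilde X_i$ by the independent copy $Y_i$ on the event $B_i=1$ without coupling $Y_i$ to $\tilde X_{<i}$; this is clean because the noise variables are jointly independent and independent of $X$.
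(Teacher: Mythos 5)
Your proposal is correct and follows essentially the same route as the paper's own proof: condition on the resampling event to split $\E[f\,g]$ into a $(1-\gamma)$-weighted term bounded by Cauchy--Schwarz and a $\gamma$-weighted term that vanishes by independence and mean-zero normalization, conclude $\rho(T_{1-\gamma}\circ\mu)\leq 1-\gamma$, and invoke \pref{thm:corr-decay}. The only addition is your explicit remark that the marginals (hence $\lambda$ and $\mu^{\times}$) are unchanged by the noise, which the paper leaves implicit.
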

\iffull
\begin{proof}
  Let $f:[q]\to\R$ and $g:[q]^{i-1} \to \R$ be functions with
  $\E[f(\tilde{X}_i)] = \E[g(\tilde{X_1}\dots\tilde{X}_{i-1})] = 0$ and
  $\E[f(\tilde{X}_i)^2] = \E[g(\tilde{X_1}\dots\tilde{X}_{i-1})^2] = 1$ such
  that
  \[
  \E[f(\tilde{X}_i)g(\tilde{X_1}\dots\tilde{X}_{i-1})] =
  \rho(\tilde{X}_i,(\tilde{X_1}\dots\tilde{X}_{i-1}))
  \]
  That is, $f$ and $g$ achieve the maximum possible correlation between
  $\tilde{X}_i$ and $(\tilde{X_1}\dots\tilde{X}_{i-1})$. Let $Y_i$ be an
  independent random sample from the marginal $\mu_i$. Now we expand the
  above expectation by conditioning on whether or not $\tilde{X}_i$ was
  obtained by re-sampling from the marginal $\mu_i$.
  \[
  \E[f(\tilde{X}_i)g(\tilde{X_1}\dots\tilde{X}_{i-1})] =
  \E[f(X_i)g(\tilde{X_1}\dots\tilde{X}_{i-1})](1 - \gamma) +
  \E[f(Y_i)g(\tilde{X_1}\dots\tilde{X}_{i-1})]\gamma
  \]
  By Cauchy-Schwarz inequality, the first term above is bounded by
  \[
  \E[f(X_i)g(\tilde{X_1}\dots\tilde{X}_{i-1})](1 - \gamma) \leq
  \sqrt{\E[f(X_i)^2]\E[g(\tilde{X_1}\dots\tilde{X}_{i-1})^2]}(1 - \gamma)
  = 1 - \gamma
  \]
  where we have used the fact that $\E[f(X_i)^2] = \E[f(\tilde{X}_i)^2] = 1$.
  Since $Y_i$ is independent of $\tilde{X_1}\dots\tilde{X}_{i-1}$, the
  second term is
  \[
  \E[f(Y_i)g(\tilde{X_1}\dots\tilde{X}_{i-1})]\gamma =
  \E[f(Y_i)]\E[g(\tilde{X_1}\dots\tilde{X}_{i-1})]\gamma = 0
  \]
  Thus, we get $\rho(\tilde{X}_i,(\tilde{X_1}\dots\tilde{X}_{i-1})) \leq 1 -
  \gamma$ for all $i \in [n]$. The result then follows by applying Theorem
  \ref{thm:corr-decay} to $T_{1-\gamma} \circ \mu$.
\end{proof}
\fi
 \label{sec:linearity}
\section{Optimization in the Value Oracle Model}

	In this section, we will describe an algorithm to minimize a
	function $f: [q]^n \to \R$ that admits a $1$-approximate
	polymorphism given access to a value oracle for $f$.  We begin
	by setting up some notation.  

	Recall that for a finite set $A$, $\ssimp_A$ denotes the
	set of all probability distributions over $A$.
	For notational convenience, we will use the following
	shorthand for the expectation of the function $f$ over a
	distribution $\mu$. 
\begin{definition} (Distributional Extension)
	Given a function $f: [q]^n \to \R$, define its distributional
	extension $F: \ssimp_{[q]^n} \to \R$ to be
	\begin{displaymath} F(\mu) \defeq \E_{x \sim \mu} \left[f(x)\right] \end{displaymath}
\end{definition}
Given an operation $p: [q]^k \to [q]$ and a probability distribution
$\mu \in \ssimp_[q]^n$, define $p(\mu)$ to be the
distribution over $[q]^n$ that is sampled as follows:
\iffull
\begin{itemize} \itemsep0pt
	\item Sample $x_1,x_2, \ldots,x_k \in [q]^n$ independently from the
		distribution $\mu$.
	\item Output $p(x_1,x_2,\ldots,x_k)$.
\end{itemize}
\else
Sample $x_1,x_2, \ldots,x_k \in [q]^n$ independently from the
distribution $\mu$ and then output $p(x_1,x_2,\ldots,x_k)$.
\fi
Notice that for each coordinate $i \in [n]$, the $i^{th}$ marginal distribution
$(p(\mu))_i$ is given by $p(\mu_i)$.  More generally, if a $\cP$
denotes a probability distribution over operations then $\cP^{\otimes
r}(\mu)$ is
a distribution over $[q]^n$ that is sampled as follows:
\iffull
\begin{itemize} \itemsep0pt
	\item Sample operations $p_1,\ldots, p_r : [q]^k \to [q]$
		independently from the
		distribution $\cP$
	\item Output a sample from $p_1 \otimes p_2 \otimes \ldots p_r (\mu)$.
\end{itemize}
\else
Sample operations $p_1,\ldots, p_r : [q]^k \to [q]$ independently from the 
distribution $\cP$ and then output a sample from 
$p_1 \otimes p_2 \otimes \ldots p_r (\mu)$.
\fi

Suppose $\cP$ is a fractional polymorphism for the function $f$.  By
definition of fractional polymorphisms, the average value of $f$ does
not increase on applying the operations sampled from $\cP$.  More
precisely, we can prove the following.
\begin{lemma} \label{lem:frac-poly-dist}
For every distribution $\mu$ on $[q]^n$, and a fractional polymorphism $\cP$ of a
function $f : [q]^n \to \R$ and $r \in \N$,
$ F(\cP^{\otimes r}(\mu)) \leq F(\mu) $
where $F$ is the distributional extension of $f$
\end{lemma}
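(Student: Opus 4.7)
The plan is to proceed by induction on $r$, with the base case handled by directly integrating the defining inequality of a $1$-approximate polymorphism against the distribution $\mu$, and the inductive step handled by peeling off the outermost layer of the composition $p_1 \otimes \cdots \otimes p_r$.

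For the base case $r=1$, unfold the definition to write
\[
F(\cP(\mu)) = \E_{p \sim \cP}\,\E_{y_1,\ldots,y_k \sim \mu}\bigl[f(p(y_1,\ldots,y_k))\bigr],
\]
where the $y_i$ are drawn i.i.d.\ from $\mu$. Swap the order of expectation; the defining property of a fractional polymorphism gives $\E_{p \sim \cP}[f(p(y_1,\ldots,y_k))] \leq \tfrac{1}{k}\sum_i f(y_i)$ pointwise in the $y_i$. Taking expectation over the $y_i$ and using linearity then yields $F(\cP(\mu)) \leq \tfrac{1}{k}\sum_i \E[f(y_i)] = F(\mu)$.

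For the inductive step, the key observation is that the tensor operation factors through iterated application: if we draw $k^r$ i.i.d.\ samples from $\mu$ and group them into $k$ blocks of $k^{r-1}$, applying $p_2\otimes\cdots\otimes p_r$ to each block produces $k$ i.i.d.\ samples from $(p_2\otimes\cdots\otimes p_r)(\mu)$, and then $p_1$ is applied to these $k$ values. In other words, as distributions on $[q]^n$,
\[
(p_1\otimes p_2\otimes\cdots\otimes p_r)(\mu) \;=\; p_1\!\bigl((p_2\otimes\cdots\otimes p_r)(\mu)\bigr).
\]
Fix $p_2,\ldots,p_r$ and let $\nu = (p_2\otimes\cdots\otimes p_r)(\mu)$. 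Applying the base case argument with $\mu$ replaced by the arbitrary distribution $\nu$ gives $\E_{p_1 \sim \cP}[F(p_1(\nu))] \leq F(\nu)$. Averaging over $p_2,\ldots,p_r \sim \cP$ and invoking the inductive hypothesis on $\cP^{\otimes(r-1)}$ yields
\[
F(\cP^{\otimes r}(\mu)) \;\leq\; \E_{p_2,\ldots,p_r}[F(\nu)] \;=\; F(\cP^{\otimes(r-1)}(\mu)) \;\leq\; F(\mu),
\]
which completes the induction.

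There is no genuine obstacle here: once one verifies the factorization $(p_1\otimes\cdots\otimes p_r)(\mu) = p_1((p_2\otimes\cdots\otimes p_r)(\mu))$ by unrolling the definition of $\otimes$, the argument is a clean iteration of the base case. The only point requiring a modicum of care is that the fractional polymorphism property is an inequality involving the average over $\cP$, not individual operations, so one must be careful to average over the correct factor of $\cP$ at each stage of the induction rather than trying to treat $p_1,\ldots,p_r$ as fixed.
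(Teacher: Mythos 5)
Your proposal is correct and follows essentially the same route as the paper's proof: the base case by swapping the order of expectation and applying the defining inequality of the fractional polymorphism pointwise, and the inductive step by conditioning on the inner operations, applying the base case to the pushed-forward distribution $\nu = (p_2\otimes\cdots\otimes p_r)(\mu)$, and then averaging and invoking the inductive hypothesis. The factorization $(p_1\otimes\cdots\otimes p_r)(\mu) = p_1\bigl((p_2\otimes\cdots\otimes p_r)(\mu)\bigr)$ that you verify explicitly is exactly the identity the paper uses implicitly when it writes $\cP^{\otimes r+1}(\mu)$ as an average of $\cP$ applied to $p_2\otimes\cdots\otimes p_{r+1}(\mu)$.
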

\iffull
\begin{proof}
	We will prove this result by induction on $r$.  First, we prove
	the result for $r = 1$.
  	\begin{align*}
		F(\cP(\mu)) & =  \E_{x \sim \cP(\mu)} \left[ f(x) \right] \\
	& =  \E_{p \sim \cP} \E_{x_1,\ldots, x_k \sim \mu} \left[
		f(p(x_1,\ldots, x_k)) \right] \\
	& =   \E_{x_1,\ldots, x_k \sim \mu} \left[\E_{p \sim \cP} \left[
		f(p(x_1,\ldots, x_k)) \right]\right] \\
		&\leq \E_{x_1,\ldots, x_k \sim \mu} \left[
			\frac{1}{k} \sum_i f(x_i) \right]  = F(\mu)
	\end{align*}
	The last inequality in the above calculation uses the fact
	that $\cP$ is a fractional polymorphism for $f$.	
	Suppose the assertion is true for $r$, now we will prove the
	result for $r+1$.  Observe that the distribution $\cP^{\otimes
	r+1} (\mu)$ can be written as,
	$$ \cP^{\otimes r+1}(\mu)  = \E_{p_2,\ldots, p_{r+1} \sim \cP}
	\left [
		\cP( p_2 \otimes \ldots \otimes p_r(\mu)) \right]
		$$
	where $p_2,\ldots, p_r$ are drawn independently from the
	distribution $\cP$.  Hence we can write,
	\begin{align*}
	F( \cP^{\otimes r+1}(\mu)) &= \E_{p_2,\ldots, p_{r+1}\sim
\cP} \left[
		F(\cP(p_2 \otimes \ldots \otimes p_{r+1} (\mu)))
		\right] \\
&	\leq	 \E_{p_2,\ldots, p_{r+1}\sim
\cP} \left[
		F(p_2 \otimes \ldots \otimes p_{r+1} (\mu))
		\right] \qquad \text{ using base case}\\
	&= F( \cP^{\otimes r}(\mu))  \leq F(\mu) \mcom
	\end{align*}
	where the last inequality used the induction hypothesis for
	$r$.
\end{proof}
\fi

	%
	%
	%
	%
	%
	%

Recall that $\mu^{\times}$ is the product distribution with the same
marginals as $\mu$.  We will show the following using correlation
decay.

\begin{lemma} \label{lem:main}
	Let $\cP$ be a fractional polymorphism with a transitive symmetry for a function $f :
	[q]^n \to [q]$.  Let $\mu$ be a probability distribution over 
	$[q]^n$ and let $\lambda$ denote the minimum probability of an atom in the
	marginals $\{ \mu_i \}_{i \in [n]}$. For $\gamma = \frac{\delta}{2n}$ and $r 
	= \Omega_q\left(\frac{\log \lambda}{\log(1-\gamma)} \log^2\left(\frac{2qn}
	{\delta}\right)\right)$ 
	\begin{displaymath}F(\cP^{\otimes r}(\mu^{\times}))  \leq
		F(\mu) +  \delta \norm{f}_{\infty} \end{displaymath}
\end{lemma}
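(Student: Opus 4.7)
The plan is to chain together three ingredients: closeness of $T_{1-\gamma} \circ \mu$ to $\mu$ in statistical distance, monotonicity of $F$ under $\cP^{\otimes r}$ (\pref{lem:frac-poly-dist}), and correlation decay for the noisy distribution (\pref{cor:noisy-corr-decay}). The strategy is to move from $\cP^{\otimes r}(\mu^\times)$ to $\cP^{\otimes r}(T_{1-\gamma} \circ \mu)$ via correlation decay, then down to $F(T_{1-\gamma} \circ \mu)$ via the polymorphism property, and finally to $F(\mu)$ via the noise bound.

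First I would observe that each coordinate of $T_{1-\gamma} \circ \mu$ has marginal $\mu_i$, matching those of both $\mu$ and $\mu^\times$; in particular the parameter $\lambda$ appearing in \pref{cor:noisy-corr-decay} is identical for the two distributions. Coupling $T_{1-\gamma} \circ \mu$ to $\mu$ in the natural way (for each coordinate $i$, resample from $\mu_i$ with probability $\gamma$, otherwise copy) and taking a union bound shows that the two samples disagree with probability at most $\gamma n$, whence $|F(T_{1-\gamma} \circ \mu) - F(\mu)| \leq \gamma n \normi{f}$ since $f \geq 0$. Applying \pref{lem:frac-poly-dist} then gives $F(\cP^{\otimes r}(T_{1-\gamma} \circ \mu)) \leq F(T_{1-\gamma} \circ \mu)$ with no further work.

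For the remaining gap, I invoke \pref{cor:noisy-corr-decay}, which bounds the $L_1$ distance between $p_1 \otimes \cdots \otimes p_r(T_{1-\gamma} \circ \mu)$ and $p_1 \otimes \cdots \otimes p_r(\mu^\times)$ by $\eta$ for any \emph{fixed} sequence of transitive-symmetric operations, once $r$ satisfies the stated lower bound with $\eta = \delta/2$. Since $\cP^{\otimes r}$ is the mixture over the i.i.d.\ choice $p_1,\ldots,p_r \sim \cP$, a one-line triangle inequality transfers this bound to the averaged distributions, giving $\norm{\cP^{\otimes r}(T_{1-\gamma} \circ \mu) - \cP^{\otimes r}(\mu^\times)}_1 \leq \eta$, and hence $|F(\cP^{\otimes r}(T_{1-\gamma} \circ \mu)) - F(\cP^{\otimes r}(\mu^\times))| \leq \tfrac{\eta}{2}\normi{f}$ by the Lipschitz property of $F$ with respect to total variation.

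Chaining the three estimates,
\[
F(\cP^{\otimes r}(\mu^\times)) \;\leq\; F(T_{1-\gamma} \circ \mu) + \tfrac{\eta}{2}\normi{f} \;\leq\; F(\mu) + \Bigl(\gamma n + \tfrac{\eta}{2}\Bigr)\normi{f} \;\leq\; F(\mu) + \delta \normi{f},
\]
after substituting $\gamma = \delta/(2n)$ and $\eta = \delta/2$, which also matches the stated $r = \Omega_q\bigl(\tfrac{\log\lambda}{\log(1-\gamma)} \log^2(2qn/\delta)\bigr)$. I do not anticipate a serious obstacle; the only delicate step is extending \pref{cor:noisy-corr-decay} from a fixed operation sequence to the random polymorphism $\cP^{\otimes r}$, which is dispatched by the $L_1$ triangle inequality.
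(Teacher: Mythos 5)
Your proposal is correct and follows essentially the same route as the paper's proof: perturb to $T_{1-\gamma}\circ\mu$, apply \prettyref{cor:noisy-corr-decay} with $\eta=\delta/2$ and average over $p_1,\ldots,p_r\sim\cP$ to compare $\cP^{\otimes r}(T_{1-\gamma}\circ\mu)$ with $\cP^{\otimes r}(\mu^{\times})$, then chain with \prettyref{lem:frac-poly-dist} and the bound $\norm{\mu-T_{1-\gamma}\circ\mu}_1\le\gamma n$. The only difference is an immaterial factor of $2$ in how you convert statistical distance to a difference of expectations, which still lands within the stated $\delta\norm{f}_\infty$.
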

\iffull
\begin{proof}
	Consider the distribution $T_{1-\gamma} \circ \mu$.  By
	definition of $T_{1-\gamma} \circ \mu$, all the correlations
	within $T_{1-\gamma} \circ \mu$ are at most $1-\gamma$.
	Roughly speaking, with repeated applications of the operations from $\cP$
	all the correlations will vanish.  More precisely, by
	\prettyref{cor:noisy-corr-decay}, for any sequence of
	operations $p_1,\ldots, p_r$ with transitive symmetry we have
	\begin{equation}\label{eq:tensorpoly1}
	\norm{ p_1 \otimes p_2 \otimes \ldots \otimes p_r
		(T_{1-\gamma} \circ \mu) - p_1 \otimes p_2 \otimes
		\ldots \otimes p_r (\mu^{\times}) }\leq
		\frac{\delta}{2} 
	\end{equation}
	Recall that $\cP^{\otimes r}(\mu')$ for a distribution $\mu'$ is given by,
	$ \cP^{\otimes r}(\mu') = \E_{p_1,\ldots, p_r} \left[p_1 \otimes p_2
	\ldots \otimes p_r (\mu') \right]$.  Averaging
	\eqref{eq:tensorpoly1} over all choices of $p_1,\ldots p_r$
	from $\cP$ we get
	\begin{equation} \label{eq:tensorpoly2}
	 \norm{ \cP^{\otimes r}
	(T_{1-\gamma} \circ \mu) - \cP^{\otimes r} (\mu^{\times})
}\leq \frac{\delta}{2}
	\end{equation} 
Now we are ready to finish the proof of the lemma.	
	\begin{align*}
F(\cP^{\otimes r}(\mu^{\times})) %
	&\leq
	F(\cP^{\otimes r}(T_{1-\gamma} \circ \mu )) + \frac{\delta}{2}
	\norm{f}_{\infty} \qquad \text{ using 
		(\eqref{eq:tensorpoly2})}\\
		& \leq F(T_{1-\gamma} \circ \mu) +  \frac{\delta}{2}
		\norm{f}_{\infty}  \qquad \text{ using
			\prettyref{lem:frac-poly-dist}} \\
		& \leq F(\mu) + \delta \norm{f}_\infty \qquad \text{
		using } \norm{\mu - T_{1-\gamma} \circ \mu}_1 \leq
		\gamma n
\end{align*}
\end{proof}
\fi
Suppose we are looking to minimize the function $f$ or equivalently
the distributional extension $F$.  In general, the minima for $F$
could be an arbitrary distribution with no succinct (polynomial-sized)
representation.  The preceding lemma shows that $\cP^{\otimes
r}(\mu^{\times})$ has roughly the same value of $F$.  But
$\cP^{\otimes r}(\mu^{\times})$ not only has a succinct representation,
but is efficiently sampleable given the marginals of $\mu$!  In the
rest of this section, we will use this insight towards minimizing $f$
given a value oracle.
\iffull
\subsection{Convex Extension}  
\fi
\begin{definition} (Convex extension)
	For a function $f: [q]^n \to \R$, the convex extension
	$\hat{f}: (\ssimp_q)^n \to \R$ is defined as,
	$$ \hat{f}(w) \defeq \min_{ \substack{\mu \in \ssimp_{[q]^n}\\ \mu_i
= w_i \forall i \in [n]}}  E_{x \in \mu} [f(x)] \mcom $$
	where the minimization is over all probability distributions
	$\mu$ over $[q]^n$ whose marginals are given by $w \in
	(\ssimp_q)^n$.
\end{definition}
As the name suggests, $\hat{f}$ is a convex function whose minimum is
equal to the minimum of $f$.  In general, the convex extension of a function $f$ cannot be computed
efficiently since the optimal distribution $\mu$ might not even have a
succinct representation. In our case, however, we can prove:

\begin{theorem}\label{thm:convex-estimate}
	Suppose a function $f : [q]^n \to [q]$ admits a fractional
	polymorphism $\cP$ such that: (1) Each operation $p : [q]^k \to [q]$ in the support of $\cP$ has a transitive symmetry, (2) the polymorphism $\cP$ is measure preserving.
	Then there is an algorithm that given $\epsilon >0$ and $w \in
	\ssimp_q^n$, runs in time $\poly(n,\frac{1}{\epsilon})$ and
	computes $\hat{f}(w) \pm \epsilon \norm{f}_\infty$.
\end{theorem}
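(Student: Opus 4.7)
The plan is to reduce evaluating $\hat{f}(w)$ to estimating $F(\cP^{\otimes r}(w^\times))$ for $r = \poly(n,1/\epsilon)$, and then to exploit a conditional-product structure of $\cP^{\otimes r}(w^\times)$ to estimate this quantity by Monte Carlo. For the reduction I will establish $\hat{f}(w) \leq F(\cP^{\otimes r}(w^\times)) \leq \hat{f}(w) + \epsilon\norm{f}_\infty$. The lower bound uses measure preservation: since $\cP^{\otimes r}$ preserves coordinate marginals in expectation over the sampled operations, $\cP^{\otimes r}(w^\times)$ has marginals $w$, and hence $F(\cP^{\otimes r}(w^\times)) \geq \hat{f}(w)$ by definition of the convex extension. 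The upper bound follows by taking $\mu^*$ to be any distribution achieving the infimum in $\hat{f}(w)$ (so $(\mu^*)^\times = w^\times$) and applying \prettyref{lem:main} with $\mu = \mu^*$ and $\delta = \epsilon$. If some coordinate of $w$ has a zero entry (making $\lambda = 0$), I first replace $w$ by $(1-\epsilon)w + \epsilon u$ for $u$ the product uniform distribution, which makes all marginals bounded below; convexity of $\hat{f}$ together with boundedness of $f$ controls the resulting perturbation in $\hat{f}$ by $O(\epsilon)\norm{f}_\infty$.

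The key structural observation is that $\cP^{\otimes r}(w^\times)$ is a \emph{mixture of product distributions} indexed by the sampled operations. Condition on a sampled tree $p_1,\ldots,p_r$ of operations. The composed operation $p_1\otimes\cdots\otimes p_r$ is applied coordinate-wise to $k^r$ i.i.d.\ samples from $w^\times$, so for each coordinate $i$ the $i$th output coordinate depends only on the $i$th coordinates of those base samples, which are i.i.d.\ from $w_i$; moreover the sets of input variables governing different output coordinates are disjoint by the product structure of $w^\times$, so the conditional output distribution factorizes across coordinates. Furthermore the marginal $\pi_r^{(i)}$ of the $i$th output coordinate admits an iterative computation from leaves to root: set $\pi_0^{(i)} = w_i$ and, for $t = 1,\ldots,r$, let $\pi_t^{(i)}$ be the distribution of $p_{r-t+1}(Y_1,\ldots,Y_k)$ where $Y_1,\ldots,Y_k$ are i.i.d.\ from $\pi_{t-1}^{(i)}$. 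Each such push-forward costs $O(q^{k+1})$ arithmetic operations, so the full vector of marginals is produced in $O(nrq^{k+1})$ time.

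The resulting algorithm repeats $N = O(\log(1/\delta)/\epsilon^2)$ times: sample $p_1,\ldots,p_r \sim \cP$, compute the marginals $\{\pi_r^{(i)}\}_{i\in[n]}$ as above, draw one sample $X_j$ from the product $\pi_r^{(1)} \times \cdots \times \pi_r^{(n)}$, and query $f(X_j)$; it then outputs the empirical mean $\frac{1}{N}\sum_j f(X_j)$. Each $X_j$ is an exact i.i.d.\ sample from $\cP^{\otimes r}(w^\times)$, so Hoeffding's inequality yields concentration within $\epsilon\norm{f}_\infty$ with high probability, and combining with the two-sided bound from the first paragraph yields the required estimate of $\hat{f}(w)$ in $\poly(n,1/\epsilon)$ time. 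The main obstacle is the structural step: a naive sampler for $\cP^{\otimes r}(w^\times)$ would need $k^r$ base samples per output vector, which is super-polynomial since $r = \poly(n,1/\epsilon)$; recognizing the conditional product structure replaces an exponential-depth tree of samples by an iterative polynomial-time computation of one-dimensional marginal distributions, and this is what makes the whole algorithm efficient.
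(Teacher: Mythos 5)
Your proposal is correct and follows essentially the same route as the paper: perturb $w$ to keep the minimum marginal probability bounded away from zero, sandwich $\hat{f}$ via \prettyref{lem:main} (upper bound) and measure preservation of $\cP$ (lower bound), and estimate $F(\cP^{\otimes r}(w^{\times}))$ by Monte Carlo using the coordinate-wise iterative push-forward of the marginals, which is exactly the paper's sampling procedure $z_i = p(z_{i-1})$. The only cosmetic difference is the parametrization of the marginal perturbation (mixing with uniform versus additive smoothing and renormalizing), which changes nothing.
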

\iffull
\begin{proof}
	Given $w \in \ssimp_q^n$, we first perturb every coordinate slightly to 
	ensure that the minimum probability in each marginal $w_i$ is bounded away 
	from $0$. In particular, we define $w'$ by setting $b = \frac{\eps}{10qn}$ 
	and
	\[
	w'_i(a) = \frac{w_i(a) + b}{1 + qb}
	\]
	for all $a\in [q]$ and $i\in [n]$. Clearly $w' \in \ssimp_q^n$ and $w'_i \geq 
	\frac{b}{1 + qb}$. Now let 
	$$ \mu = \underset{\substack{\mu \in
		\ssimp_{[q]^n}\\ \mu_i
= w'_i \forall i \in [n]}}{\operatorname{argmin}}   E_{x \in \mu} [f(x)] \mcom $$
	denote the
	optimal distribution over $[q]^n$ that achieves the minimum in
	$\hat{f}(w')$.  Fix $\delta = \frac{\eps}{10}$ and note that
	\[
	\frac{\log b}{\log (1 - \frac{\delta}{2n})} = O_q\left(n\eps^{-1}\log(n\eps^{-1})\right)
	\]	
	Now we claim that for $r = \Omega\left(n\eps^{-1}\log^3(n\eps^{-1})\right)$ 
	we have
	$$ F(\cP^{\otimes r}(\mu^{\times})) - \delta \norm{f}_\infty
	\leq \hat{f}(w') = F(\mu) \leq
	F(\cP^{\otimes r}(\mu^{\times}) \mper$$
	The left-hand inequality follows from \prettyref{lem:main}.
	The right hand side inequality follows because
	$\cP^{\otimes r} (\mu^{\times})$ has its marginals equal to
	$w'$ since $\cP$ is {\it measure preserving} and $\mu$ is
	the optimal distribution with marginals $w'$.  Moreover, observe
	that the distribution $\cP^{\otimes r}(w')$ can be sampled
	efficiently as described below.
\vspace{10pt}
\begin{mybox}
	Input: $w' \in \ssimp_q^n$\\
	Output:  Sample from $\cP^{\otimes r}(w')$
\begin{itemize} \itemsep=0ex
		\item	Fix $z_0 = w'$
		\item For $i = 1$ to $r$ do,
		\begin{itemize}\itemsep=0ex
			\item Sample $p \in \cP$ and set $z_i =
				p(z_{i-1})$.
			\end{itemize}
		\item Sample $x \in [q]^n$ by sampling each coordinate
			independently from $z_r$.
	\end{itemize}

\end{mybox}
\vspace{15pt}
	
	In order to estimate $\hat{f}(w')$, it is sufficient to sample
	$\cP^{\otimes r}(w)$ independently $O\left(\frac{n}{\eps^2}\right)$ times to 
	compute $F(\cP^{\otimes r}(\mu^{\times}))$ to accuracy within $\frac{\eps}
	{10}\norm{f}_\infty$ with high probability. Thus, we can estimate $\hat{f}
	(w')$ to accuracy $\left(\delta + \frac{\eps}{10}\right)\norm{f}_\infty = 
	\frac{\eps}{5}\norm{f}_\infty$. 
	
	Next let $\pi$ be the distribution that achieves the optimum for marginals 
	$w$ i.e. $\hat{f}(w) = F(\pi)$. By changing each marginal of $\pi$ by at most 
	$b$ we obtain a distribution $\pi'$ with marginals given by $w'$. By 
	optimality of $\mu$ we know
	\[
	\hat{f}(w') = F(\mu) \leq F(\pi') = F(\pi) \pm bqn\norm{f}_\infty = 
	\hat{f}(w) \pm \frac{\eps}{10}\norm{f}_\infty
	\]
	By a symmetric argument we get that $\hat{f}(w) \leq \hat{f}(w') \pm 
	\frac{\eps}{10}\norm{f}_\infty$. Thus, we conclude that 
	$|\hat{f}(w) - \hat{f}(w')| \leq \frac{\eps}{10}\norm{f}_\infty$. 
	Therefore we can estimate $\hat{f}(w)$ to accuracy 
	$(\frac{\eps}{10} + \frac{\eps}{5})\norm{f}_\infty \leq 
	\eps\norm{f}_\infty$. 
	In summary, 	this yields an algorithm running in time 
	$O_q\left(n^2\eps^{-3}\log^3(n\eps^{-1}) \right)$ to estimate $\hat{f}(w)$
	within an error of $\eps\norm{f}_\infty$ with high probability.
\end{proof}
\else
\begin{proof}[Proof sketch]
	Given $w \in \ssimp_q^n$, we first perturb every coordinate slightly to 
	ensure that the minimum probability in each marginal $w_i$ is bounded away 
	from $0$. Now let $\mu$ denote the optimal distribution over $[q]^n$ that 
	achieves the minimum in $\hat{f}(w)$.
	Now we claim that for $r = \Omega\left(n\eps^{-1}\log^3(n\eps^{-1})\right)$ 
	we have
	$$ F(\cP^{\otimes r}(\mu^{\times})) - \delta \norm{f}_\infty
	\leq \hat{f}(w) = F(\mu) \leq
	F(\cP^{\otimes r}(\mu^{\times}) \mper$$
	The left-hand inequality follows from \prettyref{lem:main}.
	The right hand side inequality follows because
	$\cP^{\otimes r} (\mu^{\times})$ has its marginals equal to
	$w$ since $\cP$ is {\it measure preserving} and $\mu$ is
	the optimal distribution with marginals $w$.  Moreover, observe
	that the distribution $\cP^{\otimes r}(w)$ can be sampled
	efficiently:
	First set $z_0 = w'$. Next, for $i = 1\dots r$ sample $p \in \cP$ and set $z_i = p(z_{i-1})$. Finally, sample $x \in [q]^n$ by drawing each coordinate
			independently from $z_r$.
\end{proof}
\fi

\paragraph{Gradient-Free Minimization}

In the previous section, we have demonstrated that the convex
extension $\hat{f}: \ssimp_q^n \to \R$ can be computed efficiently.
In order to complete the proof of our main theorem (\pref{thm:main}),
we will exhibit an algorithm to minimize the convex function
$\hat{f}$.
The domain of the convex function $\hat{f}$, namely $\ssimp_q^n$ is
particularly simple.  However, the convex function $\hat{f}$ is not
necessarily smooth (differentiable).  More importantly, we only have
an evaluation oracle for the function $\hat{f}$ with no access to its
gradients or subgradients.
Gradient-free optimization has been extensively studied (see \cite{Nesterov11,Spall97} and
references therein) and there are numerous approaches that have been
proposed in literature.
At the outset, the idea is to estimate the gradient via one or more
function evaluations in the neighborhood.
In this work, we will appeal to the randomized algorithms by gaussian
perturbations proposed by Nesterov \cite{Nesterov11}.
The following result is a restatement of Theorem $5$ in the work of
Nesterov \cite{Nesterov11}. 

\newcommand{\diam}{\Delta}
\begin{theorem} \label{thm:nesterov}
	Let $\hat{f}: \R^N \to \R$ be a non-smooth convex function given by an
evaluation oracle $\cO$.  Let us suppose $f$ is
$L$-Lipshitz,i.e.,
$$ |\hat{f}(x) - \hat{f}(y)| \leq L \norm{x-y} \qquad \forall x,y \in \R^N
$$
Let $Q \sse \R^N$ be a closed convex set
given by a projection oracle $\pi_Q : \R^N \to Q$.  Let 
$\diam(Q) \defeq \sup_{x,y \in Q} \norm{x-y}$.

There is a randomized algorithm that given $\epsilon > 0$, with
constant probability finds $y \in
Q$ such that,
$$ |\hat{f}(y) - \min_{x \in Q} \hat{f}(x)| \leq \e\mcom$$
using $$\frac{ (N+4)L^2 \diam(Q)^2}{\epsilon^2}$$ calls to the oracle
$\cO$ and $\pi_Q$. 
\end{theorem}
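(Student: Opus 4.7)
The plan is to replace the non-smooth $\hat{f}$ by a Gaussian-smoothed surrogate, construct a two-query unbiased stochastic estimator of its gradient using only the evaluation oracle $\cO$, and then run projected stochastic gradient descent on $Q$.

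For a smoothing parameter $\mu > 0$, I would define
$f_\mu(x) \defeq \E_u[\hat{f}(x + \mu u)]$, where $u$ is a standard Gaussian in $\R^N$. As a mixture of convex functions $f_\mu$ is convex, and since the Gaussian density is smooth one can differentiate under the expectation to see that $f_\mu$ is differentiable everywhere. The $L$-Lipschitz property of $\hat f$ yields the uniform approximation $|f_\mu(x) - \hat f(x)| \leq L\mu\,\E\norm{u} \leq L\mu\sqrt{N}$. Gaussian integration by parts (Stein's identity) gives
$\nabla f_\mu(x) = \E_u\bigl[\tfrac{1}{\mu}(\hat f(x+\mu u) - \hat f(x))\,u\bigr]$,
so $g(x,u) \defeq \tfrac{1}{\mu}(\hat f(x+\mu u) - \hat f(x))\,u$ is an unbiased stochastic gradient of $f_\mu$ at $x$ that costs exactly two calls to $\cO$.

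Next I would run projected stochastic gradient descent on $Q$: at step $t$, draw a fresh $u_t$, form $g_t = g(x_t,u_t)$, and update $x_{t+1} = \pi_Q(x_t - \eta\, g_t)$. A standard analysis of projected stochastic subgradient for convex objectives on a bounded domain (step size $\eta \sim \diam(Q)/\sqrt{T\,\E\norm{g}^2}$, output the running average $\bar{x}_T$) gives
$\E f_\mu(\bar x_T) - \min_{x\in Q} f_\mu(x) \leq O\bigl(\diam(Q)\sqrt{\E\norm{g}^2}/\sqrt{T}\bigr)$.
Combining this with the smoothing error $|f_\mu - \hat f| \leq L\mu\sqrt{N}$, choosing $\mu$ so that the smoothing error is at most $\epsilon/4$, and then picking $T = \Theta\bigl((N+4)L^2\diam(Q)^2/\epsilon^2\bigr)$ yields $\E \hat f(\bar x_T) - \min_{Q}\hat f \leq \epsilon/2$; Markov's inequality then upgrades the expectation bound to a constant-probability guarantee at the cost of a constant-factor blow-up in $T$.

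The main obstacle will be establishing the sharp second-moment bound $\E\norm{g(x,u)}^2 \leq (N+4)L^2$. The naive estimate $\norm{g(x,u)} \leq L\norm{u}^2$ only gives $L^2\E\norm{u}^4 = L^2 N(N+2)$, which would cost an extra factor of $N$ in the oracle complexity. Obtaining the advertised $(N+4)$ scaling requires arguing via the Stein identity itself, bounding the second moment of $g$ against a rescaled directional derivative of $\hat f$ along $u$ rather than against the global Lipschitz constant; this is exactly the content of Nesterov's Lemma 4, and it is the one place where the proof genuinely uses more than Lipschitz continuity plus off-the-shelf SGD analysis.
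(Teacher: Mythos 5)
First, a structural point: the paper does not prove this statement at all --- it is introduced as ``a restatement of Theorem 5 in the work of Nesterov \cite{Nesterov11}'' and used as a black box, so there is no internal proof to compare against. Your outline (Gaussian smoothing $f_\mu(x)=\E_u[\hat f(x+\mu u)]$, the Stein-identity estimator $g(x,u)=\mu^{-1}(\hat f(x+\mu u)-\hat f(x))u$ costing two oracle calls, projected stochastic gradient with averaging, and Markov to turn the in-expectation bound into a constant-probability one) is exactly the architecture of the cited source, so at the level of strategy you have reconstructed the intended proof.

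The one step that would not go through as you describe it is precisely the one you flag as the main obstacle. The pointwise bound $\E_u\norm{g(x,u)}^2\le (N+4)L^2$ is \emph{false} for a genuinely nonsmooth Lipschitz function: take $\hat f(x)=L\norm{x}$ and $x=0$; then $g(0,u)=L\norm{u}\,u$ and $\E\norm{g(0,u)}^2=L^2\,\E\norm{u}^4=L^2N(N+2)$, which is the naive bound you wanted to beat. Your proposed remedy --- controlling the second moment via the directional increment of $\hat f$ along $u$ rather than the global Lipschitz constant --- does not help at such a point, since there the increment equals $L\mu\norm{u}$ in \emph{every} direction; the directional-derivative argument only recovers $(N+2)\norm{\nabla \hat f(x)}^2\le(N+2)L^2$ where $\hat f$ is differentiable. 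The bound that is actually available pointwise for merely Lipschitz $f$ is the weaker $L^2(N+4)^2$, and the linear-in-$N$ oracle complexity in the nonsmooth case is recovered by different bookkeeping: either one works with the smoothed function $f_\mu$ (which is $C^{1,1}$ with $\norm{\nabla f_\mu}\le L$) and couples the choice of $\mu$ with the step size, as Nesterov does, or one centers the estimator at $f_\mu(x)$ and invokes sub-Gaussian concentration of the Lipschitz map $u\mapsto\hat f(x+\mu u)$ to get $\E\norm{g}^2=O(N)L^2$, as in the later treatments of Duchi--Jordan--Wainwright--Wibisono and Shamir. So the skeleton of your argument is right and matches the cited proof, but the lemma you defer to cannot be the literal statement $\E\norm{g}^2\le(N+4)L^2$, and closing that gap requires one of these genuinely different arguments rather than a sharper use of Lipschitzness alone.
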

In our setup, the domain $Q = \ssimp_q^n \in \R^{nq}$ is a very
simple closed convex set with diameter $\Delta(Q) \leq \sqrt{2n}$.
The convex extension $\hat{f}$ is $L$-Lipschitz for $L =
\norm{f}_\infty$.   
Finally, we turn to the proof of our main theorem, which follows more or less 
immediately from the results in this section.
\begin{proof}[Proof of \prettyref{thm:main}]
By \prettyref{thm:convex-estimate} we can estimate the convex 
function $\hat{f}(w)$ with sufficient accuracy to apply 
\prettyref{thm:nesterov} to approximately minimize $\hat{f}(w)$ in time 
$\poly(1/\epsilon,n,\norm{f}_{\infty})$. Since $\hat{f}(w)$ is 
the convex extension of $f$, the output of the algorithm approximately minimizing 
$\hat{f}(w)$ will also approximately minimize $f$.
\end{proof}

\section{Analytic Background}
In this section we introduce the analytic background necessary to prove Theorem \ref{thm:corr-decay}.
\subsection{Analysis on Finite Measure Spaces}
We begin by introducing the space of functions that will be fundamental to our analysis.
\begin{definition}
  Let $\mu$ be a probability distribution on $[q]$. Define $L^2([q],\mu)$ to be the inner product space of all functions $f:[q] \to \R$ with inner product given by
  \[
  \langle f,g \rangle = \E_{x\sim\mu}[f(x)g(x)] \mper
  \]
\end{definition}

For a function $f\in L^2([q],\mu)$ the $L^p$-norm of $f$ is defined as
\[
\norm{f}_p = \E_{x\sim\mu}[f(x)^p]^{\frac{1}{p}} \mper
\]
We will also often use the equivalent notation $L^2(\Omega)$ where $\Omega = ([q],\mu)$ is the corresponding probability space. We next introduce a useful basis for representing functions $f:[q]^k \to \R$.

\paragraph{Multilinear Representation}

Let $\mu$ be a distribution on $[q]$.  Pick an orthonormal basis
$\phi = \{\phi_0 \equiv 1, \ldots, \phi_{q-1}\}$ for the set of functions from
$[q]$ to $\R$ denoted by $L^2([q],\mu)$.

The tensor product of the orthonormal basis $\phi$ gives an orthonormal basis for
functions in $L^2([q]^k, \mu^k)$.  Specifically, for a multi-index $\alpha \in
\{0,1,\ldots,q-1\}^k$, let $\phi_{\alpha} : [q]^k \to \R$ denote the function,
$\phi_{\alpha}(x) = \prod_{i = 1}^k \phi_{\alpha_i}(x_i)$.  It is easy
to see that $\phi_{\alpha}$ forms an orthonormal basis for space of
functions $L^2([q]^k, \mu^k)$.

We can write any function $p : [q]^k \to \R$ in the $\phi_\alpha$ basis as
\begin{displaymath}p(x) = \sum_{\alpha \in \{0,1,\ldots,q-1\}^k} \hat{p}_{\alpha} \phi_{\alpha}(x)
\end{displaymath}
where we often refer to the $\hat{p}_\alpha$ as the Fourier coefficients of $p$. For a multi-index $\alpha \in \{0,1,\ldots, q-1\}^k$,
let $|\alpha| = |\{ i | \alpha_i \neq 0\}|$.  The degree of a term
$\phi_{\alpha}(x)$ is given by $|\alpha|$.

Let $p : [q]^k \to \ssimp_q$ be a $k$-ary operation represented by
$q$-real valued functions $p = (p_1,\ldots, p_{q})$.  Associated with every $k$-ary operation $p$ is a subspace of $L^2(\mu^k)$
given by
\begin{displaymath} \Sp(p) = \{ \text{ span of } p_1,\ldots, p_q \}\end{displaymath}

\paragraph{Noise Operator}
For a function $p : [q]^k \to \R$ and $\rho \in [0,1]$, define $T_{\rho} p$ as
\begin{displaymath} T_{\rho} p (x) = \E_{y \sim_{\rho} x}[p(y)]\end{displaymath}
where $y \sim_{\rho} x$ denotes the following distribution,
$$ y_i = \begin{cases} x_i & \text{ with probability } \rho \\ \text{
    indpendent sample from } \mu & \text{ with probability } 1-\rho
\end{cases}$$
The multilinear polynomial associated with $T_{\rho} p$ is given by
\begin{displaymath} T_{\rho} p =  \sum_{\alpha} \hat{p}_{\alpha}
  \rho^{|\alpha|} \phi_{\alpha}(x)
\end{displaymath}

\subsection{The Conditional Expectation Operator}
In this section we introduce the conditional expectation operator associated with
a joint probability distribution. The singular values of this operator encode
information about correlation, and thus the singular vectors provide a useful
basis in which to analyze correlation decay.

Let $\mu$ be a joint distribution on $[q_1] \times [q_2]$ with marginals $\mu_1$
and $\mu_2$. Further let $\Omega_1 = ([q_1],\mu_1)$ and $\Omega_2 = ([q_2],
\mu_2)$ denote the probability spaces corresponding to $\mu_1$ and $\mu_2$.
\begin{definition}
  The conditional expectation operator $T_\mu:L^2(\Omega_2)\to
  L^2(\Omega_1)$ is given by
  \[
  (T_\mu f)(x) = \E_{(X,Y)\sim\mu}[f(Y) | X = x]
  \]
\end{definition}
It follows from the definition that
\[
\E_{(X,Y)\sim\mu}[f(X)g(Y)] = \E_{X\sim\mu_1}[f(X)(T_\mu g) (X)]
\]
The adjoint operator $T_\mu^*:L^2(\Omega_1) \to L^2(\Omega_2)$ is given by
\[
(T^*_\mu g)(y) = \E_{(X,Y)\sim\mu}[g(X) | Y = y]
\]
It is possible to choose a singular value decomposition $\{\phi_i,\psi_j\}$ of
the operator $T_{\mu}$ such that $\phi_0 = \psi_0 \equiv 1$. Let $\sigma_i$
denote the singular value corresponding to the pair $\phi_i,\psi_i$. That is
$T_\mu \psi_i = \sigma_i \phi_i$. It is easy to check the following facts
regarding this singular value decomposition.

\begin{fact}
  Let $\{\phi_i,\psi_j\}$ be the above singular value decomposition of $T_\mu$.
  \begin{itemize}
  \item The functions $\phi_i$ and $\psi_i$ form orthonormal bases for
    $L^2(\Omega_1)$ and $L^2(\Omega_2)$ respectively.
  \item $\sigma_0 = 1$ and $\sigma_1 = \rho(\mu)$.
  \item The expectation over $\mu$ of the product of any pair $\phi_i$ and $\psi_i$
    is given by
    \[
    \E_{(x,y)\sim\mu}[\phi_i(x)\psi_j(y)] =  \left\{
      \begin{array}{lr}
        \sigma_i & : i = j\\
        0 & : i \neq j
      \end{array}
    \right.\\
    \]
  \end{itemize}
\end{fact}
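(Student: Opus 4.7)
The plan is to verify each of the three claims more or less directly from the definition of the SVD together with the special structure of $T_\mu$ as a conditional expectation. Since the marginals $\mu_1$ and $\mu_2$ are supported on finite sets, $L^2(\Omega_1)$ and $L^2(\Omega_2)$ are finite-dimensional inner-product spaces, so $T_\mu$ admits a singular value decomposition in which $\{\psi_i\}$ is an orthonormal basis of $L^2(\Omega_2)$ (eigenvectors of $T_\mu^* T_\mu$), $\{\phi_i\}$ is an orthonormal basis of $L^2(\Omega_1)$ (eigenvectors of $T_\mu T_\mu^*$), and $T_\mu \psi_i = \sigma_i \phi_i$ with $\sigma_0 \geq \sigma_1 \geq \dots \geq 0$. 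This already gives item (1); the only point requiring care is the convention $\phi_0 = \psi_0 \equiv 1$, which I justify in the next step.

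For item (2), the key observation is that $T_\mu$ preserves the constant function: $(T_\mu 1)(x) = \E[1 \mid X = x] = 1$, and likewise for $T_\mu^*$. Since a conditional expectation is an $L^2$-contraction, the operator norm of $T_\mu$ is at most one, and it is attained by $\psi \equiv 1$. Hence the top singular value is $1$ and I may choose $\psi_0 = \phi_0 \equiv 1$ as the corresponding pair, so $\sigma_0 = 1$. To identify $\sigma_1$ with $\rho(\mu)$, I unpack the definition of correlation: the constraints $\E[f(X)] = \E[g(Y)] = 0$ are precisely $f \perp \phi_0$ in $L^2(\Omega_1)$ and $g \perp \psi_0$ in $L^2(\Omega_2)$, while $\Var[f(X)] = \Var[g(Y)] = 1$ combined with mean zero gives $\|f\|_2 = \|g\|_2 = 1$. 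Rewriting $\E[f(X) g(Y)] = \langle f, T_\mu g \rangle_{L^2(\Omega_1)}$, we get
\[
\rho(\mu) \;=\; \sup_{\substack{\|f\|_2 = \|g\|_2 = 1 \\ f \perp \phi_0,\; g \perp \psi_0}} \langle f, T_\mu g \rangle_{L^2(\Omega_1)},
\]
which is exactly the operator norm of $T_\mu$ restricted to the map $\psi_0^{\perp} \to \phi_0^{\perp}$. By the standard variational (Courant--Fischer) characterization of singular values, this operator norm equals the second largest singular value of $T_\mu$, i.e.\ $\sigma_1$.

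For item (3), both parts follow in one line from $T_\mu \psi_j = \sigma_j \phi_j$, adjointness, and orthonormality of $\{\phi_i\}$:
\[
\E_{(x,y)\sim\mu}[\phi_i(x)\psi_j(y)] \;=\; \langle \phi_i,\, T_\mu \psi_j \rangle_{L^2(\Omega_1)} \;=\; \sigma_j \langle \phi_i,\phi_j \rangle_{L^2(\Omega_1)} \;=\; \sigma_j\,\delta_{ij},
\]
which equals $\sigma_i$ when $i = j$ and vanishes otherwise.

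I do not anticipate any substantive obstacle; the claim is essentially an unpacking of standard SVD terminology combined with two specific features of $T_\mu$ (contractivity and preservation of constants). The only nontrivial ingredient is recognizing maximal correlation as the operator norm of $T_\mu$ on the orthogonal complement of the constants, which is a classical identity. All other pieces are routine.
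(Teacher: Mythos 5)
Your proof is correct, and it is the standard verification that the paper itself omits (the Fact is stated with ``it is easy to check'' and no proof is given). All three items go through as you argue: contractivity of conditional expectation plus $T_\mu 1 = T_\mu^* 1 = 1$ pins down $\sigma_0 = 1$, the identity $\E[f(X)g(Y)] = \langle f, T_\mu g\rangle$ together with the Courant--Fischer characterization on the orthogonal complement of the constants gives $\sigma_1 = \rho(\mu)$ (noting $T_\mu$ maps $\psi_0^\perp$ into $\phi_0^\perp$ since $T_\mu^*\phi_0 = \psi_0$), and item (3) is immediate from adjointness and orthonormality.
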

Thus, for functions $F\in L^2([q_1]^k,\mu_1^k)$ and $G\in L^2([q_2]^k,\mu_2^k)$
we can write their multilinear expansions with respect to tensor powers of the
$\phi$ and $\psi$ bases respectively. In particular, for a multi-index $\alpha
\in \{0,\dots, q_1-1\}^k$ and $x\in [q_1]^k$ we have the basis functions
\[
\phi_\alpha (x) = \prod_i \phi_{\alpha_i}(x_i)
\]
with $\psi_\beta(y)$ defined similarly. We then have the following fact
\begin{fact}
  The vectors $\{\phi_\alpha\}_{\alpha \in [q_1]^k}$, $\{\psi_\beta\}_{\beta \in [q_2]^k}$ are a singular value decomposition of $T_\mu^{\otimes k}$ with corresponding singular values $\sigma_\alpha = \prod_i \sigma_{\alpha_i}$. Furthermore $T_{\mu^k} = T_\mu^{\otimes k}$.
\end{fact}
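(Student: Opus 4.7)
The plan is to handle the two assertions of the fact in turn, first establishing the product-structure identity $T_{\mu^k} = T_\mu^{\otimes k}$ and then deducing the singular value decomposition from the one for $T_\mu$ by general tensor-product nonsense.

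First I would verify $T_{\mu^k} = T_\mu^{\otimes k}$ via the product structure of $\mu^k$. Under $(X,Y) \sim \mu^k$, the pairs $(X_i, Y_i)$ are mutually independent and each is distributed as $\mu$, so conditional on $X = x$ the coordinates $Y_1, \dots, Y_k$ remain independent with $Y_i \mid X_i = x_i$ distributed according to the conditional from $\mu$. Consequently, for any product function $f(y) = \prod_i f_i(y_i)$,
\[
(T_{\mu^k} f)(x) = \E[f(Y) \mid X = x] = \prod_i \E[f_i(Y_i) \mid X_i = x_i] = \prod_i (T_\mu f_i)(x_i) = (T_\mu^{\otimes k} f)(x) \mper
\]
Since product functions span $L^2([q_2]^k, \mu_2^k)$ and both operators are linear and bounded, the identity extends to all of $L^2([q_2]^k, \mu_2^k)$.

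Next I would exploit the known singular value decomposition $T_\mu \psi_i = \sigma_i \phi_i$. Applying $T_\mu^{\otimes k}$ coordinate-wise to the tensor basis vector $\psi_\beta = \psi_{\beta_1} \otimes \cdots \otimes \psi_{\beta_k}$ yields
\[
T_\mu^{\otimes k} \psi_\beta = (T_\mu \psi_{\beta_1}) \otimes \cdots \otimes (T_\mu \psi_{\beta_k}) = \Big(\prod_{i=1}^k \sigma_{\beta_i}\Big)\, \phi_{\beta_1} \otimes \cdots \otimes \phi_{\beta_k} = \sigma_\beta\, \phi_\beta \mcom
\]
which gives exactly the claimed singular value $\sigma_\alpha = \prod_i \sigma_{\alpha_i}$ and pairs $(\phi_\alpha, \psi_\alpha)$. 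To conclude that this really is an SVD I would check that $\{\phi_\alpha\}$ and $\{\psi_\beta\}$ are orthonormal bases of $L^2([q_1]^k, \mu_1^k)$ and $L^2([q_2]^k, \mu_2^k)$ respectively; this is immediate from the product form of the inner product on a product measure space together with orthonormality of $\{\phi_i\}, \{\psi_j\}$, namely
\[
\langle \phi_\alpha, \phi_{\alpha'}\rangle_{\mu_1^k} = \prod_i \langle \phi_{\alpha_i}, \phi_{\alpha_i'}\rangle_{\mu_1} = \prod_i \mathbf{1}[\alpha_i = \alpha_i'] \mper
\]

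There is no real obstacle here; the proof is essentially bookkeeping about tensor products. The only mildly nontrivial step is the conditional-independence observation underlying $T_{\mu^k} = T_\mu^{\otimes k}$, and even that is a direct consequence of $\mu^k$ being a product measure. Everything else is the general principle that the SVD of a tensor product of operators is the tensor product of their SVDs, applied in a single orthonormal basis where $\phi_0 = \psi_0 \equiv 1$.
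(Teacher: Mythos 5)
Your proof is correct, and it supplies exactly the routine verification that the paper leaves implicit (the paper states this as a Fact with no proof): the factorization $T_{\mu^k}=T_\mu^{\otimes k}$ from the product structure of $\mu^k$, the computation $T_\mu^{\otimes k}\psi_\beta=\bigl(\prod_i\sigma_{\beta_i}\bigr)\phi_\beta$, and the orthonormality of the tensor bases. Nothing is missing.
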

Using the above facts about the singular value basis it follows that
\[
\E_{(x,y)\sim\mu^k}[\phi_\alpha(x)\psi_\beta(y)] =  \left\{
  \begin{array}{lr}
    \sigma_\alpha & : \alpha = \beta\\
    0 & : \alpha \neq \beta
  \end{array}
\right.\\
\]
We can then write the multi-linear expansion of $F$ with respect to $\phi$ as
\[
F(x) = \sum_\alpha \hat{F}_\alpha \phi_\alpha(x)
\]
As an immediate consequence of the above discussion we have
\begin{fact}
  Let $F\in L^2([q_1]^k,\mu_1^k)$ and $G\in L^2([q_2]^k,\mu_2^k)$. Then
  \[
  \E_{(x,y)\sim\mu^k}[F(x)G(y)] = \sum_\alpha \hat{F}_\alpha \hat{G}_\alpha \sigma_\alpha
  \]
\end{fact}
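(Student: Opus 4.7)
The plan is to expand both $F$ and $G$ in their respective tensor-product singular-vector bases, apply linearity of expectation, and then invoke the orthogonality relation stated in the immediately preceding fact to collapse the resulting double sum into the desired single sum indexed by $\alpha$.

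More concretely, first I would write $F(x) = \sum_{\alpha} \hat{F}_\alpha \phi_\alpha(x)$ using the multilinear expansion given just before the statement, and analogously $G(y) = \sum_{\beta} \hat{G}_\beta \psi_\beta(y)$ (the $\psi_\beta$ form an orthonormal basis for $L^2([q_2]^k,\mu_2^k)$ by the tensor-power fact that the $\{\phi_\alpha\},\{\psi_\beta\}$ give a singular value decomposition of $T_\mu^{\otimes k}$). Then I would substitute and use linearity of expectation to obtain
\[
\E_{(x,y)\sim\mu^k}[F(x)G(y)] = \sum_{\alpha,\beta} \hat{F}_\alpha \hat{G}_\beta \, \E_{(x,y)\sim\mu^k}\!\left[\phi_\alpha(x)\psi_\beta(y)\right].
\]

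Next I would apply the orthogonality identity displayed immediately before the fact, namely that the expectation $\E_{(x,y)\sim\mu^k}[\phi_\alpha(x)\psi_\beta(y)]$ equals $\sigma_\alpha$ when $\alpha = \beta$ and vanishes otherwise. This kills every off-diagonal term in the double sum and leaves exactly $\sum_{\alpha} \hat{F}_\alpha \hat{G}_\alpha \sigma_\alpha$, which is the claimed identity.

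There is essentially no obstacle here: the result is a direct bookkeeping consequence of the singular value decomposition statement already established, combined with linearity. The only mild care needed is to justify swapping the expectation past the (possibly infinite, but here finite since $[q_i]^k$ is finite) sums, which is immediate. A cleaner alternative presentation is to recognize that the identity is simply the statement that $T_{\mu^k} = T_\mu^{\otimes k}$ acts diagonally in the $(\phi_\alpha,\psi_\alpha)$ basis with eigenvalues $\sigma_\alpha$, applied to $\iprod{F, T_{\mu^k} G}$; I would use whichever formulation reads most smoothly given the exposition.
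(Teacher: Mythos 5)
Your proposal is correct and follows exactly the route the paper intends: the paper states this as an immediate consequence of the preceding orthogonality relation $\E_{(x,y)\sim\mu^k}[\phi_\alpha(x)\psi_\beta(y)] = \sigma_\alpha\cdot\Ind[\alpha=\beta]$, and your expansion of $F$ and $G$ in the tensor singular-vector bases followed by linearity of expectation is precisely that argument. No gaps.
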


\section{Correlation Decay for Symmetric Polymorphisms}
\label{sec:correlationdecay}

In this section we prove \pref{thm:corr-decay}. The proof has two major components. First, we show that for an operation where the Fourier weight on certain degree-one coefficients is bounded away from one, correlation decreases. Second, we show that for operations admitting a transitive symmetry, this Fourier weight does indeed stay bounded away from one after each application of an operation.

Throughout the section let $\mu$ be a joint distribution on $[q]\times [q]^m$ for some $m$. Let $\mu_1$ and $\mu_2$ be the respective marginal distributions of $\mu$. Further, let $\Omega_1 = ([q], \mu_1)$ and $\Omega_2 = ([q]^m,\mu_2)$ denote the probability spaces corresponding to $\mu_1$ and $\mu_2$. It will be useful to think of $q$ as being a constant and $m$ as possibly being very large.

Further, for an operation $p:[q]^k \to [q]$ we will write $p(\mu)$ for the probability distribution obtained by sampling $(X,Y) \sim \mu^k$ and outputting the pair $(p(X_1,\dots,X_k),p(Y_1,\dots,Y_k)) \in [q]\times [q]^m$. Recall that we have defined $p(Y_1,\dots,Y_k)$ for $Y_i \in [q]^m$ to be the result of applying $p$ on each coordinate of $Y_1,\dots,Y_k$ separately.

\subsection{Linearity and Correlation Decay}
Given a singular value basis $\{\phi_\alpha\}$ for the space $L^2([q]^k,\mu_1^k)$ we call the set of functions $\phi_\alpha$ such that $|\alpha| = 1$ (i.e. exactly one coordinate of $\alpha$ is nonzero) the degree-one part of the basis. Note that there are $qk$ such basis functions: $q$ for each of the $k$
coordinates. We will be interested in keeping track of those degree-one basis
functions with singular values close to the correlation $\rho(\mu)$.
\begin{definition}
  Let $\{\phi_\alpha\}$ be a singular value basis as above. The linear
  multi-indices $\alpha$ are given by the set
  \[
  L_\mu = \{\alpha \mid |\alpha| = 1 \mbox{ and } \sigma_\alpha \geq \rho(\mu)^2\}
  \]
  The linear part of a singular value basis is those $\phi_\alpha$ with $\alpha \in L_\mu$.
\end{definition}
Next we define a quantitative measure of the linearity of a function.
\begin{definition}
  Let $F\in L^2([q]^k,\mu_1^k)$. The linearity of $F$ is defined as
  \[
  \Lin_\mu(F) = \sum_{\alpha\in L_\mu} \hat{F}_\alpha^2
  \]
  Further, for an operation $p:[q]^k \to [q]$ the linearity of $p$ is given by
  \[
  \Lin_\mu(p) = \sup_{\substack{F\in \Sp(p) \\ \norm{F}_2 = 1}} \Lin_\mu(F)
  \]
\end{definition}
With these definitions in hand we show that the correlation of $\mu$ decays under any operation $p$ that has linearity bounded away from one.
\begin{lemma} (Correlation Decay) \label{lem:cordecaysym}
  Let $p:[q]^k \to [q]$ be an operation. Then
  \[
  \rho(p(\mu)) \leq \rho(\mu)\left( 1- \frac{1}{2}(1-\Lin_\mu(p))(1-\rho(\mu)^2) \right)
  \]
\end{lemma}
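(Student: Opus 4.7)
The plan is to expand the bilinear form defining $\rho(p(\mu))$ in the SVD basis of $T_{\mu^k} = T_\mu^{\otimes k}$, split the resulting sum into a linear piece and a non-linear piece, bound each piece by Cauchy-Schwarz with the appropriate singular value, and finally optimize in the $Y$-side linearity.

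Concretely, I would take normalized (mean-zero, unit $L^2$) test functions $f:[q]\to\R$ and $g:[q]^m\to\R$ attaining $\rho(p(\mu))$, and set $F := f\circ p \in L^2([q]^k,\mu_1^k)$ and $G := g\circ p \in L^2(([q]^m)^k,\mu_2^k)$ (where on the $Y$-side $p$ is applied coordinate-wise across the $m$ coordinates). Since $p(\mu)$ has marginals $p(\mu_1^k)$ and $p(\mu_2^k)$, the normalization on $f,g$ gives $\E F = \E G = 0$ and $\norm{F}_2 = \norm{G}_2 = 1$, and $F$ is visibly a linear combination of the coordinate indicators of $p$, hence $F \in \Sp(p)$. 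Using the SVD identity in the preceding subsection,
\[
\rho(p(\mu)) \;=\; \E[F(X)G(Y)] \;=\; \sum_{\alpha \neq 0} \hat F_\alpha \hat G_\alpha \sigma_\alpha,
\]
since the $\alpha=0$ contribution vanishes via $\hat F_0 = \E F = 0$.

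Next I would split the sum into the ``linear'' part $\alpha \in L_\mu$ and the ``non-linear'' part $\alpha \notin L_\mu$, $\alpha \neq 0$. For $\alpha \in L_\mu$ one has $|\alpha|=1$, so $\sigma_\alpha \leq \sigma_1 = \rho(\mu)$; for $\alpha \notin L_\mu$ with $\alpha \neq 0$ one has $\sigma_\alpha \leq \rho(\mu)^2$ (by the definition of $L_\mu$ when $|\alpha|=1$, and because $\sigma_\alpha = \prod_i \sigma_{\alpha_i} \leq \rho(\mu)^{|\alpha|}$ when $|\alpha|\geq 2$). Pulling out the singular-value bound in each piece and applying Cauchy-Schwarz, together with Bessel's inequality $\sum_\alpha \hat F_\alpha^2 \leq 1$ and $\sum_\alpha \hat G_\alpha^2 \leq 1$, yields
\[
\rho(p(\mu)) \;\leq\; \rho(\mu)\sqrt{\ell_F\, \ell_G} \;+\; \rho(\mu)^2 \sqrt{(1-\ell_F)(1-\ell_G)},
\]
where $\ell_F := \sum_{\alpha\in L_\mu}\hat F_\alpha^2 = \Lin_\mu(F)$ and $\ell_G := \sum_{\alpha\in L_\mu}\hat G_\alpha^2$.

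To finish, $\ell_F \leq \Lin_\mu(p)$ follows directly from the definition of $\Lin_\mu(p)$, since $F \in \Sp(p)$ and $\norm{F}_2 = 1$. No analogous bound is needed on the $Y$-side: the trivial bound $\ell_G \leq 1$ suffices. Setting $\ell_F = L := \Lin_\mu(p)$ and maximizing the right-hand side over $\ell_G \in [0,1]$ is a one-variable calculus exercise; a short derivative calculation yields maximum value
\[
\rho(\mu) \sqrt{L + \rho(\mu)^2(1-L)} \;=\; \rho(\mu)\sqrt{1-(1-L)(1-\rho(\mu)^2)}.
\]
The elementary inequality $\sqrt{1-x} \leq 1 - x/2$ applied with $x = (1-L)(1-\rho(\mu)^2) \in [0,1]$ then gives precisely the stated bound $\rho(\mu)\bigl(1-\tfrac{1}{2}(1-\Lin_\mu(p))(1-\rho(\mu)^2)\bigr)$.

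The step I expect to be the main conceptual hurdle is the asymmetry between $F$ and $G$: on the $X$-side, $F \in \Sp(p)$ plugs directly into the definition of $\Lin_\mu(p)$, but on the $Y$-side, $G = g \circ p$ is not literally in $\Sp(p)$ (since $p$ acts coordinate-wise across $m$ coordinates), which might suggest one needs a matching coordinate-wise notion of linearity for $G$. The optimization above sidesteps this entirely: the slack from allowing $\ell_G$ to roam over all of $[0,1]$ is absorbed exactly by the $\sqrt{1-x}\leq 1-x/2$ step, which is the origin of the factor of $1/2$ in the stated bound.
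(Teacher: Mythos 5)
Your proposal is correct and follows essentially the same route as the paper: same choice of optimal test functions $f,g$ with $F=f\circ p\in\Sp(p)$, same SVD expansion $\sum_\alpha \hat F_\alpha\hat G_\alpha\sigma_\alpha$, same split by $L_\mu$ with the bounds $\sigma_\alpha\leq\rho$ versus $\sigma_\alpha\leq\rho^2$, and the same final step $\sqrt{1-x}\leq 1-x/2$. The only cosmetic difference is that the paper applies Cauchy--Schwarz once globally (absorbing all of $\sum_\alpha\hat G_\alpha^2=1$ at the start) whereas you split first and then optimize over $\ell_G$; the two computations coincide.
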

\begin{proof}
  Let $f : [q] \to \R$ and $g : [q]^m \to \R$ be such that $\E_{x\sim\mu_1^k}[f(p(x)] =
  0$, $\E_{y\sim\mu_2^k}[g(p(y))] = 0$, $\norm{f(p)}_2 = \norm{g(p)}_2 = 1$ and
  \[
  \rho(p(\mu)) = \E_{(x,y)\sim\mu^k}[f(p(x))g(p(y))] \mper
  \]
  That is, $f$ and $g$ are the functions achieving the maximum correlation for
  $p(\mu)$. Now define $F: [q]^k \to \R$ as $F(x) = f(p(x))$. Similarly,
  define $G: [q]^{mk} \to \R$ as $G(y) = g(p(y))$.

  Writing the multilinear expansion of $F$ with respect to the basis $\phi$, we
  have $F = \sum_{\alpha}
  \hat{F}_\alpha \phi_{\alpha}$ where
  \[
  \hat{F}_{0} = \E[F] = 0\mcom \qquad \sum_{\alpha} \hat{F}_{\alpha}^2
  = \norm{F}_2 = 1 \mper
  \]
  Furthermore, since $F\in\Sp(p)$ we have
  \[
  \sum_{\alpha\in L_\mu} \hat{F}_{\alpha}^2 = \Lin_\mu(F) \leq \Lin_\mu(p) \mper
  \]
  Similarly, one can write the multilinear expansion of $G$.  Recall that,
  \[
  \E_{(x,y)\sim\mu^k}[f(p(x))g(p(y))] = \E_{(x,y)\sim\mu^k}[F(x) G(y)] =
  \sum_{\alpha} \hat{F}_\alpha \hat{G}_\alpha \sigma_{\alpha} \mper
  \]
  Therefore, we may apply Cauchy-Schwartz inequality to obtain
  \begin{align*}
    \rho(p(\mu)) & =  \sum_{\alpha}
                   \hat{F}_\alpha \hat{G}_\alpha \sigma_{\alpha} \mcom \\
                 & \leq \left(\sum_{\alpha} \hat{F}^2_\alpha \sigma_{\alpha}^{2}
                   \right)^{\frac{1}{2}} \left(\sum_{\alpha} \hat{G}^2_\alpha
                   \right)^{\frac{1}{2}}  \mcom\\
                 & = \left(\sum_{\alpha \neq 0} \hat{F}^2_\alpha \sigma_{\alpha}^{2}
                   \right)^{\frac{1}{2}} \mcom
  \end{align*}
  where in the last step we have used both that $\hat{F}_0 = 0$ and $\sum_{\alpha}
  \hat{G}_{\alpha}^2 = 1$. Let $\rho = \rho(\mu)$ and note that for all $\alpha \in
  L_\mu$ we have $\sigma_{\alpha} \leq \rho$. Further, for non-zero $\alpha \notin
  L_\mu$ we have $\sigma_\alpha \leq \rho^2$. Thus, we can split the above sum to
  obtain:
  \begin{align*}
    & \leq \left( \rho^2 \sum_{\alpha \in L_\mu} \hat{F}^2_\alpha + \rho^4
      \sum_{\alpha \notin L_\mu} \hat{F}^2_\alpha \right)^{\frac{1}{2}} \\
    & \leq \rho \left(\Lin_\mu(p)+ (1-\Lin_\mu(p)) \rho^2\right)^{\frac{1}{2}} \\
    & \leq \rho \left( 1- (1-\Lin_\mu(p))(1-\rho^2) \right)^{\frac{1}{2}} \\
    & \leq \rho \left( 1- \frac{1}{2}(1-\Lin_\mu(p))(1-\rho^2) \right) \mper
  \end{align*}
\end{proof}

\subsection{Hypercontractivity and the Berry-Esseen Theorem}
In light of \pref{lem:cordecaysym}, correlation will always decay under
application of an operation $p$ as long as $\Lin_{\mu}(p)$ is bounded away from
one. The main challenge for proving that correlation decays to zero under
repeated application of polymorphisms is in controlling the linearity after each
application. The intuition for our proof is as follows: If a polymorphism has
linearity very close to one, then it is close to a sum of independent random
variables, and so the Berry-Esseen Theorem applies to show that $p(\mu_1)$ is
nearly Gaussian. However this should be impossible since $p(\mu_1)$ only takes
$q$ distinct values. The version of the Berry-Esseen Theorem that we will use can be found as Corollary 59 in Chapter 11 of \cite{OD14}.

\begin{theorem}[Berry-Esseen \cite{OD14}]\label{thm:berryesseen}
Let $X_1,\dots,X_n$ and $Y_1,\dots,Y_n$ be independent random variables with matching first and second moments; i.e. $E[X_i^k] = E[Y_i^k]$ for $i\in [n]$ and $k \in {1,2}$. Let $S_X = \sum_i X_i$ and $S_Y = \sum_i Y_i$. Then for any $\psi:\R \to \R$ which is $c$-Lipschitz
\[
|\E[\psi(S_X)] - \E[\psi(S_Y)]| \leq O(c)\cdot \sum_i E[X_i^3] + E[Y_i^3]
\]
\end{theorem}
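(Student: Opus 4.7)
The plan is to use the Lindeberg replacement method, in which one interpolates from $S_X$ to $S_Y$ by swapping one variable at a time and controls each swap via a Taylor expansion that uses the matching moment hypothesis.

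First I would introduce hybrid sums $Z^{(i)} = Y_1 + \cdots + Y_i + X_{i+1} + \cdots + X_n$, so $Z^{(0)} = S_X$ and $Z^{(n)} = S_Y$, and write the telescoping identity
\[
\E[\psi(S_Y)] - \E[\psi(S_X)] = \sum_{i=1}^n \E[\psi(Z^{(i)}) - \psi(Z^{(i-1)})].
\]
Each consecutive pair differs only in the $i$-th coordinate: setting $W_i = Y_1 + \cdots + Y_{i-1} + X_{i+1} + \cdots + X_n$, we have $Z^{(i-1)} = W_i + X_i$ and $Z^{(i)} = W_i + Y_i$, with $W_i$ independent of $X_i$ and $Y_i$.

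Next I would treat the auxiliary case of a three-times differentiable test function $\varphi$ with $\|\varphi'''\|_\infty \leq M$. Taylor-expanding around $W_i$,
\[
\varphi(W_i + X_i) = \varphi(W_i) + X_i\,\varphi'(W_i) + \tfrac{1}{2} X_i^2\, \varphi''(W_i) + R_i,
\qquad |R_i| \leq \tfrac{M}{6} |X_i|^3,
\]
and similarly for $Y_i$. Because $X_i$ and $Y_i$ are independent of $W_i$ and have matching first and second moments, the $\varphi, \varphi', \varphi''$ terms cancel in expectation, leaving
\[
\bigl|\E[\varphi(W_i + Y_i) - \varphi(W_i + X_i)]\bigr|
\;\leq\; \tfrac{M}{6}\bigl(\E|X_i|^3 + \E|Y_i|^3\bigr),
\]
so after summing over $i$ one obtains the desired bound with constant $O(M)$ on the right.

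To return to a merely $c$-Lipschitz $\psi$, I would convolve with a smooth mollifier of scale $\eta$ (e.g.\ a Gaussian density of standard deviation $\eta$) to obtain $\psi_\eta$ satisfying $\|\psi - \psi_\eta\|_\infty \leq O(c\eta)$ and $\|\psi_\eta'''\|_\infty \leq O(c/\eta^2)$, apply the previous step to $\psi_\eta$, and balance the two error terms in $\eta$ to recover the stated Lipschitz bound. The main obstacle is this smoothing step: the Taylor argument itself is clean given matching moments, but handling a non-smooth test function requires trading a mollification error against the third-derivative blowup of the mollified function, and one has to pick the smoothing scale carefully so that the resulting estimate is expressible purely in terms of the Lipschitz constant $c$ and the third moments.
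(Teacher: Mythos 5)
First, a point of reference: the paper does not prove this statement at all --- it is quoted verbatim from \cite{OD14} (Corollary 59 of Chapter 11), so there is no in-paper argument to compare against. Your Lindeberg replacement proof is precisely the standard derivation of that corollary: the hybrid sums, the third-order Taylor expansion that uses the matching first and second moments together with the independence of $W_i$ from $X_i$ and $Y_i$, and the Gaussian mollification. The smooth case is correct, and the two mollification estimates $\norm{\psi-\psi_\eta}_\infty = O(c\eta)$ and $\norm{\psi_\eta'''}_\infty = O(c/\eta^2)$ (the latter via $\psi_\eta''' = \psi' * g_\eta''$ with $\norm{\psi'}_\infty \le c$) are also correct.

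The gap is in your final sentence. Balancing $O(c\eta)$ against $O(c/\eta^2)\,\gamma$, where $\gamma = \sum_i\bigl(\E|X_i|^3 + \E|Y_i|^3\bigr)$, forces $\eta = \Theta(\gamma^{1/3})$ and yields $O(c)\cdot\gamma^{1/3}$, \emph{not} $O(c)\cdot\gamma$; no choice of smoothing scale recovers the linear bound. And this is not a fixable oversight in your argument: the theorem as stated is false for merely Lipschitz $\psi$. Take $n=1$, $X_1 = \pm\epsilon$ uniformly, $Y_1 = \epsilon Z$ with $Z$ standard Gaussian, and $\psi(x) = |x|$; the left-hand side is $\Theta(\epsilon)$ while the right-hand side is $O(\epsilon^3)$. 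The correct statement (the one actually proved in \cite{OD14}, and the one your argument establishes) carries a cube root on the third-moment sum, and also requires absolute third moments $\E|X_i|^3$ rather than $\E[X_i^3]$. The paper's only use of the theorem, in \pref{lem:finitevsnormal}, survives this correction --- replacing $cBm$ by $O\bigl((Bm)^{1/3}\bigr)$ there only changes the constant in the later choice of the block length $a$ --- but as written your proof cannot, and should not claim to, ``recover the stated Lipschitz bound.''
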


 Note that since the error term in the Berry-Esseen Theorem depends on the $L^3$-
norm of the random variables, we must control the $L^3$-norms of singular vectors
of $T_\mu$. Our main tool to this end will be hypercontractivity of the noise
operator $T_\rho$. We state here a special case of the general hypercontractivity
theorem which will suffice for our purposes.

\begin{theorem}[Hypercontractivity \cite{OD14}]\label{thm:hypercontractivity}
  Let $\pi$ be a probability distribution on $[q]$ where each outcome
  has probability at least $\lambda$. Let $f\in L^2([q]^k,\pi^k)$ for some $k \in \N$. Then for any $0 \leq \rho \leq \frac{1}{\sqrt{2}}\lambda^{\frac{1}{6}}$
  \[
  \norm{T_\rho f}_3 \leq \norm{f}_2
  \]
  where $T_\rho$ is the noise operator.
\end{theorem}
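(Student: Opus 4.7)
The theorem is a $(2,3)$-hypercontractive inequality for the noise operator on a finite probability space with minimum atom mass $\lambda$. The plan is the classical two-step strategy pioneered for the Boolean hypercube by Bonami--Beckner and extended to general finite spaces in the Diaconis--Saloff-Coste / Oleszkiewicz framework. First I would prove the one-variable inequality: $\norm{T_\rho f}_3 \leq \norm{f}_2$ for all $f \in L^2([q],\pi)$. Second I would tensorize, appealing to the classical Gross tensorization lemma: if $T_\rho$ is $(2,3)$-bounded on $L^2([q],\pi)$ then $T_\rho^{\otimes k}$ is $(2,3)$-bounded on $L^2([q]^k,\pi^k)$ with the same constant. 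Since the noise operator on $[q]^k$ is defined to act coordinatewise, it coincides with this tensor power, which yields the full theorem.

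For the one-variable step, I would first reduce to mean-zero unit-variance perturbations by writing $f = a + g$ with $\E_\pi[g] = 0$ and, after rescaling, $\norm{g}_2 = 1$, so that $T_\rho f = a + \rho g$ and the target becomes $\norm{a + \rho g}_3 \leq \sqrt{a^2 + 1}$. The crucial analytic input is that any mean-zero, unit-variance function on $([q],\pi)$ is pointwise bounded: $|g(i)| \leq 1/\sqrt{\pi(i)} \leq \lambda^{-1/2}$, which gives $|\E_\pi[g^3]| \leq \norm{g}_\infty \norm{g}_2^2 \leq \lambda^{-1/2}$. Cubing the target and expanding (the cross term vanishes by mean-zeroness), the inequality reduces to a polynomial-in-$a$ comparison whose $\rho^3$ correction is controlled by the bound above; the advertised threshold $\rho \leq \frac{1}{\sqrt{2}}\lambda^{1/6}$ then arises by balancing this third-moment error against the slack of the $(2,3)$-inequality at the extremal value of $a$.

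The main obstacle is extracting the stated constant $\frac{1}{\sqrt{2}}\lambda^{1/6}$, rather than merely some polynomial-in-$\lambda$ threshold. Careful optimization over the mean-zero direction $g$ is needed, as opposed to just plugging in the crude pointwise bound. The cleanest route is to invoke the equivalence between hypercontractivity and modified log-Sobolev inequalities on $([q],\pi)$, whose constant is known to have the correct $\lambda$-dependence; the two-point inequality then follows from a Gross-style differential argument in the parameter $\rho$. Tensorization in step two, by contrast, is entirely routine and preserves the constant, so essentially all of the work lives in the sharp two-point estimate.
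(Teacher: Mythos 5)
The paper does not actually prove this statement: it is quoted as a black-box special case of the general hypercontractivity theorem in \cite{OD14}, so there is no in-paper argument to compare against. Your proposal reconstructs exactly the proof from that reference: a single-coordinate $(2,3)$-inequality driven by the moment bound $\norm{g}_\infty \le \lambda^{-1/2}\norm{g}_2$ (equivalently $\norm{g}_3\le\lambda^{-1/6}\norm{g}_2$) for mean-zero $g$, followed by routine tensorization; the threshold $\tfrac{1}{\sqrt 2}\lambda^{1/6}$ is precisely $\tfrac{1}{\sqrt{q-1}}\lambda^{1/2-1/q}$ at $q=3$. Two cautions on the one-variable step, where as you say all the work lives. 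First, you cannot literally ``cube and expand'' $\norm{a+\rho g}_3^3=\E\bigl[\,|a+\rho g|^3\,\bigr]$ because of the absolute value; you need the pointwise inequality $|1+t|^3\le 1+3t+3t^2+|t|^3$, valid for all real $t$, or an equivalent device, after which the cross term vanishes as you claim. Second, the resulting comparison $1+3\rho^2\sigma^2+\rho^3\E\bigl[|X|^3\bigr]\le(1+\sigma^2)^{3/2}$ (with $\sigma=\norm{X}_2$) is genuinely tight: if one discards the $\lambda$-dependence of the quadratic term and uses only $3\rho^2\sigma^2\le\tfrac32\sigma^2$ together with $\rho^3\E\bigl[|X|^3\bigr]\le\tfrac{1}{2\sqrt2}\sigma^3$, the inequality fails near $\sigma=1$, so the balancing you allude to must keep the factor $\lambda^{1/3}$ in the middle term. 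Finally, the modified log-Sobolev detour at the end is unnecessary and somewhat misleading here: that route targets the sharp biased constants of Latala--Oleszkiewicz, whereas the stated threshold is the simpler non-sharp one that falls out of the direct moment argument.
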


Our first task will be to relate $L^3$ norms under the conditional expectation
operator $T_\mu$ to those under the noise operator $T_\rho$. However, these
operators do not map between the same spaces. To fix this, we instead consider the
operator $M_\mu = T_\mu T_\mu^*$ so that $M:L^2(\Omega_1) \to L^2(\Omega_1)$. The
following simple lemma gives a hypercontractivity result for $M_\mu$.

\begin{lemma}\label{lem:hypercontractivity}
  Let $M_\mu = T_\mu T_\mu^*$ and let $\lambda$ be the minimum probability of an
  atom in the marginal $\mu_1$ of $\mu$. Let $s$ be a number such that
  $\rho(\mu)^{2s} \leq \frac{1}{\sqrt{2}}\lambda^{\frac{1}{6}}$.  For any $k \in \N$, if  let $f\in
  L^2([q]^k,\mu_1^k)$ then
  \[
  \norm{M_{\mu^k}^s f}_3 \leq \norm{f}_2
  \]
\end{lemma}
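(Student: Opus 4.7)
The plan is to exploit the spectral decomposition of $M_\mu = T_\mu T_\mu^*$ in the singular value basis, factor $M_{\mu^k}^s$ as a noise operator composed with a contraction, and then invoke the standard hypercontractivity result \pref{thm:hypercontractivity}.

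First I would observe that the singular value basis $\{\phi_i\}$ of $T_\mu$ (with $\phi_0\equiv 1$) diagonalizes $M_\mu$: using $T_\mu \psi_i = \sigma_i \phi_i$ and $T_\mu^* \phi_i = \sigma_i \psi_i$ we get $M_\mu \phi_i = \sigma_i^2 \phi_i$. Passing to the tensor basis $\{\phi_\alpha\}$ of $L^2([q]^k,\mu_1^k)$ and using $T_{\mu^k} = T_\mu^{\otimes k}$ (hence $M_{\mu^k} = M_\mu^{\otimes k}$), this yields $M_{\mu^k}^s \phi_\alpha = \sigma_\alpha^{2s}\phi_\alpha$ with $\sigma_\alpha = \prod_i \sigma_{\alpha_i}$.

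Next I would factor this eigenvalue. Setting $\rho = \rho(\mu)^{2s}$, and using $\sigma_0=1$, $\sigma_1 = \rho(\mu)$, and $\sigma_i \leq \sigma_1$ for all $i\geq 1$, I can write
\[
\sigma_\alpha^{2s} \;=\; \prod_{i:\alpha_i\neq 0} \sigma_{\alpha_i}^{2s} \;=\; \rho(\mu)^{2s|\alpha|}\cdot \prod_{i:\alpha_i\neq 0}\left(\frac{\sigma_{\alpha_i}}{\rho(\mu)}\right)^{2s} \;=\; \rho^{|\alpha|}\cdot d_\alpha,
\]
where $d_\alpha \in [0,1]$. Recall that the noise operator $T_\rho$ on $L^2([q]^k,\mu_1^k)$ acts diagonally as $T_\rho \phi_\alpha = \rho^{|\alpha|}\phi_\alpha$ in any orthonormal basis of $L^2([q],\mu_1)$ containing the constant $\phi_0\equiv 1$. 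Hence defining $D$ by $D\phi_\alpha = d_\alpha\phi_\alpha$, we get the operator identity $M_{\mu^k}^s = T_\rho \circ D$, where $D$ is self-adjoint with operator norm at most $1$.

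From here the conclusion is short. By Parseval, $\|Df\|_2 \leq \|f\|_2$. The hypothesis $\rho(\mu)^{2s} \leq \tfrac{1}{\sqrt 2}\lambda^{1/6}$ is exactly the range of noise rates for which \pref{thm:hypercontractivity} applies to $T_\rho$ on $L^2([q]^k,\mu_1^k)$ (since every atom of $\mu_1^k$ has probability at least $\lambda^k$, but we actually only need the single-coordinate bound $\lambda$ because hypercontractivity tensorizes; this is built into the statement of \pref{thm:hypercontractivity}). Therefore
\[
\|M_{\mu^k}^s f\|_3 \;=\; \|T_\rho(Df)\|_3 \;\leq\; \|Df\|_2 \;\leq\; \|f\|_2,
\]
which gives the claim. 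The only point that requires care is the factorization step: one must verify that the ``noise rate'' extracted from $\sigma_\alpha^{2s}$ is uniform across $\alpha$, which is precisely why we pull out one factor of $\rho(\mu)^{2s}$ per nonzero coordinate (using $\sigma_i \leq \rho(\mu)$), leaving a diagonal contraction $D$ behind. I do not expect any deep obstacle; this is essentially bookkeeping on top of the standard Bonami–Beckner estimate recalled in \pref{thm:hypercontractivity}.
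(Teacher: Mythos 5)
Your proposal is correct and is essentially the paper's own argument: the same spectral decomposition of $M_{\mu^k}$ in the tensor singular-value basis, the same factorization $M_{\mu^k}^s = T_\rho \circ D$ with $\rho = \rho(\mu)^{2s}$ and a diagonal contraction (your $D$ is the paper's operator $A$ with eigenvalues $\eta_\alpha = \sigma_\alpha^{2s}/\rho^{|\alpha|} \leq 1$), followed by the same application of \pref{thm:hypercontractivity} and the bound $\norm{Df}_2 \leq \norm{f}_2$. No gaps.
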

\begin{proof}
  Observe first that $M_\mu$ is a self-adjoint linear operator and let
  $\{\sigma_i^2 | i \in [q]\}$ be the eigenvalues and let $\{\phi_i| i \in [q]\}$ denote
  the eigenfunctions.. Further $M_{\mu^k} = M_\mu^{\otimes k}$ is a
  self-adjoint operator with eigenvalues $\{\sigma_\alpha^2| \alpha \in [q]^k\}$ and
  eigenfunctions $\{\phi_\alpha|\alpha \in [q]^k\}$.

  Let $\rho = \rho(\mu)^{2s}$ and define
  \[
  \eta_\alpha = \frac{\sigma_\alpha^{2s}}{\rho^{|\alpha|}}
  \]
  Observe that $\eta_0 = 1$. Next, because $\sigma_j \leq \rho(\mu)$ for all $j \neq 0$ and $\sigma_0 = 1$ we have
  \[
  \eta_\alpha = \frac{\prod_i \sigma_{\alpha_i}^{2s}}{\rho^{|\alpha|}}
  \leq \frac{\rho(\mu)^{2s|\alpha|}}{\rho^{|\alpha|}} = 1
  \]
  for all $\alpha\neq 0$. Now define an operator $A$ by setting $A\phi_\alpha = \eta_\alpha
  \phi_\alpha$ for each $\alpha \in [q]^k$. Since $\{\phi_\alpha\}_{\alpha \in [q]^k}$ form a basis,
  this uniquely defines $A$. In particular $A$ is a self-adjoint
  operator with eigenvalues $\eta_\alpha \leq 1$ and corresponding eigenfunctions
  $\phi_\alpha$ for $\alpha \in [q]^k$. Note that by construction
  \[
  T_\rho A \phi_\alpha = \sigma_\alpha^{2s} \phi_\alpha = M_{\mu^k}^s \phi_\alpha \mcom
  \]
  That is, $T_\rho A$ agrees with $M_{\mu^k}^s$ on the entire $\phi_\alpha$ basis, and so by linearity $T_\rho A = M_{\mu^k}^s$. Now we compute
  \[
  \norm{M_{\mu^k}^s f}_3 = \norm{T_\rho A f}_3 \leq \norm{Af}_2 \mcom
  \]
  where the final inequality follows by applying
  \pref{thm:hypercontractivity} to the function $Af$. Next since every eigenvalue of $A$ is at most 1 we have
  \[
  \norm{Af}_2 \leq \norm{f}_2 \mper
  \]
  Plugging this into the previous inequality completes the proof.
\end{proof}

We now apply the hypercontractivity theorem in order to control the $L^3$-norms
of singular vectors of $T_{p(\mu)}$ for some operation $p$. The following lemma
gives a trade-off between the $L^3$-norm of a singular vector of $T_{p(\mu)}$ and
the magnitude of the corresponding singular value. Further, the trade-off depends
only on properties of the initial distribution $\mu$. This in turn implies that
singular vectors with high $L^3$-norm have low singular values, and thus cannot
contribute much to the correlation $\rho(p(\mu))$.
\begin{lemma} \label{lem:threenorm}
  Let $\lambda$ be the minimum probability of an atom in the marginal $\mu_1$ of
  $\mu$. Let $s$ be a number such that $\rho(\mu)^{2s} \leq \frac{1}
  {\sqrt{2}}\lambda^{\frac{1}{6}}$. Let $p:[q]^k \to [q]$ and $\phi_i\in
  L^2(\Omega_1)$ be a singular vector of $T_{p(\mu)}$ with singular value
  $\sigma_i$ for $i \neq 0$. Then
  \[
  \sigma_i^{2s} \norm{\phi_i}_3 \leq 1
  \]
\end{lemma}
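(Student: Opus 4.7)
My plan is to reduce the claim to the hypercontractive estimate \pref{lem:hypercontractivity} on $[q]^k$ by lifting $\phi_i$ to a function there, applying hypercontractivity, and then projecting the bound back down.

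Define the lift $\Phi_i = \phi_i \circ p : [q]^k \to \R$. Since $p$ pushes $\mu_1^k$ forward to the $[q]$-marginal of $p(\mu)$, a change of variables gives $\norm{\Phi_i}_r = \norm{\phi_i}_r$ for every $r \ge 1$; in particular $\norm{\Phi_i}_2 = 1$ and $\norm{\Phi_i}_3 = \norm{\phi_i}_3$. Let $\Pi$ denote conditional expectation in $L^2([q]^k, \mu_1^k)$ onto the $\sigma$-algebra generated by $p$; by Jensen's inequality it is a contraction in every $L^r$. The heart of the proof is the identity
\[
\Pi\bigl(M_{\mu^k}^{s}\, \Phi_i\bigr) = \sigma_i^{2s}\,\Phi_i.
\]
To see this, interpret $(M_{p(\mu)}^s \phi_i)(\tilde x) = \E[\phi_i(\tilde X_s) \mid \tilde X_0 = \tilde x]$ for the reversible Markov chain on $[q]$ with two-step kernel $M_{p(\mu)}$, and lift to the analogous chain $X^{(0)} \to X^{(1)} \to \cdots \to X^{(s)}$ on $[q]^k$ driven by $M_{\mu^k}$. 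Because $p(\mu)$ is by definition the pushforward of $\mu^k$ under applying $p$ coordinate-wise on both sides, the projected process $(p(X^{(t)}))_t$ has the same law as $(\tilde X_t)_t$; iterating the tower property then gives
\[
\sigma_i^{2s}\Phi_i(x) = (M_{p(\mu)}^s \phi_i)(p(x)) = \E\bigl[(M_{\mu^k}^s \Phi_i)(X^{(0)}) \bigm| p(X^{(0)}) = p(x)\bigr] = \Pi(M_{\mu^k}^s \Phi_i)(x).
\]

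Combining the identity with the $L^3$-contractivity of $\Pi$ and with \pref{lem:hypercontractivity} applied to $\Phi_i$ (whose hypothesis is exactly $\rho(\mu)^{2s} \le \tfrac{1}{\sqrt{2}}\lambda^{1/6}$) yields
\[
\sigma_i^{2s}\norm{\phi_i}_3 = \sigma_i^{2s}\norm{\Phi_i}_3 = \norm{\Pi(M_{\mu^k}^s \Phi_i)}_3 \le \norm{M_{\mu^k}^s \Phi_i}_3 \le \norm{\Phi_i}_2 = 1,
\]
as desired.

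The main obstacle is the key identity. The naive hope that $M_{\mu^k}\Phi_i = \sigma_i^2 \Phi_i$ is false in general, since conditioning on the full vector $X^{(0)} = x$ is strictly finer than conditioning on its image $p(X^{(0)}) = p(x)$: the eigenvector relation only reemerges after averaging over the fibers of $p$, which is precisely the effect of $\Pi$. Once this Markov-chain/tower-property observation is in place, the rest is a one-line composition of inequalities, as conditional expectation is an $L^r$-contraction for every $r \ge 1$.
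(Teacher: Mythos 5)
Your proposal follows the same route as the paper's proof: lift $\phi_i$ to $\Phi_i=\phi_i\circ p$, transfer the problem from $M^s_{p(\mu)}$ to $M^s_{\mu^k}$, and close with \pref{lem:hypercontractivity}. The transfer step is the crux, and your ``key identity'' $\Pi\bigl(M_{\mu^k}^{s}\Phi_i\bigr)=\sigma_i^{2s}\Phi_i$ is false in general; your justification fails precisely where it asserts that the projected process $(p(X^{(t)}))_t$ has the law of the $M_{p(\mu)}$-chain. The single pair $(p(X^{(0)}),p(Y^{(0)}))$ does have law $p(\mu)$, but the next transition is drawn from the conditional of $\mu^k$ given the \emph{full} vector $Y^{(0)}$, and within a fiber $p^{-1}(w)$ the law of $Y^{(0)}$ is not independent of $p(X^{(0)})$ --- the projected chain is not a lumping of the lifted one. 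What is actually true is $\sigma_i^{2s}\Phi_i=(\Pi\,T_{\mu^k}\,\Pi'\,T^{*}_{\mu^k})^{s}\Phi_i$, with conditional expectations interleaved after every half-step ($\Pi$ on the $X$-side, $\Pi'$ on the $Y$-side); this is not $\Pi M_{\mu^k}^{s}\Phi_i$. A concrete counterexample with $s=1$: take $q=2$, $k=2$, $m=1$, let $\mu$ be the $\rho$-correlated pair with uniform marginals on $\{0,1\}$, and $p=\mathrm{AND}$. Then $\Pi(M_{\mu^2}\Phi_1)=\tfrac{2\rho^2+\rho^4}{3}\Phi_1$ while $\sigma_1^{2}=\bigl(\tfrac{2\rho+\rho^2}{3}\bigr)^{2}$, and the difference is $\tfrac{2\rho^2(1-\rho)^2}{9}>0$ for $0<\rho<1$.

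To be fair, the paper's own proof asserts the equivalent identity without comment (its second displayed equality, $\E_{z\sim p(\mu_1)}[g(z)M^{s}_{p(\mu)}\phi_i(z)]=\E_{x\sim\mu_1^k}[g(p(x))M^{s}_{\mu^k}(\phi_i\circ p)(x)]$, is exactly the dual formulation of your claim), so you have reconstructed the intended argument including its weakest step; your attempt to justify it simply makes the problem visible. Note that the error has a favorable sign on the diagonal: $\langle\Phi_i,M^{s}_{\mu^k}\Phi_i\rangle\ge\sigma_i^{2s}$ always (project $T^{*}_{\mu^k}\Phi_i$ onto the $p$-measurable subspace on the $Y$-side and apply Jensen for the power $s$), so for $q=2$, where $\Pi M^{s}_{\mu^k}\Phi_i$ is forced to be a scalar multiple of $\Phi_i$, your chain of inequalities survives with the identity weakened to ``$\le$''. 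For $q>2$ the off-diagonal terms $\langle\Phi_j,M^{s}_{\mu^k}\Phi_i\rangle$ with $j\ne i$ enter $\Pi M^{s}_{\mu^k}\Phi_i$, and the needed bound $\sigma_i^{2s}\norm{\Phi_i}_3\le\norm{\Pi M^{s}_{\mu^k}\Phi_i}_3$ no longer follows; repairing the lemma requires either controlling these cross terms or proving hypercontractivity directly for the interleaved operator $(\Pi T_{\mu^k}\Pi'T^{*}_{\mu^k})^{s}$.
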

\begin{proof}
  For any distribution $\pi$ we will use the notation $M_\pi = T_\pi T_\pi^*$. Since
  $\phi_i$ is a singular vector of $T_{p(\mu)}$ with singular value $\sigma_i$ we
  have
  \[
  \sigma_i^{2s}\norm{\phi_i}_3 = \norm{M^{s}_{p(\mu)} \phi_i}_3 \mper
  \]
  In terms of the dual norm, we can write
  \begin{align*}
    \norm{M^{s}_{p(\mu)} \phi_i}_3 & = \sup_{\norm{g}_\frac{3}{2} = 1} \E_{z\sim p(\mu_1)}[g(z)M^{s}_{p(\mu)}\phi_i(z)] \\
                                   & = \sup_{\norm{g}_\frac{3}{2} = 1} \E_{x\sim\mu_1^k}[g(p(x)) M^{s}_{\mu^k}\phi_i(p(x))] \\
                                   & \leq \sup_{\norm{g}_\frac{3}{2} = 1}
                                     \norm{g(p)}_\frac{3}{2}
                                     \norm{M^{s}_{\mu^k}(\phi_i \circ p)}_3 \\
                                   & = \norm{M^{s}_{\mu^k}(\phi_i \circ p)}_3 \\
  \end{align*}
  where in the second to last line we have applied H{\"o}lders inequality in the
  space $L^2([q]^k,\mu^k)$. Now note that by \pref{lem:hypercontractivity}\mcom
  \[
  \norm{M^{s}_{\mu^k}(\phi_i \circ p)}_3 \leq \norm{\phi_i \circ p}_2 = 1 \mper
  \]
  Combining the above inequalities completes the proof.
\end{proof}
Next we will need the following technical lemma regarding discrete distributions
taking only few values (e.g. functions $F \in \Sp(p)$ for some polymorphism $p$).
Informally, the lemma states that such distributions cannot be close to sums of
many i.i.d. random variables.
\begin{lemma} \label{lem:finitevsnormal}
  Let $F$ be a real-valued random variable taking only $q$ values. Let $X_i$ for $i
  \in [m]$ be i.i.d. real-valued random variables with $\E[X_i] = 0$, $\Var[X_i]
  \leq \frac{1}{m}$ and $\E[|X_i|^3] \leq B$. Let $S =
  \sum_i X_i$. Then for some absolute constant $c > 0$,
  \[
  \E[|F - S|] \geq \frac{1}{8q} - cBm \mper
  \]
\end{lemma}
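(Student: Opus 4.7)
The plan is to exhibit a bounded $1$-Lipschitz test function $\psi$ that sharply distinguishes the discrete distribution of $F$ from the near-Gaussian distribution of $S$, and then invoke \prettyref{thm:berryesseen} to justify the Gaussian approximation for $S$.

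First, let $V \subseteq \R$ denote the support of $F$, so $|V| \leq q$. I take
\[
\psi(x) \defeq \min\!\bigl(\dist(x, V),\ \tfrac{1}{4q}\bigr),
\]
which is $1$-Lipschitz, non-negative, bounded by $\tfrac{1}{4q}$, and satisfies $\psi(F) \equiv 0$. The $1$-Lipschitz property yields
\[
\E[|F - S|] \;\geq\; \bigl|\E[\psi(F)] - \E[\psi(S)]\bigr| \;=\; \E[\psi(S)],
\]
so it suffices to show $\E[\psi(S)] \geq \tfrac{1}{8q} - cBm$.

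Next, let $Y_1,\ldots,Y_m \sim N(0,\Var(X_i))$ be independent and set $Z = \sum_i Y_i$. Using the power-mean inequality $\Var(X_i)^{3/2} = \E[X_i^2]^{3/2} \leq \E[|X_i|^3]$ together with the Gaussian moment identity $\E[|Y_i|^3] = O(\Var(Y_i)^{3/2})$, we get $\E[|Y_i|^3] = O(B)$ and hence $\sum_i \E[|Y_i|^3] = O(Bm)$. Applying \prettyref{thm:berryesseen} to the $1$-Lipschitz function $\psi$ then gives
\[
\bigl|\E[\psi(S)] - \E[\psi(Z)]\bigr| \;\leq\; O(Bm),
\]
reducing the task to lower bounding $\E[\psi(Z)]$ for a genuine Gaussian $Z$.

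Finally, $Z$ has density at most $1/(\sigma\sqrt{2\pi})$ where $\sigma^2 = \Var(S)$, and the set $U = \{\psi < \tfrac{1}{4q}\}$ is a union of at most $q$ intervals of length $\tfrac{1}{2q}$, so it has Lebesgue measure at most $\tfrac{1}{2}$. When $\sigma \geq (2\pi)^{-1/2}$ this yields $\Pr[Z \in U] \leq \tfrac{1}{2}$ and therefore $\E[\psi(Z)] \geq \tfrac{1}{4q}\cdot\tfrac{1}{2} = \tfrac{1}{8q}$, completing the chain. The main obstacle is precisely controlling $\sigma$: the lemma as stated appears to implicitly require that $\Var(S)$ be bounded below (inherited in the application from a normalization such as $\E[F^2] = 1$ on the singular vectors to which the lemma is applied), and without such a lower bound one can only recover the weaker estimate $\sigma/(8q) - O(Bm)$ by scaling the truncation threshold in the definition of $\psi$ from $\tfrac{1}{4q}$ down to $\sigma/(4q)$.
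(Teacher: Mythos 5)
Your proof is correct and follows essentially the same route as the paper's: both use the $1$-Lipschitz distance-to-the-support function as the Berry--Esseen test function, replace $S$ by the moment-matched Gaussian, and use that the $\delta$-neighborhood of the $q$ atoms has Lebesgue measure $2q\delta$ to get anti-concentration (your truncation of $d$ at $\tfrac{1}{4q}$ and the inequality $\E[|F-S|]\ge \E[\psi(S)]$ are cosmetic variants of the paper's $\E[d(S)]\le \E[|F-S|]$ followed by setting $\delta=\tfrac1{4q}$). The caveat you flag is genuine and applies equally to the paper's own proof: the step asserting that $\mathcal{N}(0,v)$ puts mass at most $2q\delta$ on a set of measure $2q\delta$ silently assumes the Gaussian density is at most $1$, i.e.\ $v=\Var[S]\ge 1/(2\pi)$, and without some such lower bound the lemma as stated is false (take all $X_i\equiv 0$). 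In the application the relevant variance is the linearity $\Lin(F)$, which is close to $1$ exactly in the regime being ruled out, so the conclusion of \prettyref{thm:fullcordecay} survives, but the lemma statement should carry the variance hypothesis you identified.
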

\begin{proof}
  Let $\{a_j\}_{j \in [q]}$ be the set of real values that $F$ takes. Now define
  the function
  \[
  d(z) = \min_j |z - a_j|
  \]
  which simply measures the distance from $z$ to the nearest $a_j$. Let $v = \sum_i
  \Var[X_i]$ and let $Y_i$ be a Gaussian random variable with first two moments matching those of $X_i$. Note that $\sum_i Y_i$ is a Gaussian with mean zero and variance $v$. Further observe that $d$ is $1$-Lipschitz and so we may apply \pref{thm:berryesseen} to obtain
  \[
  |\E_{z\sim\mathcal{N}(0,v)}\left[d(z)\right] - \E\left[d(S)\right]| \leq cBm
  \]
  for some absolute constant $c$.

  Next observe that the set of intervals such that $d(z) \leq \delta$ has total length $2q\delta$. Thus, the normal distribution $\mathcal{N}(0,v)$ has probability mass at most $2q\delta$ on the region where $d(z) \leq \delta$. This implies that
  \[
  \E_{z \sim \mathcal{N}(0,v)} [d(z)] \geq \delta (1 - 2q\delta)
  \]
  Combining this with the previous inequality yields
  \[
  \E[d(S)] \geq \delta(1 - 2q\delta) - cBm
  \]

  Now let $H = F - S$. Since $F$ only takes the values $a_j$ we have
  \begin{align*}
    \E [d(S)] & = \E [d(F - H)] \\
              & = \E [\min_j |F - H - a_j|] \\
              & \leq \E[|a_j - H - a_j|] = \E[|H|] \\
  \end{align*}
  Thus we conclude that
  \[
  \E[|H|] \geq \delta(1 - 2q\delta) - cBm
  \]
  Setting $\delta = \frac{1}{4q}$ yields the desired result.
\end{proof}

\subsection{Correlation Decay}
Throughout the remainder of this section, whenever we refer to a joint probability distribution $\nu$ on $[q]\times [q]^m$ we will fix a singular value basis $\{ \phi_\alpha,\psi_\alpha \}$ for the operator $T_{\nu}^{\otimes k}$. Let $\nu_1$ and $\nu_2$ be the respective marginals of $\nu$. Whenever we have a function $F\in L^2([q]^k,\nu_1^k)$ we will write $\hat{F}_\alpha$ to denote the Fourier coefficients of $F$ with respect to the basis $\phi_\alpha$.

 Now we are ready to show that $\Lin_\mu(p)$ stays bounded away from one for operations $p$ that are transitive symmetric. First, we need the following
simple claim which asserts that corresponding degree one Fourier coefficients of $p$ are all equal.

\begin{claim} \label{claim:degonesym}
  Let $p:[q]^k \to [q]$ be a transitive symmetric operation and let $F \in \Sp(p)$.
  Let $\nu_1$ be any distribution on $[q]$ and fix a Fourier basis $\phi_\alpha$ for $L^2([q]^k,\nu_1^k)$.
  Let $\alpha,\beta$ be multi-indices with $|\alpha| = |\beta| = 1$. Further
  suppose $\alpha_i = \beta_j$ for the unique pair $i$ and $j$ such that $\alpha_i$
  and $\beta_j$ are non-zero. Then $\hat{F}_\alpha = \hat{F}_\beta$.
\end{claim}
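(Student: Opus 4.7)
The key identity is invariance of $F$ under the symmetry group of $p$. My plan is to first argue that transitive symmetry of $p$ lifts to invariance of every $F \in \Sp(p)$, and then translate that invariance into an equality among Fourier coefficients by using that the basis $\phi_\alpha$ is built as a tensor product of a fixed basis $\{\phi_0,\ldots,\phi_{q-1}\}$ on $L^2([q],\nu_1)$.

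For the first step, write $F = \sum_{\ell=1}^{q} c_\ell p_\ell$ where $p_\ell = \mathbf{1}[p = \ell]$. The hypothesis $p\circ \sigma = p$ (for the permutation $\sigma_{ij}$ guaranteed by transitive symmetry, with $\sigma_{ij}(i) = j$) clearly implies $p_\ell \circ \sigma_{ij} = p_\ell$ for every $\ell$, hence $F \circ \sigma_{ij} = F$. Here I use the shorthand $(F \circ \sigma)(x) \defeq F(x_{\sigma(1)}, \ldots, x_{\sigma(k)})$.

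For the second step, because the basis on each coordinate is the same, the permutation $\sigma$ acts on the product basis by permuting multi-indices. Concretely, for any $\alpha$,
\[
(\phi_\alpha \circ \sigma)(x) = \prod_{m=1}^{k} \phi_{\alpha_m}(x_{\sigma(m)}) = \prod_{m=1}^{k} \phi_{\alpha_{\sigma^{-1}(m)}}(x_m) = \phi_{\sigma\cdot\alpha}(x),
\]
where $(\sigma\cdot\alpha)_m \defeq \alpha_{\sigma^{-1}(m)}$. Expanding $F = \sum_\alpha \hat{F}_\alpha \phi_\alpha$ and applying $F \circ \sigma = F$ together with uniqueness of Fourier expansion gives $\hat{F}_\alpha = \hat{F}_{\sigma\cdot\alpha}$ for every $\alpha$.

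Finally I specialize to $\sigma = \sigma_{ij}$. The multi-index $\alpha$ has its unique non-zero entry equal to $r \defeq \alpha_i = \beta_j$ at position $i$, so $\sigma_{ij}\cdot\alpha$ has its unique non-zero entry equal to $r$ at position $\sigma_{ij}(i) = j$; that is, $\sigma_{ij}\cdot\alpha = \beta$. Hence $\hat{F}_\alpha = \hat{F}_\beta$, as required. There is no real obstacle here beyond carefully tracking the direction in which $\sigma$ acts on multi-indices; the whole argument is essentially bookkeeping once the invariance $F \circ \sigma_{ij} = F$ is noted.
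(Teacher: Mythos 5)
Your proof is correct and follows essentially the same route as the paper: both rest on the invariance $F\circ\sigma_{ij}=F$ inherited from $p\circ\sigma_{ij}=p$ together with the fact that the permutation merely relabels the tensor-product basis functions. The paper compresses this into a one-line change of variables in the expectation $\hat F_\alpha=\E[F(x)\phi_{\alpha_i}(x_i)]$, whereas you deduce it from uniqueness of the Fourier expansion; the content is the same.
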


\begin{proof}
  Let $\pi$ be a permutation such that $\pi(j) = i$ and $p(x) = p(\pi \circ x)$. Such a permutation always exists since $p$ is transitive symmetric. Now note
  \[
  \hat{F}_\alpha = \E [F(x)\phi_{\alpha_i}(x_i)] = \E [F(\pi \circ x)\phi_{\alpha_i}(x_i)]
  = \E [F(x)\phi_{\beta_j}(x_j)] = \hat{F}_\beta
  \]
\end{proof}

We turn now to the proof of the theorem. The main idea of the proof is to
repeatedly apply \pref{lem:cordecaysym} to reduce correlation while using
\pref{lem:threenorm} in conjunction with \pref{lem:finitevsnormal} to
control the linearity in each step.
\begin{theorem} \label{thm:fullcordecay}
  Let  $p_1\dots p_r$ be transitive symmetric polymorphisms with each $p_i : [q]^k
  \to [q]$. Let $\lambda$ be the minimum probability of an atom in the marginal
  $\mu_1$ of $\mu$. Then for any $\frac{1}{40c} > \eps > 0$ and $r =
  \Omega_q(\frac{\log \lambda}{\log \rho(\mu)}\log^2(\eps^{-1}))$
  \[
  \rho(p_1 \otimes p_2 \otimes \dots \otimes p_r(\mu)) \leq \eps
  \]
\end{theorem}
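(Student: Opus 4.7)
The plan is to combine the per-step decay of Lemma~\ref{lem:cordecaysym} with a quantitative lower bound on $1 - \Lin_\mu(p)$ for any transitive symmetric operation $p$. Writing $\mu_0 = \mu$ and $\mu_t = p_t(\mu_{t-1})$, Lemma~\ref{lem:cordecaysym} gives
\[
\rho(\mu_t) \leq \rho(\mu_{t-1})\Bigl(1 - \tfrac{1}{2}(1 - \Lin_{\mu_{t-1}}(p_t))(1 - \rho(\mu_{t-1})^2)\Bigr),
\]
so once I have a uniform bound $1 - \Lin_{\mu_{t-1}}(p_t) \geq \eta(q,\lambda,\rho)$ that stays positive throughout the iteration, I get geometric decay of $\rho$ and can read off the number of steps needed to drive $\rho(\mu_r)$ below $\epsilon$. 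The main obstacle is supplying this linearity bound; once I have it, the final iteration count is a straightforward calculation.

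The heart of the proof is therefore the following contradiction argument. Suppose $\Lin_\mu(p) > 1 - \delta$. Then there exists $F \in \Sp(p)$ with $\|F\|_2 = 1$ and $\Lin_\mu(F) > 1 - \delta$, and since $F$ is a linear combination of the indicator functions $p_1,\dots,p_q$, it takes at most $q$ distinct values. Parseval gives $\|F - F_{\mathrm{lin}}\|_2^2 \leq \delta$, where $F_{\mathrm{lin}} = \sum_{\alpha \in L_\mu}\hat{F}_\alpha \phi_\alpha$. By Claim~\ref{claim:degonesym}, the transitive symmetry of $p$ forces all degree-one Fourier coefficients of $F$ corresponding to the same singular vector $\phi_j$ to coincide, so $F_{\mathrm{lin}} = \sum_{i=1}^k Y_i$ with $Y_i = \sum_{j \in J} c_j \phi_j(x_i)$ i.i.d.\ across $i$, where $J = \{j : \sigma_j \geq \rho(\mu)^2\}$. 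The $Y_i$ are mean zero with $\Var[Y_i] = \sum_{j\in J} c_j^2 = \Lin_\mu(F)/k \leq 1/k$.

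To get the third-moment bound required by Lemma~\ref{lem:finitevsnormal}, I invoke the hypercontractive bound of Lemma~\ref{lem:threenorm} at the previous step: with $s$ chosen so that $\rho(\mu_{t-1})^{2s} \leq \lambda^{1/6}/\sqrt{2}$, every singular vector $\phi_j$ of $T_{\mu_t}$ satisfies $\|\phi_j\|_3 \leq \sigma_j^{-2s}$, and for $j \in J$ this yields $\|\phi_j\|_3 \leq \rho(\mu_t)^{-4s}$, which unwinds to a constant depending only on $q$ and $\lambda$. Combining with Cauchy--Schwarz to bound $\sum_j |c_j|$ by $\sqrt{q/k}$, I get $\E[|Y_i|^3] \leq B$ for some $B = B(q,\lambda,k)$ independent of the step. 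Lemma~\ref{lem:finitevsnormal} then gives $\E[|F - F_{\mathrm{lin}}|] \geq 1/(8q) - c B k$, while $\E[|F - F_{\mathrm{lin}}|] \leq \sqrt{\delta}$. Provided $k$ (or, if necessary, the effective arity obtained by composing a few consecutive polymorphisms) is large enough that the $cBk$ term is smaller than $1/(16q)$, I conclude $\delta \geq \Omega(1/q^2)$, i.e.\ $1 - \Lin_\mu(p) \geq \eta(q,\lambda)$.

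Plugging this lower bound back into the recurrence for $\rho(\mu_t)$ gives a contraction factor of the form $1 - \Omega(\eta(q,\lambda)(1-\rho^2))$ per step, from which $r = \Omega_q\!\bigl(\tfrac{\log\lambda}{\log\rho(\mu)}\log^2(\epsilon^{-1})\bigr)$ iterations suffice to drive $\rho(\mu_r)$ below $\epsilon$ (the $s = \Theta(\log\lambda/\log\rho)$ factor appears through the hypercontractive choice of $s$, and the extra logarithmic factor in $\epsilon^{-1}$ comes from the fact that when $\rho$ approaches $\epsilon$ one must re-choose $s$ along the iteration so that Lemma~\ref{lem:threenorm} remains applicable to the updated distribution). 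I expect the most delicate point to be the bookkeeping for this last step: the singular value basis $\{\phi_\alpha\}$ depends on the current distribution $\mu_t$, so the third-moment control must be carried forward between iterations, and one must verify that the hypercontractive parameter $s$ can be reset at each step without losing more than a constant factor in the per-step decay.
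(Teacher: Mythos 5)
Your proposal is correct and follows essentially the same route as the paper's proof: per-step contraction from Lemma~\ref{lem:cordecaysym}, with $1-\Lin$ bounded below by combining Claim~\ref{claim:degonesym}, the hypercontractive third-moment bound of Lemma~\ref{lem:threenorm}, and Lemma~\ref{lem:finitevsnormal}, and with consecutive operations composed into blocks to make the effective arity large enough that the Berry--Esseen error term is dominated. The only cosmetic difference is that the paper fixes $s=\Theta(\log\lambda/\log\rho(\mu))$ once and absorbs the $\rho(\mu^{(t)})^{-O(s)}$ deterioration into the block length $a=\Theta(s\log\eps^{-1})$, rather than re-choosing $s$ along the iteration, which is where the $\log^2(\eps^{-1})$ you predicted indeed arises.
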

\begin{proof}
  For the analysis we will break $p_1 \dots p_r$ into consecutive segments of
  length $a$, where $a$ is a parameter that we will set later. Formally, let $K =
  k^{a}$ and let $P_t:[q]^K \to [q]$ be defined as
  \[
  P_t(x) = p_{at + 1} \otimes \dots \otimes p_{at}(x)
  \]
  Let $\mu^{(1)} = \mu$ and $\mu^{(t)} = P_1 \otimes \dots \otimes P_{t-1}(\mu)$. Observe
  that each $P_t$ is again a transitive symmetric polymorphism. Next we control
  $\Lin_{\mu^{(t)}}(P_t)$ only in terms of $\rho(\mu^{(t)})$ and properties of the initial
  distribution $\mu$.

  Let $F \in \Sp(P_t)$ so that $\E_{\mu^{(t)}}[F] = 0$ and $\norm{F}_2 = 1$. Let $\{ \phi_\alpha,\psi_\alpha \}$ be a singular value basis for $T_{\mu^{(t)}}^{\otimes K}$ and let $\hat{F}_\alpha$ be the Fourier coefficients of $F:[q]^K \to \R$ with respect to the $\phi_\alpha$ basis. Now define the
  function $l_i:[q] \to \R$ to be
  \[
  l_i(x) = \sum_{\substack{\alpha \in L_{\mu^{(t)}}\\ \alpha_i \neq 0}} \hat{F}_\alpha \phi_{\alpha_i}(x)
  \]
  In words, each $l_i$ is simply the linear part of $F$ corresponding to the $i$th
  coordinate. Claim \ref{claim:degonesym} implies that for any pair $i$ and $j$ the
  sums for $l_i$ and $l_j$ are equal term-by-term. Thus, $l_i(x) = l_j(x)$ for all
  $i,j$.  Now note that
  \[
  F(x) = \sum_i l_i(x_i) + H(x)
  \]
  where $H(x) = \sum_{\alpha \notin L_{\mu^{(t)}}}\hat{F}_\alpha \phi_\alpha(x)$ is the
  non-linear part of $F$. Since $\norm{F}_2 = 1$ and the $l_i$ are all equal and
  orthogonal, it follows that $\norm{l_i}^2_2 \leq \frac{1}{K}$ for all $i$. This
  further implies that for any $\alpha \in L_{\mu^{(t)}}$ the coefficient
  $\hat{F}_\alpha \leq \frac{1}{\sqrt{K}}$.

  Thus, the $l_i(x_i)$ are i.i.d random variables with mean zero and variance at most $\frac{1}{K}$. Further note that
  \[
  \norm{l_i}_3 \leq \frac{1}{\sqrt{K}}\sum_{\substack{\alpha \in L_{\mu^{(t)}}\\ \alpha_i \neq 0}}
  \norm{\phi_{\alpha_i}}_3 \mper
  \]
  Let $s = \Omega\left(\frac{\log\lambda}{\log\rho}\right)$. Now we apply \pref{lem:threenorm} with $p = P_1 \otimes \dots \otimes
  P_{t-1}$ to obtain
  \[
  \frac{1}{\sqrt{K}}\sum_{\substack{\alpha \in L_{\mu^{(t)}}\\ \alpha_i \neq 0}} \norm{\phi_{\alpha_i}}_3
  \leq \frac{1}{\sqrt{K}}\sum_{\substack{\alpha \in L_{\mu^{(t)}} \\ \alpha_i \neq 0}}\sigma_\alpha^{-2s}
  \leq \frac{q}{\sqrt{K}}\rho(\mu^{(t)})^{-2s} \mcom
  \]
  where in the last inequality we have use the fact that $\sigma_\alpha \leq
  \rho(\mu^{(t)})$ for $\alpha \neq 0$, and that the sum has $q$ terms.
  Therefore, since $F(x)$ is a random variable taking only $[q]$ different values
  we may apply \pref{lem:finitevsnormal} to obtain
  \[
  \E[|H(x)|] \geq \frac{1}{8q} - c\norm{l_i}_3^3K \geq \frac{1}{8q} -
  cq^3K^{-\frac{1}{2}}\rho(\mu^{(t)})^{-6s} \mper
  \]
  Note that setting $a = 4\log(cq^3\eps^{-6s})$ yields
  \[
  K^{-\frac{1}{2}} = k^{-\frac{1}{2}a} \leq (cq^3\eps^{-6s})^{-2} \mper
  \]
  Thus as long as $\rho(\mu^{(t)}) \geq \eps$ we have
  \[
  \E[|H(x)|] \geq \frac{1}{8q} - (cq^3\eps^{-6s})^{-1} > \frac{1}{10q}
  \]
  where the last inequality follows from $\eps < \frac{1}{40c}$.
  So, letting $\delta = \E[H(x)^2] \geq \E[|H(x)|]^2 = \frac{1}{100q^2}$ we obtain $\Lin_{\mu^{(t)}}(P_t) \leq 1 - \delta$.

  Now we apply \pref{lem:cordecaysym} to $P_t$ to obtain
  \begin{align*}
    \rho(\mu^{(t+1)}) & \leq \rho(\mu^{(t)})\left(1 - \frac{1}{2}(1 - \Lin(\mu^{(t)}))(1 - \rho(\mu^{(t)})^2)\right) \\
                      & \leq \rho(\mu^{(t)})\left(1 - \frac{\delta}{2}(1 - \rho(\mu^{(t)})^2)\right) \\
                      & = \left(1 - \frac{\delta}{2}\right)\rho(\mu^{(t)}) + \frac{\delta}{2}\rho(\mu^{(t)})^3
  \end{align*}
  Solving this recurrence shows that $\rho(\mu^{(t)}) \leq \eps$ after $t = O_q(\log(\eps^{-1}))$ steps. Since each step corresponds to $a$ applications of polymorphisms, we get that the total number of applications required is
  \[
  r = \Omega_q(a\log(\eps^{-1})) = \Omega_q(s\log^2(\eps^{-1})).
  \]
\end{proof}

We now use the connection given in \pref{lem:corr-stat-distance} between correlation and statistical distance to prove \pref{thm:corr-decay}.
\begin{proof}[Proof of \pref{thm:corr-decay}]
  Let $P = p_1 \otimes p_2 \otimes \dots \otimes p_r$. The proof that $\norm{P(\mu) - P(\mu^\times)}_1 \leq \eta$ is by a hybrid argument. We define intermediate distributions $\pi^{(i)}$ by drawing a sample $X$ from $\mu$ and then independently re-sampling each of the coordinates $j > i$ from their respective marginals $\mu_j$. Let $Y_j$ be the random variable corresponding to an independent sample from the marginal $\mu_j$. We will use the notation $\pi^{(i)} = (X_1,\dots,X_i,Y_{i+1},\dots,Y_n)$.

  Note that since the $Y_j$ are independent of each other and of the $X_l$ we have
  \[
  \norm{P(\pi^{(i-1)}) - P(\pi^{(i)})}_1 = \norm{\Big(P(X_1),\dots ,P(X_{i-1}),P(Y_i)\Big) - \Big(P(X_1),\dots,P(X_i)\Big)}_1
  \]
  Let $\eps = \frac{\eta}{qn}$. By \pref{thm:fullcordecay} we have
  $\rho\Big(P(X_i),(P(X_1),\dots,P(X_{i-1}))\Big) \leq \eps$. Further, since $Y_i$
  is independent of $X_1,\dots,X_{i-1}$ we may apply \pref{lem:corr-stat-distance} to obtain
  \[
  \norm{\Big(P(X_1),\dots ,P(X_{i-1}),P(Y_i)\Big) - \Big(P(X_1),\dots,P(X_i)\Big)}_1 \leq q\eps
  \]
  Now since $\pi^{(0)} = \mu^\times$ and $\pi^{(n)} = \mu$ we have by the triangle inequality
  \[
  \norm{P(\mu) - P(\mu^\times)}_1 \leq \sum_{i = 0}^{n-1} \norm{P(\pi^{(i)}) - P(\pi^{(i + 1)})}_1 \leq nq\eps = \eta
  \]
\end{proof}
 \label{sec:corr-decay}

\section{Approximate Polymorphisms for CSPs} \label{sec:maxcsps}
\subsection{Background} \label{sec:cspdefs}

\paragraph{Constraint Satisfaction Problems}

Fix an alphabet $[q]$.  A CSP $\Lambda$ over $[q]$ is given by a
family of payoff functions $\Lambda = \{c:[q]^k \to [-1,1]\}$.
An instance of \maxl consists of a set of variables $\cV = \{X_1,\ldots,
X_n\}$ and a set of constraints $\cC = \{C_1,\ldots,C_m\}$ where each $C_i(X) =
c(X_{i_1},X_{i_2},\ldots, X_{i_k})$ for some $c \in \Lambda$.  The set
$\cC$ is equipped with a probability distribution $w:\cC \to \R^+$.

The goal is to find an assignment $x \in [q]^{\cV}$ that maximizes
$$\val_\inst(x) \defeq  \sum_{C \in \cC}w(C)C(x) \mper$$
If the probability distribution $w$ is clear from the context, we will
write
$$\val_\inst(x) = \E_{C \sim \cC}\left[C(x)\right] \mper$$
\begin{remark}
  A natural special case of the above definition is the set of
  unweighted constraint satisfaction problems wherein the weight
  function $w: \cC \to \R_+$ is uniform, i.e.,
$$\val_{\inst}(x) = \frac{1}{|\cC|} \sum_{C\in \cC} C(x) \mper$$
In terms of approximability, the unweighted case is no easier than the
weighted version since for every constant $\eta>0$, there is a polynomial
time reduction from weighted version to its unweighted counterpart with a
loss of at most $\eta$ in the approximation.
\end{remark}

\paragraph{Dictatorship tests}

\begin{definition}
Fix constants $-1 \leq s \leq c \leq 1$, an integer $R \in \N$ and a family of
functions $\cF = \{A:[q]^R \to [q]\}$.  A $(c,s)$-dictatorship test for
\maxl against the family $\cF$ consists of an instance $\inst$
of \maxl with the following properties:  The variables of
$\inst$ are identified with $[q]^R$ and therefore an assignment to
$\inst$ is given by $p: [q]^R \to [q]$.
  \begin{itemize}
  \item(completeness) For every $i \in [R]$, the $i^{th}$ dictator assignment $p_i :
    [q]^R \to [q]$ given by $p_i(x) = x^{(i)}$ satisfies $\inst(p_i)\geq c$.

  \item(soundness) For every function $p \in \cF$, we have $\inst(p) \leq s$
 \end{itemize}

\end{definition}

  The intimate relation between polymorphisms and dictatorship testing was
observed in \cite{KunS09}, here we spell out the connection to
approximate polymorphisms.  First, for a function family $\cF = \{ p: [q]^R \to [q]\}$, we will say an
approximate polymorphism $\cP$ is {\it supported on} $\cF$ if
probability distribution $\cP$ is supported only on functions within
the family $\cF$.

\begin{theorem} \label{thm:poly-to-dict-equivalence}
 Given a CSP $\Lambda$, a natural number $R \in \N$ and a finite family of
 functions $\cF = \{p:[q]^R \to [q]\}$ closed under permutation of inputs, the following are equivalent:
\begin{itemize}
\item \maxl does not admit a $(c,s)$-approximate polymorphism
  supported on $\cF$.
\item There exists a $(c,s)$-dictatorship test for \maxl
  against the family $\cF$.
\end{itemize}
\end{theorem}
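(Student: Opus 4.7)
The plan is to prove the equivalence in two directions, with the reverse direction being the substantive one requiring a minimax/LP-duality argument.

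For the forward direction (dictatorship test against $\cF$ implies no $(c,s)$-approximate polymorphism supported on $\cF$), the idea is a one-line calculation. Suppose $\inst$ is such a dictatorship test and $\cP$ is a polymorphism. Plug in the $R$ dictator assignments $p_1,\ldots,p_R$ to $\inst$, each of which satisfies $\inst(p_i)\geq c$ by completeness. The key observation is that for any $p:[q]^R\to[q]$, applying $p$ coordinate-wise to the dictators yields $p$ itself: $p(p_1,\ldots,p_R)(x)=p(p_1(x),\ldots,p_R(x))=p(x_1,\ldots,x_R)=p(x)$. The polymorphism condition then forces $\Esymb_{p\sim\cP}[\inst(p)]\geq s$, which is in tension with the soundness bound $\inst(p)\leq s$ for each $p\in\supp(\cP)\subseteq\cF$. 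Boundary cases where equality can hold are handled by passing to $(c,s-\eps)$ and taking a limit.

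For the reverse direction I would first recast both sides in a common language. Fix the domain $[q]^R$. A weighted instance $\inst'$ on variable set $[q]^R$, together with the $R$ coordinate-projection assignments, is equivalent to a probability distribution $\mu$ over \emph{gadgets}, where a gadget is a tuple $(c,\ell_1,\ldots,\ell_k)$ with $c\in\Lambda$ and labels $\ell_j\in[q]^R$ indexing the $k$ variables. Under this bijection, the value of an assignment $p:[q]^R\to[q]$ on $\inst'$ equals $\Esymb_{g\sim\mu}[c(p(\ell_1),\ldots,p(\ell_k))]$, and the value of the $i$-th dictator equals the value of the original assignment $X^{(i)}_v=\ell_v(i)$ on the underlying instance. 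Thus a $(c,s)$-dictatorship test against $\cF$ is precisely a gadget distribution $\mu$ with $\Esymb_\mu[\text{dictator-}i\text{ value}]\geq c$ for every $i\in[R]$ and $\Esymb_\mu[\text{value under }p]\leq s$ for every $p\in\cF$; and the non-existence of a $(c,s)$-approximate polymorphism supported on $\cF$ says that for every $\cP\in\Delta(\cF)$ some gadget distribution $\mu_\cP$ satisfying the dictator lower bound achieves $\Esymb_{p\sim\cP,g\sim\mu_\cP}[\text{value}]<s$.

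With this reformulation the reverse direction reduces to swapping quantifiers via the von Neumann minimax theorem. Let $\cK$ denote the convex set of gadget distributions $\mu$ satisfying the dictator lower bound, and consider the bilinear payoff $(\mu,\cP)\mapsto \Esymb_{p\sim\cP,g\sim\mu}[c(p(\ell_1),\ldots,p(\ell_k))]$ on $\cK\times\Delta(\cF)$. The no-polymorphism hypothesis asserts $\max_\cP\min_\mu<s$; minimax then yields $\min_\mu\max_\cP<s$, producing a single $\mu^\star\in\cK$ such that the value under every point-mass $\cP=\delta_p$ is at most $s$. Unpacking, $\mu^\star$ is exactly a $(c,s)$-dictatorship test against $\cF$. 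The hypothesis that $\cF$ is closed under permutations of inputs is used here to ensure that the polymorphism condition is symmetric in the ordering of the tuple $(X^{(1)},\ldots,X^{(R)})$, so that the correspondence with gadget distributions (which carry no natural ordering on the $R$ coordinates) is clean.

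The main obstacle is making the minimax step fully rigorous: one must check compactness of $\cK$ (harmless since $[q]^R$, $\cF$, and the underlying set of gadget types for fixed arity are finite), continuity of the bilinear form, and — most delicately — reconcile the strict inequality produced by minimax with the non-strict $\geq,\leq$ appearing in the definitions of polymorphism and dictatorship test. This is done by passing to an $\eps$-weakened $(c,s-\eps)$ polymorphism condition, constructing $\mu^\star_\eps$ for each $\eps>0$, and extracting a limit point; the same $\eps$-slack argument patches the equality-on-the-boundary issue in the forward direction.
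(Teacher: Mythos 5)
Your proposal follows essentially the same route as the paper: the easy direction by plugging the $R$ dictator assignments into the test and observing $p(p_1,\ldots,p_R)=p$, and the substantive direction by a minimax argument over the convex set of distributions on $\cF$ versus the convex set of instances-with-$R$-assignments, using closure of $\cF$ under permutations to symmetrize so that every individual dictator (not just their average) meets the completeness bound. The only difference is presentational — you are more explicit about compactness and the strict-vs-nonstrict boundary issue, while the paper is more explicit about the permutation-averaging step in the test construction — but the underlying argument is the same.
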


\begin{proof}
  Suppose $\maxl$ does not admit $(c,s)$-approximate
  polymorphism supported in $\cF$.
  In other words, for every probability distribution $\cP$ over $\cF$,
  there exists an instance $\inst$ and $R$ solutions
  $X^{(1)},\ldots,X^{(R)}$ of average value $c$ such that
  the value of $\E_{p \in \cP}\inst(p(X^{(1)},\ldots,X^{(R)})) \leq
  s$
  %


  Notice that the set of probability distributions $\cP$ over
  $\cF$ forms a convex set.  Furthermore, the set of all instances $\inst$ of
  \maxl along with $R$ assignments $X^{(1)},\ldots,X^{(R)}$ with average
  value at least  $c$, form a convex set.  Specifically, given two instances
  $\inst_1, \inst_2$ and any $\theta \in [0,1]$, one can construct the instance
  $\theta \inst_1 + (1-\theta) \inst_2$ by taking a disjoint union of the two
  instances weighted appropriately.  Since the set of all probability
  distributions $\cP$ over $\cF$ is a compact convex set, by
  min-max theorem there exists a single instance $\inst$ and a
  set of $R$ solutions for it, that serve as a counterexample against
  every probability distribution $\cP$ over the family $\cF$.
  %

  We will use the instance $\inst$ and the set of solutions
  $X^{(1)},\ldots,X^{(R)}$ to create the following dictatorship
  test.

  \begin{itemize}
  \item Sample a random constraint $C$ from $\inst$.  Let $C$ be
    a constraint on variables $v_{i_1},\ldots,v_{i_k}$ in
    $\inst$.
  \item Pick a random permutation $\pi :[R] \to [R]$ and set
    $y_{j} = (X_{i_j}^{(\pi(1))},\ldots,X_{i_j}^{(\pi(R))})$ for
    $j \in [R]$.
  \item Test the constraint $C$ on $p(y_{i_1}),\ldots,p(y_{i_k})$.
  \end{itemize}

  Suppose $p(y) = y^{(i)}$ for some $i \in [R]$.  In this case,
  it is easy to see that over the random choice of constraint
  $C$ and permutation $\pi$, the expected probability of success
  of the dictatorship test is exactly equal to the average value
  of the solutions $X^{(1)},\ldots,X^{(R)}$.

  Let $p:[q]^R \to [q]$ denote any function in the family $\cF$.
  If probability of success of $p$ is greater
  than $s$ then there exist a
  permutation $\pi:[R] \to [R]$ for which,
  $$ \inst(p(X^{(\pi(1))},\ldots,X^{(\pi(R))})) \geq
  s \mper$$
  This is a contradiction, since $p \circ \pi$ would also be a function in
  the family $\cF$.  Therefore, for every function $p \in \cF$, its value on
  the dictatorship test is at most $s$.  This completes the proof of one direction of the
  implication.

  Now, we will show the other direction, namely that if there exists a
  $(c,s)$-dictatorship test then there is no $(c,s)$-approximate polymorphism.
  Conversly, suppose there is a $(c,s)$-dictatorship test against the
  family $\cF$.  Consider the instance $\inst$ given by the dictatorship
  test and the family of $R$ assignments corresponding to the dictator
  functions, i.e., $p_i(x) = x^{(i)}$.  By the completeness of the
  dictatorship test, for each of the dictator assignments $p_i$, we
  will have $\inst(p_i) \geq c$.  On the other hand, by the soundness of
  the dictatorship test, for each function $p$ in the family $\cF$,
  $\inst(p) \leq s$.  This implies that there does not exist a
  $(c,s)$-approximate polymorphism supported on the family $\cF$.

\end{proof}

\paragraph{Unique Games Conjecture}
For the sake of completeness, we recall the unique games conjecture here.
\begin{definition}
  An instance of Unique Games represented as $\Gamma =
  (\mathcal{A}\cup \mathcal{B},E,\Pi,[R])$, consists of a bipartite
  graph over node sets $\mathcal{A}$,$\mathcal{B}$ with the edges $E$
  between them.  Also part of the instance is a set of labels $[R] =
  \{1,\ldots,R\}$, a a set of permutations $\pi_{ab} : [R]
  \rightarrow [R]$ for each edge $e = (a,b) \in E$.  An assignment
  $\Lambda$ of labels to vertices is said to satisfy an edge $e =
  (a,b)$, if $\pi_{ab}(\Lambda(a)) = \Lambda(b)$.  The objective is to
  find an assignment $\Lambda$ that satisfies the maximum number
  of edges.
\end{definition}
The unique games conjecture of Khot \cite{Khot02a} asserts that the
unique games CSP is hard to approximate in the following sense.
\begin{conjecture}(Unique Games Conjecture \cite{Khot02a}) For
  all constants $\delta > 0$, there exists large enough constant $R$
  such that given a bipartite unique games instance $\Gamma =
  (\mathcal{A} \cup \mathcal{B},E, \Pi = \{\pi_{e}:[R]\rightarrow [R]
  ~ : ~ e \in E\},[R])$ with number of labels $R$, it is NP-hard to
  distinguish between the following two cases:
  \begin{itemize}
    \itemsep=0ex
  \item {\sf $(1-\delta)$-satisfiable instances:} There exists an
    assignment $\Lambda$ of labels that satisfies a
    $(1-\delta)$-fractional of all edges.
  \item {\sf Instances that are not $\delta$-satisfiable:}  No
    assignment satisfies more than a $\delta$-fraction of the
    constraints $\Pi$.
  \end{itemize}
\end{conjecture}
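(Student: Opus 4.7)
The statement to be established is the Unique Games Conjecture itself, which is a famous open problem in complexity theory with no known proof; so any proposal here is necessarily speculative. That said, the natural template for a proof attempt is the standard PCP-style hardness-of-approximation pipeline. The plan would be to start from a problem already known to be NP-hard in a strong ``gap'' sense, such as a \textsc{Label Cover} instance with perfect completeness and inverse-polynomial soundness (obtained from the PCP theorem combined with Raz's parallel repetition), and then construct a reduction to \textsc{Unique Games} that (i) preserves near-perfect completeness up to an arbitrary loss $\delta$, and (ii) enforces the bijective ``uniqueness'' structure on the constraint permutations $\pi_{ab}$, while still amplifying the soundness down to $\delta$.

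The key technical step I would attempt is a gadget reduction built on top of a long-code-style encoding. One would replace each variable of the Label Cover instance with a block of $q^R$ Boolean or $[R]$-valued variables indexed by a long code, and replace each Label Cover constraint with a distribution over unique constraints between two such blocks, designed so that (a) the intended solution (each block encodes the correct label) satisfies all but a $\delta$-fraction of the constraints, and (b) any assignment doing substantially better than $\delta$ can be decoded, via Fourier/influence analysis, back to an assignment satisfying non-negligibly many Label Cover constraints. The machinery one would draw on is exactly the toolkit used in UG-hardness reductions for specific CSPs: noise operators, the invariance principle, bounds on influences of low-degree functions, and the dictatorship-testing framework that the present paper reformulates in the language of approximate polymorphisms (\pref{thm:poly-to-dict-equivalence}).

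The main obstacle --- and the reason UGC remains open --- is reconciling \emph{near-perfect} completeness with the rigidity of \emph{bijective} constraints. For general Label Cover the constraint relations are arbitrary projections, which admit strong soundness reductions but break uniqueness. Known reductions that enforce uniqueness (most notably the 2-to-2 Games Theorem of Khot--Minzer--Safra and its predecessors) achieve only completeness bounded away from $1$, typically around $\tfrac12+\delta$, falling short of the $1-\delta$ required by the conjecture. Closing this gap appears to require either a fundamentally new gadget that preserves uniqueness without leaking constraint entropy into projections, or a Grassmann-style expansion-through-pseudorandom-sets argument pushed to the near-complete regime. I would concentrate the bulk of the effort on trying to amplify completeness in the KMS framework --- e.g., by iterating the 2-to-2 reduction or composing it with a suitable PCP layer --- while controlling the soundness via improved hypercontractive inequalities on the Grassmann graph. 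Given the status of this conjecture, I would expect any such attempt to fail, but isolating \emph{why} a specific amplification step breaks down is itself likely to be the most informative outcome of the plan.
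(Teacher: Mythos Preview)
You have correctly identified the situation: the statement is a \emph{conjecture}, not a theorem, and the paper does not prove it. In the paper it is introduced with the words ``For the sake of completeness, we recall the unique games conjecture here,'' and it is then simply stated as background. There is therefore no proof in the paper to compare your proposal against.

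Your discussion of why a PCP-plus-long-code approach runs into the completeness-versus-uniqueness obstruction is accurate and well informed, but it is commentary on an open problem rather than a proof; you say as much yourself. The appropriate response to a request to prove this statement is exactly the one you gave at the outset: this is the Unique Games Conjecture, it is open, and no proof is expected or provided.
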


\subsection{Quasirandom functions}

The function family of interest in this work are those with no
influential coordinates.  To make the definition precise, we begin by
recalling a few analytic notions.
\paragraph{Low Degree Influences}

%
%
%
Fix a probability distribution $\mu$ over $[q]$.  Let $\{\chi_0,\ldots,\chi_{q-1}\}$
be an orthonormal basis for the vector space $L_2([q],\mu)$.
Without loss of generality, we can fix a basis such that $\chi_0 = 1$.
Given a function $f:[q]^k \to \R$, we can write $f$ as,
$$f = \sum_{\sigma \in \N^k} \hat{f}_\sigma \chi_\sigma \mcom$$
where $\chi_\sigma(x) \defeq \prod_{j = 1}^k \chi_{\sigma_j}(x_j)$.
Define the degree $d$ influence of the $i^{th}$ coordinate of $f$
under the probability distribution $\mu$ as,
$$ \Inf^{< d}_{i,\mu}(f) \defeq \sum_{ \sigma \in \N^k, \sigma_i \neq 0, |\sigma|
  < d}
\hat{f}_\sigma^2 \mper$$
More generally, for a vector valued function $f: [q]^k \to \R^D$, we will set
$$\Inf_{i,\mu}^{< d}(f) \defeq \sum_{j \in[D]} \Inf_{i,\mu}^{<  d}(f_j) \mper$$
A useful property of low-degree influences is that their sum is
bounded as expounded in the following lemma.

\begin{lemma} \label{lem:suminfluences}
For every function $f: [q]^k \to \R$ and every probability distribution
$\mu$ over $[q]$,
$$\sum_{i \in [k]} \Inf_{i,\mu}^{< d} (f) \leq d \cdot \Var_{\mu}(f) \mcom$$
where $\Var_{\mu}(f)$ denotes the variance of $f$ under the probability
distribution $\mu$.
\end{lemma}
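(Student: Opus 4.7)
The plan is to prove this by expanding both sides in the orthonormal basis $\{\chi_\sigma\}_{\sigma \in \N^k}$ and swapping the order of summation. Write $f = \sum_{\sigma} \hat{f}_\sigma \chi_\sigma$. Since $\chi_0 \equiv 1$, the constant Fourier coefficient $\hat{f}_0$ equals $\E_\mu[f]$, and by Parseval's identity the variance is given by
\[
\Var_\mu(f) = \sum_{\sigma \neq 0} \hat{f}_\sigma^2.
\]

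Next I would unfold the definition of low-degree influences and exchange the two sums. By definition,
\[
\sum_{i=1}^{k} \Inf_{i,\mu}^{<d}(f) = \sum_{i=1}^{k} \sum_{\substack{\sigma : \sigma_i \neq 0 \\ |\sigma| < d}} \hat{f}_\sigma^2 = \sum_{\substack{\sigma \neq 0 \\ |\sigma| < d}} \hat{f}_\sigma^2 \cdot |\{i : \sigma_i \neq 0\}| = \sum_{\substack{\sigma \neq 0 \\ |\sigma| < d}} |\sigma|\, \hat{f}_\sigma^2,
\]
where the last equality uses the definition $|\sigma| = |\{i : \sigma_i \neq 0\}|$ from the paper's setup.

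To finish, I would bound $|\sigma| \leq d$ on the support of the sum (since $|\sigma| < d$ implies $|\sigma| \leq d$) and then drop the degree restriction to bound by the full variance:
\[
\sum_{\substack{\sigma \neq 0 \\ |\sigma| < d}} |\sigma|\, \hat{f}_\sigma^2 \;\leq\; d \sum_{\substack{\sigma \neq 0 \\ |\sigma| < d}} \hat{f}_\sigma^2 \;\leq\; d \sum_{\sigma \neq 0} \hat{f}_\sigma^2 \;=\; d \cdot \Var_\mu(f).
\]

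There is really no obstacle here; the only mild subtlety is the bookkeeping in the swap of summation, namely recognizing that each Fourier coefficient $\hat{f}_\sigma^2$ is counted exactly $|\sigma|$ times when one sums the indicator $[\sigma_i \neq 0]$ over $i \in [k]$. Everything else is a direct application of Parseval.
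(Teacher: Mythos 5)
Your proof is correct, and it is the standard argument: the paper states this lemma without proof, and the double-counting of each $\hat{f}_\sigma^2$ exactly $|\sigma|$ times, followed by the bound $|\sigma| \leq d$ and Parseval, is precisely the intended justification. (In fact $|\sigma| < d$ gives the slightly stronger bound $(d-1)\Var_\mu(f)$, but $d\cdot\Var_\mu(f)$ is what is claimed and suffices.)
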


Given an operation $p:[q]^k \to [q]$, we will associate a function
$\tilde{p}:[q]^k \to \ssimp_q$ given by $\tilde{p}(x) = e_{p(x)}$ where in
$e_{p(x)} \in \ssimp_q$ is the basis vector along $p(x)^{th}$
coordinate.  We will abuse notation and use $p$ to denote both the
$[q]$-valued function $p$ and the corresponding real-vector valued
function $\tilde{p}$.  For example, $\Inf_{i,\mu}(p)$ will denote the influence
of the $i^{th}$ coordinate on $\tilde{p}$.

\begin{definition}
 An approximate polymorphism $\cP$ is $(\tau,d)$-quasirandom if for every
 probability distribution $\mu \in \ssimp_q$,
$$\E_{p\in \cP}\left[ \max_{i } \Inf_{i,\mu}(p) \right] \leq \tau$$
\end{definition}

%
%
%
%
The following lemma shows that transitive symmetries imply
quasirandomness.
\begin{lemma}
  An operation $p:[q]^k \to [q]$ that has a transitive symmetry
  is $(\frac{qd}{k}, d)$-quasirandom for all $d \in \N$.
\end{lemma}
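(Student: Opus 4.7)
The plan is to exploit transitive symmetry to equate the influences of all $k$ coordinates of $p$, and then apply the sum-of-influences bound (\pref{lem:suminfluences}) componentwise to the vector-valued representation $\tilde p : [q]^k \to \ssimp_q$. Recall that $\tilde p_l(x) = \mathbf{1}[p(x) = l]$, so each $\tilde p_l$ takes values in $\{0,1\}$ and $\Inf_{i,\mu}^{<d}(\tilde p) = \sum_l \Inf_{i,\mu}^{<d}(\tilde p_l)$ by definition.

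First I would fix an arbitrary marginal distribution $\mu \in \ssimp_q$ and an orthonormal basis $\{\chi_0,\dots,\chi_{q-1}\}$ of $L_2([q],\mu)$ with $\chi_0 \equiv 1$, giving the tensor-product basis $\chi_\alpha(x) = \prod_r \chi_{\alpha_r}(x_r)$. For any $i,j\in[k]$ transitive symmetry supplies a permutation $\sigma \in S_k$ with $\sigma(i) = j$ and $p \circ \sigma = p$, which forces $\tilde p_l \circ \sigma = \tilde p_l$ for every $l$. Since $\chi_\alpha(\sigma x) = \chi_{\alpha\circ\sigma^{-1}}(x)$, the invariance $\tilde p_l \circ \sigma = \tilde p_l$ translates into $\widehat{\tilde p}_{l,\alpha} = \widehat{\tilde p}_{l,\alpha\circ\sigma^{-1}}$ for all $\alpha$. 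The change of variables $\beta = \alpha\circ\sigma^{-1}$ preserves the total degree $|\alpha| = |\beta|$ and carries the condition $\alpha_i \neq 0$ into $\beta_j \neq 0$, so
\[
\Inf_{i,\mu}^{<d}(\tilde p_l) = \sum_{|\alpha|<d,\,\alpha_i\neq 0} \widehat{\tilde p}_{l,\alpha}^{\,2} = \sum_{|\beta|<d,\,\beta_j\neq 0} \widehat{\tilde p}_{l,\beta}^{\,2} = \Inf_{j,\mu}^{<d}(\tilde p_l).
\]
Summing over $l$ gives $\Inf_{i,\mu}^{<d}(\tilde p) = \Inf_{j,\mu}^{<d}(\tilde p)$ for all $i,j$.

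Next I would invoke \pref{lem:suminfluences} on each scalar component and interchange the order of summation:
\[
\sum_{i=1}^{k}\Inf_{i,\mu}^{<d}(\tilde p) \;=\; \sum_{l=1}^{q}\sum_{i=1}^{k}\Inf_{i,\mu}^{<d}(\tilde p_l) \;\leq\; d\sum_{l=1}^{q}\Var_\mu(\tilde p_l) \;\leq\; qd,
\]
where the final inequality uses $\Var_\mu(\tilde p_l) \leq 1$ since $\tilde p_l\in\{0,1\}$. Combined with the equality of all $k$ terms on the left-hand side from the previous paragraph, this yields $k\cdot \Inf_{i,\mu}^{<d}(\tilde p) \leq qd$, hence $\max_i \Inf_{i,\mu}^{<d}(\tilde p) \leq qd/k$ for every $\mu$, which is precisely $(qd/k, d)$-quasirandomness of the point-mass distribution on $p$.

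I do not expect any significant obstacle: the only delicate step is the Fourier bookkeeping needed to see that permutation invariance of a function implies that the Fourier mass on a given variable is the same as that on any other variable, and this is routine once one tracks how $\sigma$ acts on the tensor basis. The slack factor of $q$ comes from the crude bound $\sum_l \Var_\mu(\tilde p_l) \leq q$; one could tighten this to $1 - 1/q$ using $\sum_l \tilde p_l \equiv 1$, but the looser constant is all that is needed for the stated bound.
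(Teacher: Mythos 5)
Your proof is correct and follows exactly the two-step argument the paper itself uses: transitive symmetry forces all $k$ coordinate influences to be equal, and the sum of degree-$d$ influences of the $\ssimp_q$-valued representation is at most $qd$, giving $\max_i \Inf_{i,\mu}^{<d}(p) \leq qd/k$. The paper states these two facts without proof, so your Fourier bookkeeping and the componentwise application of \pref{lem:suminfluences} simply supply the details it omits.
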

\begin{proof}
  The lemma is a consequence of two facts, first the sum of degree $d$
  influences of a function $p:[q]^k \to [q]$ is always at most $qd$.
  Second, if the operation $p$ admits a transitive symmetry, then
  the degree $d$ influences of each coordinate is the same.
\end{proof}

\paragraph{Noise Operator}
For a function $p : [q]^k \to \R$ and $\rho \in [0,1]$, define $T_{\rho} p$ as
\begin{displaymath} T_{\rho} p (x) = \E_{y \sim_{\rho} x}[p(y)]\end{displaymath}
where $y \sim_{\rho} x$ denotes the following distribution,
$$ y_i = \begin{cases} x_i & \text{ with probability } \rho \\ \text{
    independent sample from } \mu & \text{ with probability } 1-\rho \mper
\end{cases}$$
The multilinear polynomial associated with $T_{\rho} p$ is given by
\begin{displaymath}
  T_{\rho} p =  \sum_{\sigma} \hat{p}_{\sigma}  \rho^{|\sigma|} \chi_{\sigma}\mper
\end{displaymath}

\paragraph{Approximation Thresholds}

For each $\tau > 0$ and $ d \in \N$, let $\cF_{\tau,d}$ denote the family of all
$(\tau,d)$-quasirandom functions.

\begin{definition}
  Given a CSP $\Lambda$, and a constant $c \in [-1,1]$ define
$$s_{\Lambda}(c) \defeq \sup \left\{ s | \forall \tau > 0, d \in \N,
  \exists \text{ a } (\tau,d)\text{-quasirandom }  (c , s)\text{-approximate polymorphism for \maxl} \right\}$$
Moreover, let
$$\alpha_{\Lambda} \defeq \inf_{c \geq 0} \frac{s_{\Lambda}(c)}{c}$$
\end{definition}
The following observations are immediate consequences of the
definition of $s_{\Lambda}$.
\begin{observation}\label{obs:approxpoly}
The map $s_{\Lambda}: [-1,1] \to [-1,1] $ is monotonically increasing and $s_{\Lambda}(c+\e) \leq s_{\Lambda}(c)+\e$ for every $c,\e$ such that $c,c+\e \in (-1,1)$.
\end{observation}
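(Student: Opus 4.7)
The plan is to prove monotonicity by a direct containment argument, and to establish the Lipschitz inequality by a boosting construction that augments any instance with a maximum-value padding.

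Monotonicity follows immediately from the quantifier order in the definition. If $c_1 \leq c_2$ and $\cP$ is a $(\tau,d)$-quasirandom $(c_1,s)$-approximate polymorphism, then any pair $(\inst,X)$ with $\E_i \inst(X^{(i)}) \geq c_2$ a fortiori satisfies $\E_i \inst(X^{(i)}) \geq c_1$, so the guarantee $\E_p[\inst(p)]\geq s$ still fires; quasirandomness depends only on $\cP$ and is unaffected. Hence every $s$ admissible at level $c_1$ remains admissible at level $c_2$, and passing to the supremum yields $s_\Lambda(c_1) \leq s_\Lambda(c_2)$.

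For the Lipschitz bound, I would establish the equivalent statement: any $(\tau,d)$-quasirandom $(c,s)$-approximate polymorphism $\cP$ is also a $(\tau,d)$-quasirandom $(c-\epsilon,s-\epsilon)$-approximate polymorphism whenever $c-\epsilon \in (-1,1)$; relabeling $c \mapsto c+\epsilon$ and taking suprema then gives $s_\Lambda(c+\epsilon) \leq s_\Lambda(c) + \epsilon$. Given any \maxl instance $\inst$ and assignments $X^{(1)},\ldots,X^{(R)}$ with $\E_i \inst(X^{(i)}) \geq c-\epsilon$, I form an augmented instance $\inst^+ = (1-\lambda)\inst + \lambda\,\inst_{\mathrm{pad}}$, realized as a weighted disjoint union of $\Lambda$-constraints. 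The padding $\inst_{\mathrm{pad}}$ lives on fresh variables and uses a single predicate $c_0 \in \Lambda$ attaining value $+1$ on some input $y_0$. Extending each $X^{(i)}$ to $X^{+(i)}$ by placing the fresh variables at $y_0$ makes $\inst_{\mathrm{pad}}(X^{+(i)}) = 1$ uniformly, and choosing $\lambda = \epsilon/(1-c+\epsilon) \in (0,1)$ (the denominator is positive and $\lambda<1$ because $c-\epsilon>-1$) gives $\E_i \inst^+(X^{+(i)}) = (1-\lambda)(c-\epsilon) + \lambda \geq c$, so the $(c,s)$-polymorphism hypothesis applies to $\inst^+$.

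Since $p$ acts coordinate-wise and the fresh padding variables are disjoint from the original ones, $\inst(p(X^+)) = \inst(p(X))$ and $\inst_{\mathrm{pad}}(p(X^+)) \leq 1$, so the polymorphism guarantee $\E_p[\inst^+(p(X^+))] \geq s$ rearranges to
\[
  (1-\lambda)\,\E_p[\inst(p(X))] + \lambda \;\geq\; s, \qquad \text{i.e.,} \qquad \E_p[\inst(p(X))] \;\geq\; \frac{s-\lambda}{1-\lambda}.
\]
A direct calculation with $\lambda = \epsilon/(1-c+\epsilon)$ verifies $(s-\lambda)/(1-\lambda) \geq s-\epsilon$ in the approximation-preserving regime $s \geq c$, which delivers the desired $(c-\epsilon, s-\epsilon)$-polymorphism property. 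The main obstacle is the complementary lossy regime $s < c$, where the boosting algebra does not close the gap; this case is handled either by the trivial bound $\E_p[\inst(p(X))] \geq -1 \geq s-\epsilon$ when $s-\epsilon \leq -1$, or by supplementing $\cP$ with a small convex mixture with a noised random-dictator polymorphism, whose mean output value equals the input average $\E_i \inst(X^{(i)})$ and whose quasirandomness parameters degrade only linearly in the mixing weight, closing the remaining slack.
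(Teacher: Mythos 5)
The paper itself gives no argument here (it declares the observation an ``immediate consequence of the definition''), so there is nothing to compare against line by line; your monotonicity argument is certainly the intended one and is correct. For the Lipschitz bound, your padding construction is sound as far as it goes, with one cosmetic caveat: a payoff in $\Lambda$ attaining the value $+1$ need not exist, so the padding should be built from a payoff attaining $M=\max_{c_0\in\Lambda}\max_y c_0(y)$, which changes the arithmetic but not the structure (one still needs $\lambda\ge \e/(M-c+\e)$ and ends up requiring $s\ge c$).

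The genuine gap is in the regime $s<c$, which you correctly flag but do not repair --- and this is the \emph{typical} regime, not a corner case: for most $\Lambda$ (e.g.\ \maxcut near $c=1$) one has $s_\Lambda(c)<c$, so every relevant $s'$ just below $s_\Lambda(c+\e)$ falls into it. Your first fix covers only $s\le \e-1$. Your second fix fails for a quantifier reason: the mixing weight $\delta$ needed to close the slack is $\Theta(\e/(1-c))$, a quantity independent of $\tau$, while a random dictator --- noised or not, for any noise rate bounded away from $1$ --- has $\E_p[\max_i\Inf^{<d}_{i,\mu}(p)]\ge \Omega(1)$ under, say, the uniform marginal $\mu$. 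The mixture therefore has expected max influence $\Omega(\delta)$ and is not $(\tau,d)$-quasirandom once $\tau\ll\delta$; since $s_\Lambda(c)$ requires a witness for \emph{every} $\tau>0$, the argument does not produce a member of the defining set. (Making the dictator quasirandom by taking the noise rate $\to 1$ destroys the value guarantee $\E_{\cD}[\inst(p(X))]\approx \E_i\inst(X^{(i)})$ that the fix relies on.) Worth noting: in the regime $s<c$ your padding does yield $s_\Lambda(c+\e)\le s_\Lambda(c)+\e\cdot\frac{1-s_\Lambda(c+\e)}{1-c-\e}$, i.e.\ a Lipschitz constant $\frac{1-s}{1-c}>1$; and the stated constant-$1$ bound appears to be genuinely false in general (the \maxcut curve $s_\Lambda(c)\approx 1-\tfrac{2}{\pi}\sqrt{1-c}$ has slope exceeding $1$ for $c$ close to $1$). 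So the right conclusion is that the weaker inequality you actually proved is the provable one --- and it suffices for every use of the observation in the paper, which only needs $\lim_{\e\to 0}s_\Lambda(c\pm\e)=s_\Lambda(c)$ for fixed $c$ in the open interval.
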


\subsection{Rounding Semidefinite Programs via Polymorphisms}
\label{sec:roundingcsps}
In this section, we will restate the soundness analysis of
dictatorship tests in \cite{Raghavendra08} in terms of approximate
polymorphisms.  Specifically, we will construct a rounding scheme for
the {\sc BasicSDP} relaxation using approximate polymorphisms, and
thereby prove \prettyref{thm:sdpgap}.


\paragraph{Basic SDP relaxation}
Given a $\Lambda$-CSP instance $\inst = (\cV, \cC)$, the goal of {\sc
  BasicSDP} relaxation is to find a collection
of vectors $\set{\vec b_{v,a}}_{v\in \cV, a\in [q]}$ in a sufficiently
high dimensional space and a collection
$\set{\mu_C}_{C\in \supp(\cC)}$ of
distributions over local assignments.  For each payoff $C \in \cC$, the distribution
$\mu_C$ is a distribution over $[q]^{\cV(C)}$ corresponding to
assignments for the variables $\cV(C)$.  We will write $\Pr_{x \in
  \mu_{C}} \Set{E}$ to denote the probability of an event $E$ with
under the distribution $\mu_C$.
\vspace{10pt}
\begin{mybox}
  {\sc BasicSDP} Relaxation
  \begin{align*} \label{sdpgen}
    \maximize \quad & \E_{C\sim \cC} \E_{x\sim\mu_C} C(x) \qquad
                  \qquad \qquad \tag{ {\sc Basic SDP}} \nonumber  \\
    \subjectto\quad %
                & \iprod{\qV{v,i},\qV{v',j}} = %
                  \Pr_{x \sim \mu_C}\Set{\vbig x_v=i, x_{v'}=j}\\ %
                & \qquad \qquad  \text{($C\in \supp(\cC)$, $\ v,v'\in \cV(C)$, $\
                  i,j\in[q]$)}\mper\\ 
                & \mu_C \in \ssimp([q]^{\cV(C)}) \qquad \qquad \qquad \forall C \in \supp(\cC)\nonumber
  \end{align*}
\end{mybox}

We claim that the above optimization problem can be solved in
polynomial time. To show this claim, let
us introduce additional real-valued variables $\mu_{C,x}$ for
$C\in\supp(\cC)$ and
$x\in[q]^{V(C)}$. We add the constraints $\mu_{C,x}\ge 0$ and
$\sum_{x\in[q]^{V(C)}}\mu_{C,x}=1$. We can now make the following
substitutions to eliminate the distributions $\mu_C$,
\begin{align*}
  \E_{x\sim\mu_C} C(x)  = \sum_{x\in[q]^{V(C)}} C(x)\mu_{C,x}\mcom
  \qquad \qquad
  \Pr_{x \sim \mu_C}\Set{\vbig x_i=a}  = \sum_{\substack{x\in
  [m]^{V(C)}\\x_i=a}} \mu_{C,x} \mcom \\
  \Pr_{x \sim \mu_C}\Set{\vbig x_i=a, x_j=b}  = \sum_{\substack{x\in
  [m]^{V(C)}\\x_i=a,x_j=b}} \mu_{C,x}
  \mper
\end{align*}
After substituting the distributions $\mu_C$ by the scalar variables
$\mu_{C,x}$, it is clear that an optimal solution to the relaxation of
$\cC$ can be computed in time $\poly(m^k, |\supp(\cC)|)$.
In the rest of the section, we will show how to obtain an
rounding scheme for the above SDP using an
approximate polymorphism $\cP$.

The overall idea behind the rounding scheme is as follows.
Let $(\vec V, \vec \mu)$ be a feasible solution to the {\sc BasicSDP}
where $\vec V$ consists of the vectors and $\vec \mu$ consists of the
associated local distributions.

Let $\cP$ be an $(c,s)$-approximate polymorphism.  If the
polymorphism $\cP$ is given as input integral assignments whose average value is equal
to $c$, it would output an integral assignment of
expected value
$s$.  This would be a rounding for {\sc BasicSDP} relaxation
certifying that the on instances with SDP value at least $c$, the
optimum is at least $s$.  However,
in general, we do not have access to integral solutions of value
$c$ and there might not exist any.
The idea is to give  as input to $\cP$ real-valued assignments such
that they have value $c$, and the polymorphism
$\cP$ cannot distinguish these real valued assignments from integral
assignments.  Specifically, these real valued assignments are gaussian
random variables obtained by taking random projections of the SDP
vectors.

 The following is the formal description of the
rounding procedure $\round_{\cP}$.
\vspace{10pt}
\begin{mybox}
  $\round_{\cP}$ Scheme\\\\
  \textbf{Input:} A $\Lambda$-CSP instance  $\inst = (\cV,\cC)$
  with $m$ variables and an SDP
  solution $\{ \qV{v,i}\}, \{\mu_C\}$ with value at least $c$. A
  $(\tau,d)$-quasirandom $(c-\eta,s)$-approximate polymorphism $\cP$. \\

  \textbf{Truncation Function} Let $\ssimptrunc : \R^q \to \ssimp_q$ be a Lipschitz-continous
  function such that for all $\mv{x} \in \ssimp_q$, $\ssimptrunc(\mv{x})
  = \mv{x}$. \\ 
  \textbf{Rounding Scheme:}
  Fix $\e = \eta/10k$ where $k$ is arity of $\Lambda$.

  \begin{itemize}
  \item Sample $R$ vectors $\zeta^{(1)},\ldots,\zeta^{(R)}$ with each coordinate
  being i.i.d normal random variable.

  \item Sample an operation $p \sim  \cP$.

  \end{itemize}

  For each $v \in \cV$ do
  \begin{itemize}

  \item For all $1 \leq  j \leq R $ and $c \in [q]$,
    compute the projection $g_{v,c}^{(j)}$ of the vector
    $\qV{v,c}$ as follows:
    \begin{align*}
      {g}_{v,c}^{(j)}  = \iprod{\vzero , \qV{v,c}} +
      \Big[\iprod{(\qV{v,c} - (\iprod{\vzero , \qV{v,c}}) \vzero) ,
      \zeta^{(j)}}\Big]
    \end{align*}

  \item     Let $\mpl{H}$ denote the multilinear polynomial
    corresponding to the function $T_{1-\e}p : \ssimp_q^k \to \ssimp_q$.
    Evaluate the function $\mpl{H} = T_{1-\eps} p$ with
    $g_{v,c}^{(j)}$ as
    inputs.  In other words, compute $\mv{z}_v =
    (z_{v,1},\ldots,z_{v,q})$ as follows:
    $$\mv{z}_v = \mpl{H}(\mv{g_v})$$
  \item     Round $\mv{z}_v$ to $\mv{z}_v^{*} \in
    \ssimp_q$ by using a Lipschitz-continous truncation function
    $\ssimptrunc : \R^q \to \ssimp_q$, i.e., $ \mv{z}_v^* = \ssimptrunc(\mv{z}_v) \mper$
  \item   Independently for each $v \in \cV$, assign a value $j \in [q]$
    sampled from the distribution $z^{*}_{v} \in \ssimp_q$.
  \end{itemize}
\end{mybox}
\vspace{10pt}

Now we will analyze the performance of the above rounding scheme.
First, we observe the following fact about approximate polymorphisms.
\begin{lemma} \label{lem:averagevalue}
Fix an instance $\inst$ of \maxl and a distribution $\Theta$ over assignments to
$\inst$ with $\E_{X \sim \Theta} [\inst(X)] \geq c$.
Suppose $\cP$ is a $(c -\eta,s)$-approximate polymorphism for \maxl for
some $\eta > 0$, then we have
$$ \E_{p \in \cP} \E_{X^{(1)},\ldots, X^{(R)} \sim \Theta} \left[
  \inst(p(X^{(1)},\ldots,X^{(R)})) \right] \geq s \mper$$
where $X^{(1)},\ldots,X^{(R)}$ are sampled independently from $\Theta$.
\end{lemma}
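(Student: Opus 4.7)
The plan is to lift the pointwise $(c-\eta,s)$-approximate polymorphism property to the i.i.d.\ distributional setting described in the lemma. Abbreviate $\Psi(x^{(1)},\ldots,x^{(R)}) \defeq \E_{p \sim \cP}[\inst(p(x^{(1)},\ldots,x^{(R)}))]$ so that the goal is to show $\E_{X \sim \Theta^R}[\Psi(X)] \geq s$.

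First, the hypothesis supplies the following pointwise bound: for every deterministic $R$-tuple $x = (x^{(1)},\ldots,x^{(R)})$ whose empirical average $A(x) \defeq \frac{1}{R}\sum_i \inst(x^{(i)})$ satisfies $A(x) \geq c - \eta$, we have $\Psi(x) \geq s$. Because the $X^{(i)}$ are drawn i.i.d.\ from $\Theta$, linearity of expectation gives $\E_{X \sim \Theta^R}[A(X)] = \E_\Theta[\inst] \geq c$, which exceeds the polymorphism threshold by a margin of $\eta$.

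Next, I would expand $\E_X[\Psi(X)]$ by partitioning according to whether $A(X) \geq c - \eta$ or not. On the ``good'' event the polymorphism bound gives $\Psi \geq s$ pointwise; on the complementary ``bad'' event we fall back on the trivial boundedness $\Psi \geq \min_x \inst(x)$. The computation then combines these two bounds with the mean condition $\E_X[A(X)] \geq c$: the $\eta$ margin above the polymorphism threshold is precisely what is needed to ensure that the bad-event contribution does not drag the overall expectation below $s$. Intuitively, the slack baked into writing the hypothesis as a $(c-\eta,s)$-polymorphism rather than a $(c,s)$-polymorphism is exactly what allows the random input average $A(X)$ to occasionally fall below $c-\eta$ without damaging the conclusion.

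The main obstacle is making this final cancellation quantitative. A plain Markov-type estimate on $\Pr[A(X) < c-\eta]$ based only on $\E_\Theta[\inst] \geq c$ is not tight enough on its own to absorb the $\Psi \geq -1$ contribution from the bad event. The argument must instead couple the mean bound $\E_X[A(X)] \geq c$ to the threshold $c-\eta$ directly, using the boundedness of $\inst$ together with the exact size of the $\eta$-slack, so that the bad-event terms are absorbed without loss. This quantitative cancellation --- rather than any single analytic tool --- is the technical heart of the lemma, and it is also precisely what forces the hypothesis to be stated with the $\eta$-shifted polymorphism rather than a plain $(c,s)$ one.
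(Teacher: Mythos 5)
There is a genuine gap, and you have in fact located it yourself without filling it. Your plan is to condition on whether the empirical average $A(X) = \frac{1}{R}\sum_i \inst(X^{(i)})$ of the $R$ i.i.d.\ samples clears the threshold $c-\eta$, and then to hope that the $\eta$-slack in the hypothesis ``absorbs'' the bad event. It cannot. Knowing only $\E_\Theta[\inst] \geq c$ and $|\inst| \leq 1$, the probability that $A(X) < c - \eta$ can be a constant bounded away from zero (take $\Theta$ supported on two assignments of values $c+\delta$ and $c-\delta$ with $\delta \gg \eta$; with probability depending only on $R$, all $R$ samples land on the low one). On the good event the polymorphism hypothesis yields only $\Psi \geq s$ --- it gives no surplus when $A(X)$ exceeds $c-\eta$ by a lot --- so there is nothing to cancel the constant-probability loss of up to $s+1$ on the bad event. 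The ``quantitative cancellation'' you describe as the technical heart of the lemma does not exist along this route; the conclusion of the lemma is an exact bound $\geq s$, not $s$ minus an error depending on $R$.

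The missing idea in the paper's proof is an amplification step: replace $\inst$ by an instance $\inst'$ consisting of $N$ disjoint copies of $\inst$, and $\Theta$ by the product distribution $\Theta'$ of $N$ i.i.d.\ samples. By linearity of expectation, the quantity to be bounded is unchanged under this replacement. But now each \emph{individual} sample $Y \sim \Theta'$ satisfies $\inst'(Y) \geq c-\eta$ except with probability $e^{-\Omega(\eta^2 N)}$ by concentration, so the $R$-tuple meets the polymorphism hypothesis except on an event of probability $O(R\,e^{-\Omega(\eta^2 N)})$; boundedness of the payoffs controls the contribution of that event, and letting $N \to \infty$ (with $R$ fixed) makes the error vanish, yielding the exact bound $\geq s$. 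The role of the $\eta$-slack is thus to set the deviation threshold for this concentration argument --- it governs the rate $e^{-\Omega(\eta^2 N)}$ --- not to pay for a fixed-probability bad event in the un-amplified setting.
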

\begin{proof}
Let $N \in \N$ be a large positive integer.  Let $\inst'$ be an instance
consisting of $N$ disjoint copies of $\inst$.  Define a distribution
$\Theta'$ of assignments to $\inst'$ consisting of $N$-i.i.d samples from $\Theta$.
While the distribution $\Theta$ only satisfies an average bound on
objective value $\E_{X \sim \Theta} [\inst(X)] \geq c$,  the distribution
$\Theta'$ will satisfy
$$\Pr_{Y \sim \Theta'}[\inst'(Y) \geq c-\eta] \geq 1-e^{-O(\eta N)} \mper$$
Hence, if we sample $Y^{(1)},\ldots,Y^{(R)} \sim \Theta'$ then the average value
of the assignments is at least $c-\eta$ with probability $1-e^{-O(\eta
  N)}$.  We are ready to finish the argument as shown below,
\begin{align*}
 \E_{p \in \cP} \E_{X^{(1)},\ldots, X^{(R)} \sim \Theta} \left[
  \inst(p(X^{(1)},\ldots,X^{(R)})) \right]  & =  \E_{p \in \cP} \E_{Y^{(1)},\ldots, Y^{(R)} \sim \Theta'} \left[
  \inst'(p(Y^{(1)},\ldots,Y^{(R)})) \right]  \\
 \quad& \qquad\quad  (\text{ linearity of expectation
                                    })\\
& \geq  \E_{p \in \cP} \E_{Y^{(1)},\ldots, Y^{(R)} \sim \Theta'} \left[
  \inst'(p(Y^{(1)},\ldots,Y^{(R)})) \suchthat \inst'(Y^{(i)})\geq c-\eta \right]
  - e^{-O(\eta N)} \\
& \geq s - e^{-O(\eta N)} \mper
\end{align*}
Taking limits as $N \to \infty$, the conclusion is immediate.
\end{proof}

To analyze the performance of the rounding scheme, we define a set of
ensembles of local integral random variables,
and global gaussian ensembles as follows.

\begin{definition}
  For every payoff $C \in \cC$ of size at most $k$, the local distribution
  $\mu_C$ is a distribution over $[q]^{\cV(C)}$.  In other words, the
  distribution $\mu_{C}$ is a distribution over assignments to the
  CSP variables in set $\cV(C)$.
  The corresponding {\it local integral ensemble } is a set of random
  variables $\mcl{L}_{C} = \{\mv{\ell}_{v_1},\ldots,\mv{\ell}_{v_k}\}$ each taking
  values in $\ssimp_q$.
\end{definition}
\begin{definition}
  The {\it global ensemble} $\erv{G} = \{\mrv{g}_{v} | v \in \cV
  ,j \in [q]\}$ are generated by setting $ \mrv{g}_{v} =
  \{g_{v,1},\ldots,g_{v,q}\} $ where
  $$ g_{v,j} \defeq \iprod{\vzero , \qV{v,j}} + \iprod{(\qV{v,j} -
    (\iprod{\vzero,
      \qV{v,j}}) \vzero), \zeta}$$
  and $\zeta$ is a normal Gaussian random vector of appropriate dimension.
\end{definition}
It is easy to see that the local and global integral ensembles have
matching moments up to degree two.
\begin{observation}
  For any set $C \in \cC$, the global ensemble $\erv{G}$
  matches the following moments of the local integral ensemble
  $\mcl{L}_{C}$
  \begin{align*}
    \E[g_{v,j}]  = \E[\ell_{v,j}] = \iprod{\vzero , \qV{v,j}}  & &
                                                                   \E[g_{v,j}^2]  = \E[\ell_{v,j}^2] = \iprod{\vzero , \qV{v,j}} \\
    \E[g_{v,j} g_{v,j^{\prime}}] = \E[\ell_{v,j}
    \ell_{v,j^{\prime}}] = 0  &  & \forall j \neq j^{\prime}, v \in
                                   \cV(C)
  \end{align*}
\end{observation}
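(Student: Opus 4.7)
The plan is to prove the observation by direct computation on each of the three moments, using only (i) the SDP constraints relating inner products of $\qV{v,j}$-vectors to probabilities under $\mu_C$, (ii) the fact that $\zeta$ is a standard Gaussian so that $\E[\iprod{w,\zeta}] = 0$ and $\E[\iprod{w,\zeta}\iprod{w',\zeta}] = \iprod{w,w'}$, and (iii) the fact that $\vzero$ is the unit vector with $\iprod{\vzero,\qV{v,j}} = \Pr_{x \sim \mu_C}[x_v = j]$ for every $v \in \cV(C), j \in [q]$. The identity (iii) is immediate from the SDP constraints together with $\mu_C \in \ssimp([q]^{\cV(C)})$, since $\sum_j \qV{v,j}$ has squared length $\sum_j \Pr[x_v = j] = 1$ and inner product $\Pr[x_{v'} = j']$ with each $\qV{v',j'}$, hence coincides with $\vzero$. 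The local ensemble side is essentially trivial: since $\ell_{v,j} = \Ind[x_v = j]$ for $x \sim \mu_C$, one has $\E[\ell_{v,j}] = \E[\ell_{v,j}^2] = \Pr[x_v = j] = \iprod{\vzero,\qV{v,j}}$ and $\E[\ell_{v,j}\ell_{v,j'}] = 0$ for $j \neq j'$ because the events $\{x_v = j\}$ and $\{x_v = j'\}$ are disjoint.

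For the global side, I would set $a_j \defeq \iprod{\vzero,\qV{v,j}}$ and $w_j \defeq \qV{v,j} - a_j \vzero$, so that $g_{v,j} = a_j + \iprod{w_j,\zeta}$. Taking expectations with respect to $\zeta$ gives $\E[g_{v,j}] = a_j$ at once. For the second moment, expand
\[
\E[g_{v,j}^2] = a_j^2 + 2 a_j \E[\iprod{w_j,\zeta}] + \E[\iprod{w_j,\zeta}^2] = a_j^2 + \norm{w_j}^2,
\]
and compute $\norm{w_j}^2 = \norm{\qV{v,j}}^2 - 2 a_j \iprod{\vzero, \qV{v,j}} + a_j^2 = \norm{\qV{v,j}}^2 - a_j^2$; combining these and using the SDP constraint $\norm{\qV{v,j}}^2 = \Pr[x_v = j] = a_j$ yields $\E[g_{v,j}^2] = a_j$, matching the local value.

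For the cross moment with $j \neq j'$, the same expansion produces
\[
\E[g_{v,j}\, g_{v,j'}] = a_j a_{j'} + \iprod{w_j, w_{j'}},
\]
and I would expand $\iprod{w_j,w_{j'}} = \iprod{\qV{v,j},\qV{v,j'}} - a_{j'} \iprod{\qV{v,j},\vzero} - a_j \iprod{\vzero,\qV{v,j'}} + a_j a_{j'}$. Since $j \neq j'$ the SDP constraint forces $\iprod{\qV{v,j},\qV{v,j'}} = \Pr[x_v = j, x_v = j'] = 0$, and the remaining terms telescope to $-a_j a_{j'}$, so $\E[g_{v,j}\, g_{v,j'}] = 0$ as required. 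There is no real obstacle here — the entire argument is just linear algebra combined with the SDP constraints — but the one place to be careful is extracting the identity $\iprod{\vzero,\qV{v,j}} = \Pr[x_v = j]$, which is a modeling convention made implicit in the use of $\vzero$ and should be stated at the outset of the proof.
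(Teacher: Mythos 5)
Your computation is correct and is exactly the direct verification the paper leaves implicit (the paper states this as an Observation with no proof beyond ``it is easy to see''). Your flagged caveat about the convention $\iprod{\vzero,\qV{v,j}} = \Pr_{x\sim\mu_C}[x_v=j]$ is the right thing to make explicit, and the rest is the standard Gaussian moment calculation $\E[\iprod{w,\zeta}\iprod{w',\zeta}]=\iprod{w,w'}$ combined with the SDP constraints.
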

We will appeal to the invariance principle of Mossel
\etal \cite{MosselOO05}, Mossel and Issakson \cite{IsakssonM09} to argue
that a $(\tau,d)$-quasirandom polymorphism cannot
distinguish between two distributions that have same first two
moments.
\begin{theorem}(Invariance Principle \cite{IsakssonM09}) \label{thm:invariance}
  Let $\Omega$ be a finite probability space with the least non-zero
  probability of an atom at least $\alpha \leq 1/2$.  Let $\mcl{L} =
  \{\ell_1,\ell_1,\ldots,\ell_{m}\}$ be an ensemble of random variables
  over $\Omega$.  Let $\erv{G} =
  \{g_1,\ldots,g_{m}\}$ be an ensemble of Gaussian random variables satisfying the following conditions:
  \begin{align*}
    \E[\ell_i] = \E[g_i]  & & \E[\ell_i^2] = \E[g_i^2] & & \E[\ell_i
                                                           \ell_j] = \E[g_i g_j] & & \forall i,j \in [m]
  \end{align*}
  Let $K = \log (1/\alpha)$.  Let $\mv{F} = (F_1,\ldots,F_D)$ denote a
  vector valued multilinear polynomial  and let $H_i = (T_{1-\epsilon}
  F_i)$ and $\mpl{H} = (H_1,\ldots,H_D)$.  Further let $\Inf_i(\mpl{H})
  \leq \tau$ and  $\Var[H_i] \leq 1$ for all $i$.

  If $\Psi : \R^D \rightarrow \R$ is a Lipschitz-continous function with
  Lipschitz constant $C_0$ (with respect to the $L_2$ norm).  Then,
  $$ \Big|\E\Big[\Psi(\mpl{H}(\mcl{L}^{\uglbl}))\Big] -
  \E\Big[\Psi(\mpl{H}(\erv{G}^\uglbl))\Big] \Big| \leq
  C_{D} \cdot C_0 \cdot \tau^{\epsilon/18K} = o_{\tau}(1) $$
  for some constant $C_D$ depending on $D$.
\end{theorem}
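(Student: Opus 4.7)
The plan is to prove this by a Lindeberg-style hybrid replacement argument, swapping the $m\uglbl$ coordinates of $\mcl{L}^{\uglbl}$ for those of $\erv{G}^{\uglbl}$ one at a time and bounding the error in each swap. First I would fix an arbitrary ordering of the $m\uglbl$ coordinates and define hybrid ensembles $\mcl{Z}^{(0)} = \mcl{L}^{\uglbl}$, $\mcl{Z}^{(m\uglbl)} = \erv{G}^{\uglbl}$, with $\mcl{Z}^{(t)}$ obtained by taking the first $t$ coordinates from the Gaussian ensemble and the rest from the discrete ensemble. The total error telescopes as a sum of $m\uglbl$ single-coordinate replacement errors, so it suffices to bound each one by $\tau^{\Omega(\epsilon/K)} / (m\uglbl)$, up to factors depending on $D$.

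For a single replacement at coordinate $j$, I would use the multilinearity of each $H_i$ to write $H_i(Z) = A_i(Z_{\neq j}) + Z_j \cdot B_i(Z_{\neq j})$ where $A_i, B_i$ do not depend on $Z_j$. Applying a third-order Taylor expansion of $\Psi$ around the point $\mpl{A}(Z_{\neq j})$ in the direction $Z_j \cdot \mpl{B}(Z_{\neq j})$, the zeroth-order term is identical on both sides of the swap, and the first- and second-order terms are identical in expectation because the marginals of $\ell_j$ and $g_j$ match to second order and are independent of $Z_{\neq j}$. Only the third-order remainder survives, and using that $\Psi$ is $C_0$-Lipschitz (and has bounded second derivatives after smoothing if needed), it can be bounded by $C_0 \cdot \E[|Z_j|^3 \cdot \|\mpl{B}(Z_{\neq j})\|_2^3]$ up to absolute constants.

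The key quantitative step is controlling $\E[\|\mpl{B}\|_2^3]$ via hypercontractivity. Because the minimum atom probability of $\Omega$ is at least $\alpha$, for any multilinear polynomial $P$ of degree at most $d$ over $\mcl{L}$, one has $\|P\|_3 \leq (C_{\alpha})^{d} \|P\|_2$ with $C_\alpha = \Theta(\alpha^{-1/6})$, and an analogous bound holds for Gaussians. Applied to each $B_i$, which is essentially a derivative of $T_{1-\epsilon}F_i$ in coordinate $j$, the degree-$d$ part carries a factor $(1-\epsilon)^d$ from the noise operator. Truncating $F_i$ at effective degree $d^* = O(\log(1/\tau)/\epsilon)$ costs only a $\tau^{\Omega(1)}$ loss in $L^2$ (since the tail beyond $d^*$ is damped by $(1-\epsilon)^{d^*}$), and below degree $d^*$ hypercontractivity combined with the hypothesis $\Inf_j(\mpl{H}) \leq \tau$ gives $\E[\|\mpl{B}\|_2^3] \leq D^{O(1)} \cdot \tau^{1/2} \cdot C_\alpha^{3 d^*}$. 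Plugging in $d^* \asymp \log(1/\tau)/\epsilon$ and $K = \log(1/\alpha)$ yields a per-coordinate bound of the form $\tau^{\epsilon/(cK)}$ for an absolute constant $c$ (the constant $18$ arises from optimizing these exponents).

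Finally I would sum over $j$ using the relation $\sum_j \Inf_j(\mpl{H}) \leq d^* \sum_i \Var[H_i] \leq D d^*$ (after the degree truncation), so the number of nontrivial coordinates is bounded in terms of $\tau$, ensuring that the total error stays of order $C_D \cdot C_0 \cdot \tau^{\epsilon/18K}$. The main obstacle will be the unbounded nominal degree of $\mv{F}$: without the noise operator the argument collapses, and the entire effectiveness of the bound rides on how $T_{1-\epsilon}$ exchanges a small loss in $L^2$-accuracy for a clean degree truncation that tames the hypercontractive constants. Since this is precisely the argument of Mossel--O'Donnell--Oleszkiewicz as adapted to general product spaces by Isaksson--Mossel, I would in practice cite their statement directly rather than reprove it in full.
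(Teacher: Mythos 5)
The paper does not prove this statement at all: it is imported verbatim as a black-box result from Isaksson--Mossel (and Mossel--O'Donnell--Oleszkiewicz), so there is no internal proof to compare against. Your outline is a faithful reconstruction of the standard argument in those cited works -- the Lindeberg single-coordinate replacement over the $m\uglbl$ hybrid ensembles, matching the first two moments so that only the third-order Taylor remainder survives, hypercontractivity on the $\alpha$-bounded product space to convert $L^2$ control of the discrete-derivative polynomials into $L^3$ control, and the noise operator $T_{1-\epsilon}$ to buy an effective degree truncation at $d^* = O(\log(1/\tau)/\epsilon)$, which is exactly where the exponent $\epsilon/18K$ with $K=\log(1/\alpha)$ comes from. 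Two points in your sketch would need more care in a full writeup. First, $\Psi$ is only Lipschitz, so the third-order Taylor step requires first mollifying $\Psi$ to a $\mathcal{C}^3$ approximant and then paying for the mollification error at the end; this is a genuine (if routine) extra layer in Isaksson--Mossel, not just an aside. Second, the per-coordinate errors are summed via $\sum_j \Inf_j(\mpl{H})^{3/2} \leq \bigl(\max_j \Inf_j(\mpl{H})\bigr)^{1/2} \sum_j \Inf_j(\mpl{H}) \leq \sqrt{\tau}\, d^* D$, which is where the hypothesis $\Inf_j(\mpl{H})\leq\tau$ actually enters; your phrasing ("bounding the number of nontrivial coordinates") slightly obscures this. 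With those two steps made precise, your route is the correct one, and your final instinct -- to cite the theorem rather than reprove it -- is exactly what the paper does.
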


The local integral ensemble has an expected payoff equal to $c$.  The
local and global ensembles agree on the first two moments.  The
$(c,s)$-approximate polymorphism outputs a solution of value at least
$s$ when given the local integral ensemble as inputs.  Hence, by
invariance principle when given these global ensemble of random projections as input
to the polymorphism $\cP$, it
will end up outputting close to integral solutions of value $s$.

\begin{theorem}
  For all $\eta$ and CSP $\Lambda$, there exists $\tau,d > 0$ such that the
  following holds:
  Suppose $\cP$ is a $(c-\eta,s)$-approximate $(\tau,d)$-quasirandom
  polymorphism for \maxl.  Given an {\sc BasicSDP} solution with
  objective value at least $c$, the expected value of the assignment
  produced by $\round_{\cP}$ algorithm is at least $s - \eta$.
\end{theorem}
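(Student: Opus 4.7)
The plan is to follow the invariance-based soundness analysis of \cite{Raghavendra08}, recast in the language of approximate polymorphisms. I will compare the expected value of the rounding to a quantity defined via the local integral ensembles, and then use the $(c - \eta, s)$-approximate polymorphism guarantee (via \pref{lem:averagevalue}) to lower-bound the latter.

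First, for a payoff $C$ on variables $v_1, \ldots, v_k$, observe that when the rounding independently samples $j_{v_i} \sim \mv{z}^*_{v_i}$, the expected value of $C$ is the multilinear functional $\Psi_C(\mv{z}^*_{v_1}, \ldots, \mv{z}^*_{v_k}) = \sum_{j \in [q]^k} C(j) \prod_i (\mv{z}^*_{v_i})_{j_i}$. Composing with $\ssimptrunc$ gives $\Psi'_C = \Psi_C \circ \ssimptrunc^{\otimes k}$, which is Lipschitz with some constant $C_0 = O(k)$ since both $\Psi_C$ and $\ssimptrunc$ are Lipschitz. The expected payoff of the rounding equals
\[
V_{\mathrm{round}} = \E_C \E_{\zeta, p}\bigl[\Psi'_C\bigl(\mpl{H}(\mv{g}_{v_1}),\ldots,\mpl{H}(\mv{g}_{v_k})\bigr)\bigr].
\]

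Second, I apply the invariance principle (\pref{thm:invariance}) constraint-by-constraint to replace the global Gaussian ensemble $\erv{G}$ with the local integral ensemble $\mcl{L}_C$: their first two moments agree, while the influences $\Inf_i(\mpl{H})$ of $\mpl{H} = T_{1-\epsilon}p$ are small in expectation over $\cP$. Indeed, for any $p$, the noise operator ensures that the mass above degree $d$ contributes at most $(1-\epsilon)^{2d}$ to any coordinate influence, while the $(\tau,d)$-quasirandomness of $\cP$ bounds the low-degree contribution by $\tau$. Thus $\E_p[\max_i \Inf_i(\mpl{H})] \leq \tau + (1-\epsilon)^{2d}$, which can be driven to zero by choosing $d$ large and $\tau$ small. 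Averaging invariance error over $p \sim \cP$ using Jensen's inequality on the exponent $\epsilon/18K$, we get
\[
\bigl|V_{\mathrm{round}} - V_{\mathrm{local}}\bigr| \leq o_{\tau,d}(1), \quad \text{where } V_{\mathrm{local}} = \E_C \E_{\mcl{L}_C^{(1,\ldots,R)}, p}\bigl[\Psi'_C\bigl(\mpl{H}(\mv{\ell}_{v_1}^{(1,\ldots,R)}), \ldots\bigr)\bigr].
\]

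Third, on integral inputs each $\mpl{H}(\mv{\ell}_{v_i}^{(1,\ldots,R)}) = \E_{\tilde{\mv{\ell}}}[\tilde p(\tilde{\mv{\ell}})]$ lies in $\ssimp_q$ (it is a genuine distribution over $[q]$), so $\ssimptrunc$ acts as the identity and $\Psi'_C = \Psi_C$ on the local ensemble. Unfolding the noise operator,
\[
V_{\mathrm{local}} = \E_C \E_p \E_{\mcl{L}_C, \tilde{\mcl{L}}_C}\bigl[C\bigl(p(\tilde{\mv{\ell}}_{v_1}^{(1,\ldots,R)}), \ldots, p(\tilde{\mv{\ell}}_{v_k}^{(1,\ldots,R)})\bigr)\bigr],
\]
where each $\tilde{\mv{\ell}}^{(r)}_{v_i}$ is independently resampled from $\mu_{v_i}$ with probability $\epsilon$. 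Let $\mu_C^{(\epsilon)}$ denote the joint distribution of $\bigl(\tilde{\mv{\ell}}^{(r)}_{v_1},\ldots,\tilde{\mv{\ell}}^{(r)}_{v_k}\bigr)$; it is within $O(k\epsilon)$ of $\mu_C$ in statistical distance, and has the same per-variable marginals. Now use the standard lifting trick: construct an auxiliary instance $\inst'$ where each constraint has its own private copies of variables, so that the SDP-consistent local marginals $\{\mu_C\}$ glue into a globally consistent distribution $\Theta$ on $\inst'$ with $\E_\Theta[\inst'] \geq c$. The noised $\Theta^{(\epsilon)}$ satisfies $\E_{\Theta^{(\epsilon)}}[\inst'] \geq c - O(k\epsilon) \geq c - \eta$ for $\epsilon = \eta/(10k)$ as in the rounding. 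By \pref{lem:averagevalue} applied to the $(c-\eta, s)$-approximate polymorphism $\cP$ on $\inst'$,
\[
\E_p \E_{X^{(1,\ldots,R)} \sim (\Theta^{(\epsilon)})^R}\bigl[\inst'\bigl(p(X^{(1)},\ldots,X^{(R)})\bigr)\bigr] \geq s,
\]
which equals $V_{\mathrm{local}}$ since the rounding's expectation only depends on the local marginals $\mu_C^{(\epsilon),R}$. Combining with the invariance error then choosing $\tau,d$ so that the $o_{\tau,d}(1)$ slack is at most $\eta$ yields $V_{\mathrm{round}} \geq s - \eta$.

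The main obstacle is the invariance step: one must coordinate the choices of $\tau$, $d$, and $\epsilon$ so that (i) the high-degree Fourier tail of $T_{1-\epsilon}p$ is suppressed below the low-degree quasirandomness level $\tau$, (ii) the invariance error $\tau^{\Theta(\epsilon/\log(1/\lambda))}$ stays below $\eta$ once averaged over $p \sim \cP$, and (iii) the $O(k\epsilon)$ loss from noising $\Theta$ remains within the polymorphism's slack $\eta$. Everything else is essentially bookkeeping on top of the fractional-polymorphism inequality.
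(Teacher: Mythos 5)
Your proposal is correct and follows essentially the same route as the paper's proof: express the rounding value through the Lipschitz functional $\Psi_C$ applied to $\mpl{H}=T_{1-\e}p$ on the Gaussian ensemble, invoke the invariance principle to pass to the local integral ensembles, observe that truncation is the identity there, and lower-bound the resulting local expression by $s$ via the disjoint-copies instance $\inst'$, the noised distribution $\Theta'$, and \pref{lem:averagevalue}. The only (immaterial) difference is that you control the invariance error by averaging over $p\sim\cP$ with Jensen on the concave power $\tau^{\e/18K}$, whereas the paper conditions on the event $\max_i\Inf^{<d}_{i,\mu_j}(p)\le\sqrt{\tau}$, which holds with probability $1-k\sqrt{\tau}$ by Markov.
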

\begin{proof}
Let $\round_{\cP}(\vec V, \vec \mu)$ denote the
expected payoff of the ordering assignment by the rounding scheme
$\round_{\cP}$ on the SDP solution $(\vec V, \vec \mu)$ for the
$\Lambda$-CSP instance $\inst$.
Notice that the $\mrv{g}_i$ are nothing but samples of the global ensemble $\erv{G}$
associated with $\inst$. 
For each constraint $C \in \cC$, let $\cG_{C}$ denote the subset of random
variables from global gaussian ensemble that correspond to variables
in $\cV(\cC)$.  It will be useful to extend each payoff $C: [q]^k \to [-1,1]$
to the domain of fractional/distributional assignments $\ssimp_q$.
Specifically, for $z_1,\ldots,z_k \in \ssimp_q$, define $C(z_1,\ldots,z_k) \defeq \E_{x_i \sim
  z_i}[C(x_1,\ldots,x_k)]$ which corresponds to the expected payoff on
fixing each input $x_i$ independently from the corresponding
distribution $z_i \in \ssimp_q$.
By definition, the expected payoff is given
by
\begin{align}\label{eqn:roundvalue}
  \round_{\cP}(\vec V, \vec \mu) & = \E_{C \in \cC}[C\left(
  v_{1},\ldots,v_k \right)] \\
& = \E_{p \in \cP} \E_{C \in \cC} \E_{ {\erv{G}}_C^\uglbl }
  \Big[C\Big(\ssimptrunc\big(\mpl{H}({\mv{g}}^\uglbl_{v_1})\big),\ldots,\ssimptrunc\big(\mpl{H}({\mv{g}}^\uglbl_{v_k})\big)\Big)  \Big]
\end{align}
Fix a payoff
$C \in \cC$.  Let $\Psi_{C} : \R^{qk} \to \R$ be a Lipschitz continous
function defined as follows:
$$ \Psi_{C}(\mv{z}_{1},\mv{z}_2,\cdots, \mv{z}_k) =
C\Big(\ssimptrunc\big(\mv{z}_{1}\big),\ldots,\ssimptrunc\big({\mv{z}}_{k}\big)\Big)
\qquad \forall \mv{z}_1,\ldots \mv{z}_k \in \R^q\mper
$$
Hence, we can write
\begin{equation}
\round_{\cP}(\vec V, \vec \mu) = \E_{p\in \cP} \E_{C \in \cC} \E_{ {\erv{G}}_C^\uglbl }
    \Big[\Psi_C\Big(\mpl{H}\big({\mv{g}}^\uglbl_{v_1}\big),\ldots,\mpl{H}\big({\mv{g}}^\uglbl_{v_k}\big)\Big)
    \Big] \mper
\end{equation}
For a constraint $\cC$, there are $k$-different marginal distributions
in $\mu_{\cC}$.  Note that since $\cP$ is $(\tau,d)$-quasirandom, with probability at
least $ 1- k\sqrt{\tau}$ over the choice of $p \sim \cP$, we will
have
$$ \max_i \Inf_{i,\mu_j}^{\leq d}(p) \leq \sqrt{\tau} \mcom$$
for each of the marginals $\mu_j$ within $\mu_{\cC}$.
Since the objective value is always between $[-1,1]$, we get that
\begin{equation*}
\round_{\cP}(\vec V, \vec \mu)  \geq \E_{p\in \cP} \E_{C \in \cC} \E_{ {\erv{G}}_C^\uglbl }
    \Big[\Psi_C\Big(\mpl{H}\big({\mv{g}}^\uglbl_{v_1}\big),\ldots,\mpl{H}\big({\mv{g}}^\uglbl_{v_k}\big)\Big)
    |  \max_{i \in [R], j \in [k]}  \Inf_{i,\mu_j}^{<d}(p) \leq \sqrt{\tau} \Big] -
    k\sqrt{\tau}\\
\end{equation*}
Fix $d = (20/\e)\log(1/\tau)$.  For every such $p$, the polynomial $\mpl{H} = T_{1-\epsilon} p$, we can conclude that
\begin{align*}
 \max_{i \in [R], j \in [k]} \Inf_{i, \mu_j}(\mpl{H})
& \leq  \max_{i  \in [R], j \in [k]} \Inf_{i,\mu_j}^{<d}(T_{1-\e}p)+
(1-\e)^d\\
& \leq \max_{i \in [R], j \in [k]} \Inf_{i,\mu_j}^{<d}(p)+ (1-\epsilon)^d\\
&\leq \sqrt{\tau} + (1-\e)^d \leq 2\sqrt{\tau} \mper
\end{align*}
Applying the invariance principle (\prettyref{thm:invariance}) with the ensembles
$\mcl{L}_{C}$, $\erv{G}_{C}$, Lipschitz continous functional $\Psi$ and the vector
of $kq$ multilinear polynomials given by
$(\mpl{H},\mpl{H},\ldots,\mpl{H})$ where $\mpl{H} =
(H_1,\ldots,H_q)$, we get the required result:
\begin{align*}
  \round_{\cP}(\vec V, \vec \mu)  &\geq \E_{p\in \cP} \E_{C \in \cC}     \E_{\mcl{L}_{C}^\uglbl}
    \Big[\Psi_C\Big(\mpl{H}\big(\ell^\uglbl_{v_1}\big),\ldots,\mpl{H}\big(\ell^\uglbl_{v_k}\big)\Big)
    \Big] - o_{\tau}(1)
\end{align*}
where $o_{\tau}(1)$ is a function that tends to zero as $\tau \to 0$.
Since the local ensembles $\mcl{L}_C$ correspond to the local
  distributions $\mu_C$ over partial assignments $[q]^{\cV(C)}$ and
  $\mpl{H} = T_{1-\e}p$, we can rewrite the expression as,

\begin{align}
 \E_{p\in \cP} \E_{C \in \cC}\E_{\mcl{L}_{C}^\uglbl}
    \Big[\Psi_C\Big(\mpl{H}\big(\ell^\uglbl_{v_1}\big),\ldots,\mpl{H}\big(\ell^\uglbl_{v_k}\big)\Big)
    \Big]
& = \E_{p\in \cP} \E_{C \in \cC}\E_{(X_1,\ldots,X_k) \sim \mu_C^\uglbl}
    \Big[\Psi_C\Big(T_{1-\e}p(X_1),,\ldots,T_{1-\e}p(X_k)\Big)
    \Big] \mper\label{eq:012}
\end{align}
By definition of $\Psi_C$, $\Psi_C(\mv{z}_1,\ldots,\mv{z}_k) =
    C(\mv{z}_1,\ldots,\mv{z}_k)$  if $\forall i, \mv{z}_i \in \ssimp_q$.  Summarizing
    our calculation so far, we have
\begin{align*}
 \round_{\cP}(\vec V, \vec \mu) \geq  \E_{p\in \cP} \E_{C \in \cC}\E_{(X_1,\ldots,X_k) \sim \mu_C^\uglbl}
    \Big[\Psi_C\Big(T_{1-\e}p(X_1),,\ldots,T_{1-\e}p(X_k)\Big)
    \Big] - o_{\tau}(1) \mper
\end{align*}
The proof is complete with the following claim which we will prove in
the rest of the section.
\begin{claim} \label{claim:somename}
$$  \E_{p\in \cP} \E_{C \in \cC}\E_{(X_1,\ldots,X_k) \sim \mu_C^\uglbl}
    \Big[\Psi_C\Big(T_{1-\e}p(X_1),,\ldots,T_{1-\e}p(X_k)\Big)
    \Big] \geq s $$
\end{claim}

For each fixed $\eta$ and $\e = \eta/k$,  the error term $\kappa(\tau,d) \to 0$ as $d
\to \infty$ and $\tau \to 0$.  Therefore for a sufficiently large $d$ and
sufficiently small $\tau$, the error will be smaller than $\eta$.
\end{proof}
\begin{proof} (Proof of Claim \pref{claim:somename})
%
For $X \in [q]^{R}$, let $Z \sim_{1-\e} X$ denote the random variable
distributed over $[q]^R$ obtained by resampling each coordinate of $X$
from its underlying distribution with probability $\e$.  Notice that
$T_{1-\e} p(X_i) \in \ssimp_q$ is a fractional assignment.  To sample
a value $y \in [q]$ from the $T_{1-\e} p(X_i)$, we could instead sample
$Z_i \sim_{1-\e}  X_i$ and compute $p(Z_i)$.  In other words, $y \sim
T_{1-\e}p(X_i)$ is the same as $y = p(Z_i), Z_i \sim_{1-\e} X_i$.  Using the way
we defined the payoff function $C$ over fractional assignments, we get
\begin{align}
 &\E_{p\in \cP} \E_{C \in \cC}\E_{(X_1,\ldots,X_k) \sim \mu_C^\uglbl}
    \Big[C\Big(T_{1-\e}p(X_1),,\ldots,T_{1-\e}p(X_k)\Big)
    \Big] \nonumber\\
& \qquad \qquad = \E_{p \in \cP} \E_{C\in \cC} \E_{(X_1,\ldots,X_k)\sim
  \mu_C^{\uglbl}} \E_{Z_i \sim_{1-\e} X_i}\Big[ C\Big(
  p(Z_1),\ldots,p(Z_k)\Big)   \Big] \label{eq:123}
\end{align}
To lower bound the expression, construct an instance $\inst'$ that
consists of the constraints in $\inst$ but each over a disjoint set of
variables.  Consider the distribution $\Theta$ for assignments of $\inst'$
wherein the variables corresponding to each constraint $C \in \cC$ are
sampled independently from $\mu_C$.  Let $\Theta'$ be the $(1-\e)$-noisy
version of $\Theta$ in that it is obtained by sampling from $\Theta$ and
rerandomizing each coordinate with probability $\e$.

Notice that
\begin{align*}
\E_{Y \sim \Theta'}[\inst'(Y)] & = \E_{C \in \cC}\E_{x_1,\ldots,x_k \sim \mu_C}
\E_{z_1,\ldots,z_k \sim_{1-\e} x_1,\ldots,x_k}[C(x_1,\ldots,x_k)] \\
& \geq \E_{C \in \cC} \E_{x_1,\ldots,x_k \sim \mu_C} [C(x_1,\ldots,x_k)] - k\e\\
& \geq c -k\e
\end{align*}
where in the last inequality we used the fact that the objective value
of the SDP is at least $c$.  Since $\cP$ is a $(c-\eta,
s)$-approximate polymorphism and $\eta > k\e$, we can appeal to
\pref{lem:averagevalue} to conclude that,
\begin{align}
s & \leq \E_{p\in \cP}\E_{Y^{(1)},\ldots,Y^{(R)} \sim
    \Theta'}[\inst'(p(Y^{(1)},\ldots,p(Y^{(R)}))]  \nonumber \\
  &  \E_{p \in \cP} \E_{C\in \cC} \E_{(X_1,\ldots,X_k)\sim
  \mu_C^{\uglbl}} \E_{Z_i \sim_{1-\e} X_i}\Big[ C\Big( p(Z_1),\ldots,p(Z_k)\Big)
    \Big]  \label{eq:234}
\end{align}
The claim is immediate from \eqref{eq:123} and \eqref{eq:234}.
\end{proof}

\begin{proof} (Proof of \prettyref{thm:sdpgap})
  By definition of $s_{\Lambda}$, there exists an
  $(c,s_{\Lambda}(c))$-approximate polymorphism with degree $d$ influence less than
  $\tau$ for all $d, \tau$. Fix
  any particular instance $\inst$ of \prb{Max-$\Lambda$}.
  By applying the rounding scheme $\round_{\cP}$ on a
  sequence of polymorphisms with influence $\tau \to 0$ and
  $\eta \to 0$, we get that the SDP integrality gap on
  $\inst$ is at most $\lim_{\eta \to 0}s_{\Lambda}(c-\eta)$.
\end{proof}

\subsection{Approximate Polymorphisms and Dictatorship Tests}
\label{sec:approxpoly}


In this section, we will sketch the proof of
\prettyref{thm:ughardness}.
\begin{proof} (Proof of \pref{thm:ughardness})
  For each $\eta,\e > 0$, there exists $\tau_0, d_0$ such that there are no
  $(s_{\Lambda}(c + \e)+\eta,c + \e)$-approximate $(\tau_0,d_0)$-quasirandom
  polymorphisms for \maxl. By \pref{thm:ugpolymorphism}, this
  implies a unique games hardness to $(c,s_{\Lambda}(c+\e)c+\eta)$-approximate $\maxl$.
  By making $\e \to 0$ and $\eta \to 0$, we get a unique games based hardness
  for  $(c, \lim_{\e \to 0} s_{\Lambda}(c+\e))$-approximating \maxl.

  By Observation \prettyref{obs:approxpoly}, we have
  $$  \lim_{\e \to 0} s_{\Lambda}(c+\e)  \leq s_{\Lambda}(c) \mper$$
  Hence, the conclusion of \pref{thm:ughardness} follows.
\end{proof}

\begin{theorem} \label{thm:ugpolymorphism}
  Fix constants $\tau,c$ and $d \in \N$.
  Suppose \maxl does not admit a $(c,s)$-approximate
  $(\tau,d)$-quasirandom polymorphism then for all $\eta > 0$, it is unique games hard to
  $s+\eta$-approximate instances of \maxl on
  instances with value at least $c-\eta$.
\end{theorem}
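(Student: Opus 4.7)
The plan is to derive this UG hardness result by combining the polymorphism-to-dictatorship-test equivalence (Theorem 7.2) with the standard unique games reduction template pioneered in \cite{KhotKMO07} and generalized in \cite{Raghavendra08}.

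First I would invoke \pref{thm:poly-to-dict-equivalence}. The family $\cF$ of interest is the set of $(\tau,d)$-quasirandom functions $p\from [q]^R \to [q]$ for some $R$ that we are free to choose (we should think of $\cF$ as the union over $R$, with $\cF_R$ the corresponding slice). Since this family is closed under permutation of inputs, and since by hypothesis there is no $(c,s)$-approximate polymorphism for \maxl supported on $\cF$, \pref{thm:poly-to-dict-equivalence} yields, for a sufficiently large $R=R(\tau,d,\eta)$, a concrete instance $\inst_{\mathrm{dict}}$ of \maxl whose variable set is $[q]^R$, such that every dictator $p_i(x)=x^{(i)}$ achieves value $\geq c$, while every $(\tau,d)$-quasirandom function achieves value $\leq s$. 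This is the $(c,s)$-dictatorship test I will plug into the UG reduction.

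Next I would carry out the standard composition with unique games. Given a UG instance $\Gamma=(\cA\cup\cB,E,\Pi,[R])$ guaranteed by the UGC (with soundness parameter $\delta=\delta(\eta,\tau,d)$ to be chosen small), construct a \maxl instance $\inst_{\mathrm{UG}}$ as follows: for each $b\in \cB$ introduce a block of variables indexed by $[q]^R$; to produce a random constraint, sample $b\in\cB$ uniformly, sample two neighbors $a_1,a_2\in \cA$ of $b$, and sample a constraint of $\inst_{\mathrm{dict}}$ on variables $x_{1},\dots,x_{k}\in[q]^R$; then output the same constraint on the variables of $\inst_{\mathrm{UG}}$ obtained by permuting each $x_j$ by the edge-permutations $\pi_{a_i b}$. (One averages over a second random $a$ to symmetrize, as in \cite{Raghavendra08}.) For completeness: given a labeling satisfying $(1-\delta)$ of $\Gamma$, assign to each block $b$ the dictator corresponding to $\Lambda(b)$; a union bound shows the resulting assignment achieves value $\geq c-O(\delta)\geq c-\eta$.

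The meat is soundness. Suppose some assignment $\{f_b\from[q]^R\to[q]\}_{b\in\cB}$ to $\inst_{\mathrm{UG}}$ attains value $>s+\eta$. For each $b$, let $F_b$ be the averaged function obtained by composing $f_b$ with the permutations from its $\cA$-neighbors. If $F_b$ is $(\tau,d)$-quasirandom for a typical $b$, then by the dictatorship-test soundness the contribution of block $b$ is at most $s$, contradicting the overall value. Therefore a non-negligible fraction of $b\in\cB$ have an influential coordinate $\Inf^{<d}_{i,\mu}(F_b)>\tau$ for some $i\in[R]$ and some local marginal $\mu$. I would then use the standard label-decoding trick: for each $b$ pick a uniformly random coordinate from its list of low-degree influential coordinates (there are at most $d/\tau$ such by \pref{lem:suminfluences}), and for each $a\in\cA$ pick likewise. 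A neighbor-swapping argument shows this produces a labeling satisfying more than $\delta$ fraction of $\Gamma$'s edges, contradicting the UGC soundness.

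The main obstacle is the quasirandomness reduction in the soundness step: one must ensure that the ``pulled-back'' function $F_b$ has low-degree influences controlled not just with respect to the uniform measure on $[q]$ but with respect to each of the constraint-marginals $\mu$ that the dictatorship test uses, while simultaneously invoking the noise operator $T_{1-\eps}$ (to allow invariance-style arguments if necessary) and coping with the fact that the UG permutations $\pi_{ab}$ interact nontrivially with influences. This is precisely the template handled carefully in \cite{Raghavendra08}, and the parameters $R$, $\delta$, and the noise rate $\eps$ should be chosen after $(\tau,d,\eta)$ are fixed, in the order $\eps\to 0$, then $\tau\to 0$, $d\to\infty$, and finally $\delta\to 0$, so that all additive error terms are swallowed by the $\eta$ slack in the statement.
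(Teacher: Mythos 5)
Your outline follows the same high-level route as the paper (extract a dictatorship test from \pref{thm:poly-to-dict-equivalence}, compose with unique games, decode labels from influential coordinates), but it is missing the two ideas that make that route actually work here, and both are load-bearing. First, the function you feed into the dictatorship test in the soundness step is not an element of the family $\cF$: the natural object is $L_a = \E_{b\in N(a)}[L_b\circ \pi_{ab}]$, which is $\ssimp_q$-valued, whereas the test produced by \pref{thm:poly-to-dict-equivalence} only guarantees value $\leq s$ for $[q]$-valued functions in $\cF$. In the usual KKMO/Raghavendra template this is a non-issue because the test's soundness is proved directly via an invariance principle that handles fractional assignments; here the test comes from an abstract min-max argument, so its soundness must be \emph{extended} to quasirandom $\ssimp_q$-valued functions. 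The paper does this with a randomized-rounding argument (\pref{thm:rounding1}) whose technical core is showing that rounding a quasirandom $\ssimp_q$-valued function yields a quasirandom $[q]$-valued function with high probability (\pref{lem:quasirandom-concentration}); your step ``if $F_b$ is quasirandom then by the dictatorship-test soundness its contribution is at most $s$'' silently assumes this extension.

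Second, you have not said with respect to which measure $\mu$ the influences in the family $\cF$ and in the decoding are computed, and the naive choices fail: if $\cF$ is ``quasirandom w.r.t.\ every $\mu$,'' then soundness only tells you that a high-value $L_a$ has an influential coordinate for \emph{some} $\mu$ depending on $a$, and the convexity step $\Inf_{i,\mu}(L_a)\leq \E_b\,\Inf_{\pi_{ab}(i),\mu}(L_b)$ then forces the $b$-side label lists to depend on which $\mu$ each neighbor $a$ happened to use, breaking the matching argument. The paper resolves this with an extra min-max step \emph{before} invoking \pref{thm:poly-to-dict-equivalence}: from the hypothesis that no $(c,s)$-approximate $(\tau,d)$-quasirandom polymorphism exists, it extracts a single distribution $\Phi$ over $\ssimp_q$ such that $\E_{p\in\cP}\E_{\mu\sim\Phi}[\max_i \Inf^{<d}_{i,\mu}(p)]\geq\tau$ for every $(c,s)$-approximate polymorphism, and then defines the family $\cF^{\Phi,R}_{\tau,d}$ by $\E_{\mu\sim\Phi}[\max_i\Inf^{<d}_{i,\mu}(p)]\leq\tau$. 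Sampling one $\mu\sim\Phi$ globally in the decoding then makes all label sets consistent. You flag the measure issue as ``the main obstacle'' and defer it to the literature, but the standard reductions do not supply this particular fix in this abstract setting; it is the specific new content of the paper's proof.
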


\begin{proof}
  %

  Fix an integer $R \in \N$.  For all $c,s$, the set of $(c,s)$-approximate polymorphisms $\cP$ of
  arity $R$ form a convex set.  Convex combination of two approximate polymorphisms
  $\cP_1$ and $\cP_2$ is constructed by taking a convex combination of the
  underlying distributions over operations.
  For every $(c ,s)$-approximate polymorphism $\cP$, there
  exists a probability distribution $\mu \in \ssimp_q$ such that,
  $$ \E_{p \in \cP}\left[ \max_{i \in [R]} \Inf^{<d}_{i,\mu}(p)\right] \geq
  \tau \mper$$
  By min-max theorem, there exists a distribution $\Phi$ over the simplex
  $\ssimp_q$ such that for every $(c,s)$-approximate
  polymorphism $\cP$,
  \begin{align} \label{eq:influencelb1}
    \E_{p\in \cP} \E_{\mu \sim \Phi}\left[ \max_{i \in [R]} \Inf_{i,\mu}^{<d}(p)
    \right] \geq \tau
  \end{align}
  Define the family of functions $\cF^{\Phi,R}_{\tau,d}$ as follows,
  $$ \cF^{\Phi,R}_{\tau,d} \defeq \left\{ p:[q]^R \to [q] \suchthat \E_{\mu\sim
      \Phi} \left[  \max_i \Inf^{<d}_{i,\mu} (p) \right] \leq \tau\right\} \mper$$

  By \eqref{eq:influencelb1}, there does not exist $(c, s)$-approximate polymorphisms
  supported in $\cF^{\Phi,R}_{\tau,d}$.  By the connection between approximate
  polymorphisms and dictatorship tests outlined in
  \pref{thm:poly-to-dict-equivalence}, this implies that there exists $(c,s)$-dictatorship tests against the family
  $\cF_{\tau,d}^{\Phi,R}$ for each $R \in \N$.
  Let $\inst^{\Phi,R}_{\tau,d}$ denote the $(c,s)$-dictatorship test against the family $\cF_{\tau,d}^{\Phi,R}$.

  These dictatorship gadgets can be utilized towards carrying out the unique games
  based hardness reduction.  To this end, we will need to show
  a stronger soundness guarantee for the dictatorship test.
  Specifically, we will have to show a soundness guarantee against
  functions that output distributions over $[q]$ (points in $\ssimp_q$)
  instead of elements over $[q]$.  The details are described below.

  First, we will extend the objective function associated with an instance
  $\inst$ of \maxl from $[q]$-valued assignments to $\ssimp_q$-valued
  assignments.
  Given a fractional assignment $z : \cV \to \ssimp_q$, let $z^{\times}$ denote the product
  distribution over $[q]^{\cV}$ whose marginals given by $z$.  Define
  $\inst: \ssimp_q^\cV \to\R$ as follows,
  $$\inst(z) \defeq \E_{x \sim z^{\times}}[\inst(x)]$$
  Notice that the definitions of influences and low-degree influences
  extend naturally to $\ssimp_q$-valued functions.  In fact, even for
  $[q]$-valued functions these notions were defined by expressing them as
  $\ssimp_q$-valued functions.
  We will show that quasirandom fractional assignments have no larger
  objective value than quasirandom $[q]$-valued assignments.
  Specifically, we will show the following.
  \begin{lemma} \label{thm:rounding1}
    For every $\e >0 $, the following holds for all sufficiently large $R \in \N$.
    For every function $p: [q]^R \to\ssimp_q$ such that $\E_{\mu \sim \Phi}\left[ \max_i
      \Inf_{i,\mu}^{<d}(p) \right] \leq \tau/5$,
    $$ \inst^{\Phi,R}_{\tau,d}(p) \leq s + \e$$
  \end{lemma}
  \begin{proof}
    The idea is to create a rounded
    operation $r : [q]^R \to [q]$ by setting for each $x
    \in [q]^R$,
    $$r(x) = \text{ random sample from distribution } p(x)$$
    For notational convenience, we will drop the subscripts and
    superscripts and write $\inst$ for $\inst^{\Phi,R}_{\tau,d}$.
    It is easy to see that by definition,
    $$
    \E_{r}\left[\inst(r(X^{(1)},\ldots,X^{(R)}))\right]
    = \inst(p(X^{(1)},\ldots,X^{(R)}))$$
    The technical core of the argument shows that if $p$ is
    {\it quasirandom} then the rounded function $r$ is {\it quasirandom}
    with high probability.  This claim is formally stated in
    \pref{lem:quasirandom-concentration}.  By
    \pref{lem:quasirandom-concentration}, for all sufficiently large $R
    \in \N$,
    $$ \Pr_{r} \left\{  \E_{\mu \sim \Phi}[ \max_i \Inf_{i,\mu}^{<d}(r)] < \tau \right\} \geq 1-\e/R\tau $$
    Call a rounded function $r: [q]^R \to [q]$ to be {\it quasirandom}, if
    $\E_{\mu \sim \Phi}[\max_i \Inf_{i,\mu}^{<d}(r)]< \tau$.
    $$ \E_{r \sim p}[ \inst(r) |r \text{ is quasirandom }] \geq \E_{r \sim
      p}[\inst(r)] -\Pr_r[r \text{ is not quasirandom}] \geq \inst(p) - \e/R\tau$$
    For a {\it quasirandom} function $r:[q]^R \to [q]$, the dictatorship
    test $\inst$ satisfies $\inst(r) \leq s$.  This implies that
    $\inst(p) \leq s+ \e$.
  \end{proof}

  Now, we will outline the details of the unique games hardness
  reduction.  Given a unique games instance $\Gamma = (\cA,\cB, E,\Pi,[R])$,
  the reduction produces an instance $\inst_\Gamma$ of \maxl.  The
  variables of $\inst_{\Gamma}$ are $\cB \times [q]^R$.  The constraints of
  $\inst_\Gamma$ can be sampled using the following procedure.

  \begin{itemize}
  \item Sample $a \in \cA$ and neighbors $b_1,\ldots, b_k \in \cB$ independently
    at random.
  \item Sample a constraint $c$ from the dictatorship test
    $\inst^{\Phi,R}_{\tau,d}$.  Suppose the constraint is on{}
    $x^{(1)},x^{(2)},\ldots,x^{(k)} \in [q]^R$
  \item Introduce the constraint $C$ on $\{(b_i, \pi_{ab_i}\circ x^{(i)})\}_{i=1}^k$.
  \end{itemize}
  Given an assignment $L: \cB \times [q]^R \to [q]$ its value is given by,

  \begin{equation} \label{eq:uginstancevalue}
    \inst_\Gamma(L) = \E_{a \in \cA} \E_{b_1,\ldots, b_k \in N(a)} \E_{C \in
      \inst^{\Phi,R}_{\tau,d}} \left[  C(L(b_i,\pi_{ab_i}\circ x^{(i)})) \right]
  \end{equation}

  \paragraph{Completeness}  Suppose $\cL: \cA \cup \cB \to [R]$ is a labelling
  satisfying $(1-\delta)$-fraction of constraints.  Consider the labelling
  $L: \cB \times [q]^R \to [q]$ given by $L(b, x) = x_{\cL(b)}$.
  Over the choice of $a,b_1,\ldots,b_k$, with probability at least $(1-\delta k )$,
  the labelling $\cL$ satisfies each of the $k$ edges $\{(a,b_i) | i \in
  [k]\}$.  In this case, the objective value is at least the completeness $c$
  of the dictatorship test $\inst^{\Phi,R}_{\tau,d}$.  With the remaining
  probability, the objective value is at least $-1$.  This implies that
  the labelling $L$ has an objective value $\geq c-2k\delta$.

  \paragraph{Soundness}

  Suppose $L : \cB \times [q]^R \to [q]$ is a labelling with an objective value
  $s + \eta$.  For each $b \in \cB$, set $L_b : [q]^R  \to [q]$ as $L_b(x)
  \defeq e_{L(b,x)}$ where $e_{L(b,x)} \in \ssimp_q$ corresponds to a basis
  vector along direction $L(b,x)$.  For each $a \in \cA$, define the
  fractional assignment $L_a: [q]^R \to \ssimp_q$ as $L_a(x)
  \defeq \E_{b \in N(a)}[L_b(\pi_{ab}\circ x)]$.

  The objective value in \eqref{eq:uginstancevalue} can be written as,
  $$ \inst_{\Gamma}(L) = \E_{a \in \cA}\left[ \inst^{\Phi,R}_{\tau,d}(L_a)
  \right] \mper$$

  Decode an assignment $\ell: \cA \cup \cB \to [R]$ as follows.  Sample a
  distribution $\mu \sim \Phi$.  For each $a \in \cA$, define the label set
  $$ T_a = \{ i | \Inf_{i,\mu}^{<d}(L_a) \geq \tau/10\}$$
  and for each $b \in \cB$, define the label set
  $$ T_b = \{ j | \Inf_{j,\mu}^{<d}(L_b) \geq \tau/20\} \mper$$
  Since the sum of degree $d$ influences is at most $d$, we have $|T_a|
  \leq 10d/\tau$ and $|T_b| \le 20d/\tau$ for all $a \in \cA$ and $b \in \cB$.

  For each $a \in \cA$, assign $\ell(a)$ to be a random label from $T_a$ if it
  is non-empty,  else assign an arbitrary label from $[R]$.  Similarly,
  for each $b \in \cB$, set $\ell(b)$ to be a random label from $T_b$ if it
  is non-empty, else assign an arbitrary label from $[R]$.

  If the objective value of the assignment $L$ is more than $s + \eta$,
  then for at least $\eta/2$ fraction of $a \in \cA$, we have
  $\inst^{\Phi,R}_{\tau,d}(L_a) > s + \eta/2$.
  Call such a vertex $a \in \cA$ to be {\it good}.
  Fix a good vertex $a \in \cA$.
  For every {\it good} vertex, by the soundness of the dictatorship test we will have,
  $$ \E_{\mu \sim \Phi} [\max_{i} \Inf_{i,\mu}^{<d}(L_a)] \geq \tau \mper$$
  This implies that over the choice of $\mu \sim \Phi$, with probability at
  least $\tau/2$, we will have $\max_i \Inf_{i,\mu}^{<d}(L_a) \geq \tau/2$
  which implies that $L_a$ is non-empty.
  Suppose $i \in L_a$.  Using the fact that $L_a = \E_{ b \in N(a)} \pi_{ab} \circ L_{b}$ and convexity of
  influences we have,
  $$  \E_b [ \Inf_{\pi_{ab}(i),\mu}^{<d}(L_b)] \geq \tau/2 \mper$$
  This implies that for at least $\tau/4$-fraction of neighbors $b$, we
  will have $\Inf_{\pi_{ab}(i),\mu}(L_b) \geq \tau/4$, i.e., $\pi_{ab}(i)
  \in T_b$.  For every such neighbor $b$, the labelling $\ell$ satisfies the
  edge $(a,b)$ with probability at least $\frac{1}{|T_a|} \cdot \frac{1}{|T_b|} \geq \tau^2/200d^2$.

  Hence, the fraction of edges satisfied by the labelling $\ell$ is at
  least $(\eta/2) \cdot (\tau/2) \cdot (\tau/4) \cdot(\tau^2/200d^2)$ in expectation.  By fixing
  an alphabet size $R$ large enough, the soundness $\delta$
of the unique games
  instance can be made smaller than this fraction.

\end{proof}


\subsection{Quasirandomness under Randomized Rounding}

This section is devoted to showing the following lemma which was used
in the proof of \pref{thm:ugpolymorphism}.

\begin{lemma}\label{lem:quasirandom-concentration}
  For every $\tau,d, \e$, the following holds for all large enough $R \in \N$:
  Suppose $p:[q]^R \to \ssimp_q$ and a distribution $\Phi$ over $\ssimp_q$ is
  such that,
  $$ \E_{\mu \sim \Phi}[ \max_i \Inf_{i,\mu}^{<d}(p)] < \tau $$
  then if $r:[q]^R \to [q]$ is sampled by setting each $r(x) \in [q]$
  independently from the distribution $p(x)$,
  $$ \Pr_{r} \left\{  \E_{\mu \sim \Phi}[ \max_i \Inf_{i,\mu}^{<d}(r)] < 5\tau \right\} \geq 1-\e/R\tau $$
\end{lemma}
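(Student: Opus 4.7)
The plan is to write $r = p + N$ where $N(x) := r(x) - p(x) \in \R^q$ is a mean-zero noise vector (identifying each value $a\in[q]$ with the basis vector $e_a \in \ssimp_q$), and where the $N(x)$ are mutually independent across $x \in [q]^R$. Let $\Pi_{i,d}$ denote the orthogonal projection in $L^2([q]^R,\mu^R;\R^q)$ onto the span of those basis functions $\chi_\sigma$ with $|\sigma|<d$ and $\sigma_i\neq 0$, so that $\Inf_{i,\mu}^{<d}(r) = \|\Pi_{i,d} r\|^2$. The elementary inequality $(a+b)^2 \leq 2a^2 + 2b^2$ then gives
\[
\Inf_{i,\mu}^{<d}(r) \;\le\; 2\,\Inf_{i,\mu}^{<d}(p) + 2\,\Inf_{i,\mu}^{<d}(N).
\]
Taking $\max_i$ and then $\E_{\mu\sim\Phi}$ reduces the lemma to showing that $\E_\mu[\max_i \Inf_{i,\mu}^{<d}(N)] < 3\tau/2$ with probability at least $1-\eps/(R\tau)$ over the random rounding.

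The core computation is that the expected low-degree Fourier mass of $N$ is exponentially small in $R$ for ``spread'' distributions $\mu$. Using independence of the $N_j(x)$ across $x$, mean zero, and the pointwise identity $\sum_j p_j(x)(1-p_j(x)) \le 1$, a direct calculation gives
\[
\sum_j \E_r[\hat N_{j,\sigma}^{2}] \;=\; \sum_x \mu^R(x)^2\,\chi_\sigma(x)^2\,\sum_j p_j(x)(1-p_j(x)) \;\le\; \sum_x \mu^R(x)^2\,\chi_\sigma(x)^2.
\]
The tensor structure $\chi_\sigma(x) = \prod_l \chi_{\sigma_l}(x_l)$ makes the right-hand side factor through coordinates; after summing over $\sigma$ with $|\sigma|<d$ and $\sigma_i\neq 0$, using the bookkeeping $\sum_a \mu(a)^2\chi_b(a)^2 \leq \max_a \mu(a)$ for $b\neq 0$, one obtains a bound comparable to $\Pr[\mathrm{Bin}(R, 1-\|\mu\|_2^2) < d]$, which is exponentially small in $R$ whenever $R(1-\|\mu\|_2^2)$ dominates $d$.

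For those $\mu$ too close to a point mass for the binomial tail to help, I would fall back on the elementary estimate $\Var_\mu[f] \le 2(1-\max_a \mu(a))$ valid for any $f:[q]\to[0,1]$, which yields $\Inf_{i,\mu}^{<d}(r) \le \Inf_{i,\mu}(r) \le 2q(1-\max_a \mu(a))$ \emph{uniformly} in the random rounding. Splitting $\Phi$ at a threshold of the form $1-\max_a\mu(a) \le \tau/(8q)$ ensures that the near-point-mass portion contributes less than $\tau/4$ to $\E_\mu[\max_i \Inf_{i,\mu}^{<d}(N)]$ deterministically, while on the complementary ``spread'' regime the expected noise mass is $o_R(1)$.

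The remaining step is to promote the expected-noise bound to a high-probability bound. For this I would view $\Inf_{i,\mu}^{<d}(N)$ as a quadratic form in the independent random vectors $\{r(x)\}_{x\in[q]^R}$ and apply an Efron--Stein or Hanson--Wright style concentration inequality, then union-bound over the $R$ coordinates $i$ and integrate over $\mu \sim \Phi$. The main obstacle is engineering this concentration with small enough constants: Markov's inequality on the expected influence alone yields only constant failure probability, whereas the conclusion demands failure probability $O(\eps/(R\tau))$, so the tail estimate must be sharp enough to survive both the $R$-fold union bound over coordinates $i$ and the averaging over $\mu \sim \Phi$, which forces $R$ to be chosen large relative to $\tau, d, \eps$ but independent of $\Phi$ (the two-regime split above decouples the required $R$ from how concentrated $\mu$ might be).
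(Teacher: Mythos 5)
Your proposal is correct in outline and its overall architecture matches the paper's: both arguments split $\Phi$ into near-point-mass distributions $\mu$ (handled deterministically by a variance bound of the form $\Inf_{i,\mu}^{<d}(r) \leq O(q)\,(1-\max_a \mu(a))$, which is the paper's \prettyref{lem:boundonbias}) and ``spread'' distributions, prove a per-$(i,\mu)$ failure probability of order $\e/R^2$, union-bound over the $R$ coordinates, and finish with Markov's inequality applied to the $\Phi$-measure of bad $\mu$'s (bounding the bad $\mu$'s contribution by the trivial bound on influences). Where you genuinely diverge is the spread case: the paper applies Hoeffding to each Fourier coefficient $\hat r_\sigma$ individually, getting $\Pr[\,|\hat r_\sigma - \hat p_\sigma| > \beta\,] \leq 2\exp(-\beta^2/(\max_a\mu(a))^R)$, and union-bounds over the $(qR)^d$ low-degree multi-indices; you instead bound the \emph{expected total} low-degree Fourier mass of the noise $N = r - p$ by a second-moment computation, obtaining something like $(qR)^d(1-\tau/8q)^{R-d}$. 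Both are valid; yours avoids the union bound over $\sigma$ at the cost of only controlling an expectation. Your one misstep is the final paragraph: you claim Markov ``yields only constant failure probability'' and reach for Hanson--Wright, but your own computation shows $\E_r[\Inf_{i,\mu}^{<d}(N)]$ is \emph{exponentially small in $R$} uniformly over spread $\mu$, so Markov alone gives $\Pr[\Inf_{i,\mu}^{<d}(N) > \tau/8] \leq (8/\tau)(qR)^d(1-\tau/8q)^{R-d} \ll \e/R^2$ for large $R$ --- exactly the per-coordinate failure probability needed to survive the union bound over $i$ and the subsequent Markov over $\mu \sim \Phi$. The ``main obstacle'' you identify is therefore already dispatched by your second-moment bound, and no quadratic-form concentration is needed.
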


To this end, we first show a few simple facts concerning concentration
under random sampling.

\paragraph{Random Sampling}

\begin{lemma} \label{lem:concentration}
  Given $p: [q]^R \to \ssimp_q$ let $r :[q]^R \to [q]$ denote be a function
  sampled by setting $r(x) \in [q]$ from the distribution $p(x) \in
  \ssimp_q$.  Fix a probability distribution $\mu \in \ssimp_q$ and a
  corresponding orthonormal basis $\{ \chi_\sigma \}_{\sigma \in [q]^R}$ for $L^2([q]^R,\mu^R)$.  For every
  multi-index $\sigma \in [q]^R$,
  $$ \E_R [\hat{r}_\sigma] = \hat{p}_\sigma \mcom$$
  and
  $$\Pr_r[|\hat{r}_\sigma - \hat{p}_\sigma| > \beta] \leq 2 \exp\left(-\frac{\beta^2}{\left(\max_{i \in
          [q]} \mu(i)\right)^R} \right) \mper$$
\end{lemma}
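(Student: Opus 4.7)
The plan is to write $\hat{r}_\sigma$ explicitly as a sum of independent, bounded random variables indexed by $x \in [q]^R$, then apply a standard Hoeffding bound. For any fixed coordinate $j \in [q]$ (since $p, r$ take values in $\ssimp_q \subseteq \R^q$, the Fourier coefficients are vectors, and we argue componentwise), expand
\[
\hat{r}_{\sigma,j} = \E_{x \sim \mu^R}\bigl[r_j(x)\,\chi_\sigma(x)\bigr] = \sum_{x \in [q]^R} \mu^R(x)\,r_j(x)\,\chi_\sigma(x).
\]
Since $r(x)$ is drawn independently for each $x$ from $p(x)$, the random variables $Y_x \defeq \mu^R(x)\,r_j(x)\,\chi_\sigma(x)$ are mutually independent with means $\mu^R(x)\,p_j(x)\,\chi_\sigma(x)$.

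For the first claim, linearity of expectation immediately gives
\[
\E_r[\hat{r}_{\sigma,j}] = \sum_x \mu^R(x)\,\E_r[r_j(x)]\,\chi_\sigma(x) = \sum_x \mu^R(x)\,p_j(x)\,\chi_\sigma(x) = \hat{p}_{\sigma,j}.
\]
For the concentration bound, note that $|r_j(x) - p_j(x)| \leq 1$ since both lie in $[0,1]$, so the centered variable $Y_x - \E Y_x$ is bounded in absolute value by $M_x \defeq \mu^R(x)\,|\chi_\sigma(x)|$.

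The key estimate is that $\sum_x M_x^2$ is controlled by the orthonormality of the basis:
\[
\sum_{x} M_x^2 = \sum_x \mu^R(x)^2\,\chi_\sigma(x)^2 \leq \bigl(\max_{x} \mu^R(x)\bigr)\,\sum_x \mu^R(x)\,\chi_\sigma(x)^2 = \bigl(\max_{i \in [q]} \mu(i)\bigr)^R,
\]
where the final equality uses $\E_{x \sim \mu^R}[\chi_\sigma(x)^2] = 1$. Applying Hoeffding's inequality to the sum $\hat{r}_{\sigma,j} - \hat{p}_{\sigma,j} = \sum_x (Y_x - \E Y_x)$ then yields
\[
\Pr_r\bigl[|\hat{r}_{\sigma,j} - \hat{p}_{\sigma,j}| > \beta\bigr] \leq 2\exp\!\left(-\frac{2\beta^2}{\sum_x M_x^2}\right) \leq 2\exp\!\left(-\frac{\beta^2}{(\max_i \mu(i))^R}\right),
\]
which is the stated bound (in fact with a spare factor of two in the exponent). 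There is essentially no obstacle here; the only subtle point is recognizing that the orthonormality of $\chi_\sigma$ under $\mu^R$ produces exactly the right weighted-$\ell^2$ bound, which lines up precisely with the denominator $(\max_i \mu(i))^R$ in the claimed tail bound.
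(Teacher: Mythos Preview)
Your proof is correct and follows essentially the same route as the paper: write $\hat{r}_\sigma$ as a $\mu^R$-weighted sum of independent bounded variables, apply Hoeffding, and bound the sum of squared ranges via $\sum_x \mu^R(x)^2\chi_\sigma(x)^2 \leq (\max_x \mu^R(x))\cdot\E_{\mu^R}[\chi_\sigma^2] = (\max_i \mu(i))^R$. The only cosmetic difference is that you make the componentwise nature of the argument explicit and note the spare factor of two from the standard Hoeffding exponent.
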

\begin{proof}
  Fix a multi-index $\sigma \in [q]^R$.  Let $\chi_\sigma : [q]^R \to \R$ denote the
  corresponding basis function for $L^2([q]^R, \mu^R)$.
  Let $\mu^R(x)$ denote the probability of $x \in [q]^R$ under the product
  distribution $\mu^R$.
  By definition,
  $$\hat{r}_\sigma = \sum_{x \in [q]^n}\mu(x)\chi_\sigma(x)r(x) $$
  where $\{r(x)\}_{x \in [q]^R}$ are independent random variables whose
  expectation is given by $\E r(x) = p(x)$.  For every $x$, the random
  vector $\mu(x) \chi_\sigma(x)r(x)$ has entries bounded in
  $[0,\mu(x)\chi_\sigma(x)]$.  Now we appeal to Hoeffding's inequality stated below.
\begin{lemma}\label{lem:Hoeffding}(Hoeffding's inequality)
  Given independent real valued random variables $X_1,\ldots,X_n$ such that
  $X_i \in [b_i, a_i]$, we have
  $$\Pr\left[ \abs{\sum_i X_i - \E[\sum_i X_i]} \geq t \right] \leq 2 \exp\left( -\frac{t^2}{\sum_i (b_i-a_i)^2} \right) $$
\end{lemma}
By Hoeffding's inequality, we will have
  $$\Pr_r[|\hat{r}_\sigma - \hat{p}_\sigma| > \beta] \leq 2 \exp\left(-\frac{\beta^2}{\sum_x \mu(x)^2\chi_\sigma(x)^2}\right) \mper$$
  The proof is complete once we observe that $\sum_x \mu(x)^2 \chi_\sigma(x)^2 \leq \max_{x
  } \mu(x) \cdot \sum_{x} \mu(x)\chi_\sigma^2(x) = \max_x \mu(x) \le
  (\max_i \mu(i))^R$.

\end{proof}

\begin{lemma} \label{lem:boundonbias}
  Given a function $r:[q]^R \to \R^d$ and a probability distribution $\mu
  \in \ssimp_q$, for all $i \in [R]$ and $D \in \N$,
  $$\Inf^{<D}_{i,\mu}(r) \leq 8 \left( 1-\max_{i \in [q]} \mu(i) \right) \cdot
  \max_x \norm{r(x)}^2$$
\end{lemma}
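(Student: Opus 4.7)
The plan is to reduce the claim about low-degree influence to a claim about ordinary (total) influence, and then to exploit the fact that when $\mu$ is nearly a point mass at its mode, any function has small variance under $\mu$. First I would observe the trivial inequality
\[
\Inf^{<D}_{i,\mu}(r) \;=\; \sum_{\substack{|\sigma|<D\\ \sigma_i\neq 0}} \hat{r}_{\sigma}^{2}
\;\leq\; \sum_{\sigma_i\neq 0} \hat{r}_{\sigma}^{2}
\;=\; \Inf_{i,\mu}(r)\mcom
\]
since the low-degree influence is a sub-sum of the Fourier coefficients defining the full influence. Thus it suffices to prove the bound for the total influence. Next, I would use the standard identity $\Inf_{i,\mu}(r) = \E_{x_{-i}}\!\left[\Var_{x_i\sim\mu}[r(x_{-i},x_i)]\right]$, where for vector-valued $r$ the variance is $\E\snorm{r-\E r}=\sum_{j\in[d]}\Var[r_j]$. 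Then the statement reduces to the pointwise claim: for every fixed $x_{-i}$,
\[
\Var_{x_i\sim\mu}[r(x_{-i},x_i)] \;\leq\; 8\left(1-\max_{j\in[q]}\mu(j)\right)\max_x\snorm{r(x)}\mper
\]

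The key observation, and really the only step requiring thought, is to choose the right comparison point in the variance characterization $\Var[X]=\min_c \E\snorm{X-c}$. Let $i^*\in\argmax_{j\in[q]}\mu(j)$ with $p^*=\mu(i^*)$. Setting $c := r(x_{-i},i^*)$ makes the $j=i^*$ term of the sum vanish, so that only the mass $1-p^*$ contributes:
\[
\Var_{x_i\sim\mu}[r(x_{-i},x_i)] \;\leq\; \sum_{j\neq i^*}\mu(j)\,\snorm{r(x_{-i},j)-r(x_{-i},i^*)}\mper
\]
Bounding each difference by $\snorm{r(x_{-i},j)-r(x_{-i},i^*)} \leq 2\snorm{r(x_{-i},j)}+2\snorm{r(x_{-i},i^*)}\leq 4\max_x\snorm{r(x)}$ and using $\sum_{j\neq i^*}\mu(j)=1-p^*$ yields the desired bound, with constant at most $8$ (or even $4$ with the tighter triangle inequality $\norm{a-b}\leq 2\max\norm{r}$). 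The only real ``obstacle'' is recognizing that centering at the mode $i^*$ is what makes the bias factor $(1-p^*)$ appear; every other step is routine.
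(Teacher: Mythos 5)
Your proposal is correct and follows essentially the same route as the paper: reduce low-degree influence to total influence, write $\Inf_{i,\mu}(r)$ as an expected conditional variance, and bound that variance pointwise using the concentration of $\mu$ at its mode. The only (minor) difference is in the last step — the paper uses the two-independent-copies identity $\Var[X]\leq \E_{z,z'}\snorm{X(z)-X(z')}$ together with $\Pr[z\neq z']\leq 2(1-\max_j\mu(j))$, while you center at the mode via $\Var[X]=\min_c\E\snorm{X-c}$; your version even yields the slightly better constant $4$ in place of $8$.
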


\begin{proof}
  First, note that for each $D \in \N$ we have $\Inf_{i,\mu}^{<D}(r) \leq
  \Inf_{i,\mu}(r)$.  Moreover, the influence of the $i^{th}$ coordinate
  can be written as,
  \begin{align} \label{eq:infvariance}
    \Inf_{i,\mu}(r) = \E_{x_{[n]\backslash i} \sim \mu^{R-1}} \left[
    \Var_{x_i \sim \mu} r(x) \right] \mper
  \end{align}
  For each fixing of $x_{[n]\backslash i} \in [q]^{R-1}$, we will bound the
  variance of $r(x)$ over the choice of $x_i$ as follows,
  \begin{align*}
    \Var_{x_i \sim \mu}[r(x)] & = \E_{z,z' \sim \mu}\left[ \norm{r(x_{[n]\backslash i},z) -
                          r(x_{[n]\backslash i},z')}^2 \right] \\
                        & \leq 4 \max_{x \in [q]^R} \norm{r(x)}^2 \cdot \Pr[z \neq z'] \\
                        & \leq 8 \left( \max_{x \in [q]^R} \norm{r(x)}^2\right) \cdot \left( 1 - \max_i \mu(i)\right) \mper
  \end{align*}
  The result follows by using the above bound in \eqref{eq:infvariance}.
\end{proof}

\begin{lemma} \label{lem:influenceconcentration}
  For every $\tau,d,\e$, the following holds for all large enough $R \in \N$:
  Given $p: [q]^R \to \ssimp_q$ let $r :[q]^R \to [q]$ denote a function
  sampled by setting $r(x) \in [q]$ from the distribution $p(x) \in
  \ssimp_q$.  For every distribution $\mu \in \ssimp_q$ and $i \in [R]$,
  $$ \Pr[ \Inf_{i,\mu}^{<d}(r) > 2\Inf_{i,\mu}^{<d}(p) + \tau] < \nicefrac{\e}{R^2}$$
\end{lemma}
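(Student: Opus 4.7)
The plan is to split into two cases based on the mass $m = \max_{i \in [q]} \mu(i)$ of the heaviest atom of $\mu$: when $m$ is very close to $1$ every coordinate influence is automatically tiny, whereas when $m$ is bounded away from $1$ the quantity $m^R$ decays exponentially in $R$, which powers the Hoeffding-type concentration of \pref{lem:concentration}.

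In the first regime $1 - m \leq \tau/16$, the plan is to note that $r$ takes values at the vertices of $\ssimp_q$, so $\max_x \|r(x)\|^2 = 1$, and then apply \pref{lem:boundonbias} to conclude $\Inf_{i,\mu}^{<d}(r) \leq 8(1-m) \leq \tau/2$ deterministically; the claimed inequality then holds with probability one.

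In the second regime $1 - m > \tau/16$, the plan is to expand $\Inf_{i,\mu}^{<d}(r) = \sum_{j\in[q]}\sum_{\sigma} \hat r_{j,\sigma}^2$, where the inner sum runs over multi-indices $\sigma$ with $\sigma_i \neq 0$ and $|\sigma|<d$, and to let $N$ denote the number of such $(j,\sigma)$ pairs; note that $N$ is a polynomial of degree $d-1$ in $R$ for fixed $q,d$. For each pair and a threshold $\beta$ to be chosen, \pref{lem:concentration} will bound $\Pr[\,|\hat r_{j,\sigma} - \hat p_{j,\sigma}| > \beta\,] \leq 2\exp(-\beta^2/m^R)$; a union bound then controls the total failure probability by $2N\exp(-\beta^2/m^R)$. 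On the complement event, applying $a^2 \leq 2b^2 + 2(a-b)^2$ coefficient-wise and summing will give
\[
\Inf_{i,\mu}^{<d}(r) \leq 2\Inf_{i,\mu}^{<d}(p) + 2N\beta^2,
\]
so choosing $\beta^2 = \tau/(2N)$ delivers exactly the desired bound.

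The remaining step, which is also the most delicate point, is calibrating the parameters so that $2N\exp(-\beta^2/m^R) \leq \epsilon/R^2$ for all sufficiently large $R$. Since $m \leq 1 - \tau/16$ implies $m^R \leq \exp(-\tau R/16)$, one obtains $\beta^2/m^R \geq (\tau/(2N))\exp(\tau R/16)$, which grows exponentially in $R$ while $N$ grows only polynomially; hence the failure probability can be driven below $\epsilon/R^2$ by taking $R$ large as a function of $\tau, d, q, \epsilon$. The case split at $m \approx 1$ is essential, because without it $m^R$ need not decay and the union-bound overhead from the $N$ Fourier coefficients would dominate.
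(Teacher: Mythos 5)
Your proposal is correct and follows essentially the same route as the paper's proof: the same case split on $\max_j \mu(j)$ near $1$ (handled via \pref{lem:boundonbias}), and in the main case the same combination of \pref{lem:concentration}, a union bound over the low-degree multi-indices, and the inequality $(a+b)^2 \le 2a^2+2b^2$. Your parameter choice $\beta^2 = \tau/(2N)$ differs cosmetically from the paper's $\beta = (qR)^{-d}$, and you are in fact slightly more careful in keeping the case-split threshold consistent at $\tau/16$.
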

\begin{proof}
  By \pref{lem:boundonbias}, if $\max_{j \in [q]} \mu(j) > 1-
  \tau/16$ then we will have $\Inf_{i,\mu}^{<d}(r) < \tau$ for all
  functions $r:[q]^R\to[q]$.  Hence the statement trivially holds if
  $\max_{j \in [q]} \mu(j) > 1-\tau/16$.

  Suppose $\max_{j\in [q]}\mu(j) \leq 1-\tau/8$.  In this case, for every
  multi-index $\sigma \in [q]^R$, \pref{lem:concentration} implies that
  $$ \Pr[ |\hat{r}_{\sigma} - \hat{p}_{\sigma}| \geq \beta]  \leq 2 \exp\left(
    - \frac{\beta^2}{(1-\tau/8)^R}\right) \mper$$
  By a union bound over all $\sigma$ with $|\sigma| < d$, we will have
  $$ \Pr[ |\hat{r}_\sigma - \hat{p}_{\sigma}| \geq \beta \quad \forall |\sigma| <
  d] \leq 2(qR)^d  \exp\left(- \frac{\beta^2}{(1-\tau/8)^R}\right) \mper$$
  Fix $\beta = (qR)^{-d}$.  Fix $R$ large enough so that $1/(qR)^d < \tau$ and
  the above probability bound is smaller than $\e/R^2$.
  If $|\hat{r}_\sigma - \hat{p}_\sigma| \leq \beta$ for all $\sigma
  $ with $|\sigma| < d$, then
  $$ \Inf_{i,\mu}(r) \leq  \sum_{\sigma_i \neq 0, |\sigma| <d}
  (\hat{p}_{\sigma} + \beta)^2  \leq  \sum_{\sigma_i \neq 0, |\sigma| < d}
  2\hat{p}^2_\sigma + 2\beta^2 \leq 2\Inf_{i,\mu}^{<d}(p) + 2(qR)^{-2d}(qR)^d \leq 2
  \Inf_{i,\mu}^{<d}(p) + \tau $$

\end{proof}

Now we are ready to prove \pref{lem:quasirandom-concentration}.
\begin{proof} (Proof of \pref{lem:quasirandom-concentration})
  Fix a distribution $\mu \sim \Phi$.  Call a probability distribution $\mu \sim \Phi$ to be {\it bad} if
  $$ \max_i \Inf_{i,\mu}^{<d}(r) > 2 \max_i \Inf_{i,\mu}^{<d}(p) +
  \tau $$
  For each distribution $\mu \in \ssimp_q$, by \pref{lem:influenceconcentration} and a union bound over $i \in [R]$
  $$ \Pr_r[\mu \text{ is bad}]  \leq \e/R$$
  By Markov's bound,
  \begin{equation} \label{eq:inf201}
    \Pr_r\left\{ \E_{\mu \sim \Phi}\Ind[\mu \text{ is bad}] > \tau \right\} < \e/R\tau
  \end{equation}
  Let us suppose $ \E_{\mu \sim \Phi}[\Ind[\mu \text{ is bad}]] < \tau$.
  In this case, we conclude that,
  \begin{equation}\label{eq:inf202}
    \E_{\mu \sim \Phi}[\max_i \Inf_{i,\mu}^{<d}(r)] \leq \left( 2 \E_{\mu \sim
        \Phi}[\max_i \Inf_{i,\mu}^{<d}(p)] + \tau \right)+ \tau \cdot 2
  \end{equation}
  where we used the fact that for every distribution $\mu \in \ssimp_q$ and every function $r:
  [q]^R \to [q]$ we have $\max_i \Inf_{i,\mu}^{<d}(r) < 2$.  The claim
  follows from \eqref{eq:inf201} and \eqref{eq:inf202}. \end{proof}

\bibliographystyle{amsalpha}
\bibliography{bib/papers}

\appendix

\end{document}